\newcommand{\scterm}[1]{{\textrm{\sc #1}}}
\newcommand{\hit}{\scterm{Hitting}}
\newcommand{\tres}{\scterm{tl-Res}}
\newcommand{\res}{\scterm{Res}}
\newcommand{\resp}{{\res}$(\oplus)$}
\newcommand{\tresp}{{\tres}$(\oplus)$}
\newcommand{\rlin}{{\res}\scterm{(Lin)}}
\newcommand{\F}{\scterm{Frege}}
\newcommand{\EF}{\scterm{Extended {\F}}}
\newcommand{\extres}{\scterm{Extended {\res}}}
\newcommand{\extpc}{\scterm{Ext-PC}}
\newcommand{\fpc}{\scterm{$\mathcal{F}$-PC}}
\newcommand{\cp}{\scterm{CP}}
\newcommand{\pc}{\scterm{PC}}
\newcommand{\pcr}{\scterm{PCR}}
\newcommand{\ns}{\scterm{NS}}
\newcommand{\nsr}{\scterm{NSR}}
\newcommand{\ls}{\scterm{LS}}
\newcommand{\sa}{\scterm{SA}}
\newcommand{\sos}{\scterm{SoS}}
\newcommand{\scs}{\scterm{SCS}}
\newcommand{\hitres}{{\hit} {\res}}
\newcommand{\hitp}{{\hit}$(\oplus)$}
\newcommand{\hitodd}{\scterm{Odd} {\hit}}
\newcommand{\hitk}[1]{\hit$[#1]$}
\newcommand{\rect}{\mathsf{rect}}
\newcommand{\rectone}[1]{\rect(#1,#1)}
\newcommand{\recttwo}[2]{\rect(#1,#2)}
\newcommand{\prt}{\mathsf{prt}}
\newcommand{\supp}{\mathrm{supp}}
\newcommand{\disj}{\scterm{Disj}}
\newcommand{\PM}{\scterm{PM}}
\newcommand{\forxorPM}{\textsc{PM}^\oplus}
\newcommand{\xorPM}{\textsc{Search-PM}^\oplus}
\newcommand{\Alice}{\mathcal{A}}
\newcommand{\Bob}{\mathcal{B}}
\newcommand{\AND}{{\scterm{AND}}}
\newcommand{\OR}{{\scterm{OR}}}
\newcommand{\indexing}{\scterm{Indexing}}
\newcommand{\fcs}{f_{\mathrm{CS}}}
\newcommand{\dagw}{\mathsf{w}}
\newcommand{\leafw}{\dagw_{\mathrm{out}}}
\newcommand{\cert}{\mathsf{C}}
\newcommand{\ucert}{\mathsf{UC}}
\newcommand{\D}{\mathsf{D}}
\newcommand{\R}{\mathsf{R}}
\newcommand{\uw}{\mathsf{uw}}
\DeclareMathOperator{\vars}{vars}
\newcommand{\olnot}[1]{\overline{#1}}
\newcommand{\set}[1]{\{#1\}}
\newcommand{\union}{\cup}
\newcommand{\restrict}[2]{{#1}{\upharpoonright_{#2}}}
\newcommand{\GF}{\mathrm{GF}}
\newcommand{\Search}{\mathrm{Search}}
\newcommand{\E}{\mathbf{E}}
\newcommand{\TODOJ}[1]{}
\newcommand{\exarrow}[1]{%
\raisebox{-.5ex}{\scalebox{0.8}{%
\protect\tikz[every node/.append style={rectangle, draw=black, >=stealth }, shorten >=1mm, shorten <=1mm]{
\protect\draw (0,0) node (A) {\small A};
\protect\draw (1.5,0) node (B) {\small B};
\protect\draw[#1, line width=0.5mm] (A) to (B);
}}%
}} 
\numberwithin{equation}{section}
\newtheorem{theorem}{Theorem}[section]
\newtheorem{lemma}[theorem]{Lemma}
\newtheorem{proposition}[theorem]{Proposition}
\newtheorem{corollary}[theorem]{Corollary}
\newtheorem{remark}[theorem]{Remark}
\newtheorem{definition}[theorem]{Definition}
\newtheorem{claim}[theorem]{Claim}
\author[1]{Yuval Filmus}
\author[2]{Edward A. Hirsch}
\author[3]{Artur Riazanov}
\author[4]{Alexander Smal}
\author[5]{Marc Vinyals}
\affil[1]{Technion --- Israel Institute of Technology, Israel, yuvalfi@cs.technion.ac.il}
\affil[2]{Ariel University, Israel,
edwardh@ariel.ac.il}
\affil[3]{EPFL, Switzerland, tunyash@gmail.com}
\affil[4]{Technion --- Israel Institute of Technology, Israel, avsmal@gmail.com}
\affil[5]{University of Auckland, New Zealand, marc.vinyals@auckland.ac.nz}
\title{Proving Unsatisfiability with Hitting Formulas}
\begin{document}
\maketitle
\begin{abstract}




A hitting formula is a set of Boolean clauses such that any two of the clauses cannot be simultaneously falsified. Hitting formulas have been studied in many different contexts at least since \cite{Iwa89} and, based on experimental evidence, Peitl and Szeider \cite{PS22} conjectured that unsatisfiable hitting formulas are among the hardest for resolution.
Using the fact that hitting formulas are easy to check for satisfiability we make them the foundation of a new static proof system {\hit}: a refutation of a CNF in {\hit} is an unsatisfiable hitting formula such that each of its clauses is a weakening of a clause of the refuted CNF. Comparing this system to resolution and other proof systems is equivalent to studying the hardness of hitting formulas.

Our first result is that {\hit} is quasi-polynomially simulated by tree-like resolution, which means that hitting formulas cannot be exponentially hard for resolution and partially refutes the conjecture of Peitl and Szeider. We show that tree-like resolution and {\hit} are quasi-polynomially separated, while for resolution, this question remains open.
For a system that is only quasi-polynomially stronger than tree-like resolution, {\hit} is surprisingly difficult to \emph{polynomially} simulate in another proof system. Using the ideas of Raz--Shpilka's polynomial identity testing for noncommutative circuits \cite{RS05} we show that {\hit} is p-simulated by {\EF}, but we conjecture that much more efficient simulations exist. As a byproduct, we show that a number
 of static (semi)algebraic systems are verifiable in deterministic polynomial time.

We consider multiple extensions of {\hit}, and in particular a proof system {\hitp} related to the {\resp} proof system for which no superpolynomial-size lower bounds are known. {\hitp} p-simulates the tree-like version of {\resp} and is at least quasi-polynomially stronger. We show that formulas expressing the non-existence of perfect matchings in the graphs $K_{n,n+2}$ are exponentially hard for {\hitp} via a reduction to the partition bound for communication complexity.

\end{abstract}
\pagebreak
\tableofcontents
\section{Introduction}
Propositional proof complexity is a well-established area with a number
of mathematically rich results. A propositional proof system \cite{CR79}
is formally a deterministic polynomial-time algorithm that verifies
candidate proofs of unsatisfiability of propositional formulas in conjunctive normal form.
The existence of a proof system that has such polynomial-size refutations for
all unsatisfiable formulas is equivalent to $\mathbf{NP}=\mathbf{co\textbf{-}NP}$,
and (dis)proving it is out of reach of the currently available methods.
Towards this goal, 
Cook and Reckhow's paper \cite{CR79}
started a program to develop new stronger proof systems 
that have short proofs for tautologies 
that are hard for known proof systems
and to prove superpolynomial lower bounds for these new systems.
The idea is that obtaining new results where our previous techniques fail helps in developing new techniques.

One of the oldest propositional proof systems is the propositional version of resolution ({\res}) \cite{Bla37, DP60}
that operates on Boolean clauses (disjunctions of literals treated as sets)
and has only a single rule that allows introducing resolvents
$\frac{\ell_1\lor\dotsb\lor\ell_k\lor x\;\quad \ell_1'\lor\dotsb\lor\ell_m'\lor\olnot{x}}{\ell_1\lor\dotsb\lor\ell_k\lor \ell_1'\lor\dotsb\lor\ell_m'}$.
Superpolynomial lower bounds on the size of a particular case of resolution proofs
are known since \cite{Tse68}, while exponential lower bounds on general {\res} proof size were proven by Haken \cite{Hak85} and Urquhart \cite{Urq87}
for the pigeonhole principle and handshaking lemma, respectively.
Furthermore {\res} encompasses CDCL algorithms for SAT \cite{BKS04,PD11}, that are the most successful SAT-solving algorithms to date.

Motivated by the quest of finding hard examples for modern SAT-solvers, Peitl and Szeider \cite{PS22} experimentally investigated the hardness of \emph{hitting formulas} for resolution. A hitting formula as a mathematical object has been studied under a number of names and in various contexts (a polynomial-time solvable SAT subclass, partitions of the Boolean cube viewed combinatorially, etc.) \cite{Iwa89, DavDav98, Kul04,Kul07,GwynneKullman13, Gwynne,KullmannZhao13,PS22,ourcombpaper}.
A formula $H = \bigwedge_i H_i$ in CNF with clauses $H_i$ is a hitting formula if every pair of clauses cannot be falsified simultaneously (that is, there is a variable that appears in the two clauses with different signs). Equivalently, the sets $S_i$ of truth assignments falsifying clauses $H_i$ are disjoint, thus in an unsatisfiable hitting formula every assignment in $\{0,1\}^n$ is covered exactly once by $S_i$'s. Peitl and Szeider conjectured that hitting formulas might be among the hardest formulas for resolution. Their conjecture was supported by experimental results for formulas with a small number of variables.

One of the reasons why hitting formulas received an abundance of attention is that they are one of the classes of CNFs that are polynomial-time tractable for satisfiability checking (along with e.g. Horn formulas and 2-CNFs) \cite{Iwa89}. First, it is straightforward to check whether a CNF formula is hitting: simply enumerate all pairs of clauses and check that they contain some variable with opposite signs. Then the number of satisfying assignments of a hitting formula is $2^n - \sum_i 2^{n-|H_i|}$, where $|H_i|$ is the number of literals in $H_i$ and $n$ is the number of variables in $H$.

This nice property allows us to think about hitting not only as a class of formulas but as an algorithm to determine satisfiability. For the algorithm to apply to any kind of formulas we need to introduce nondeterminism, and this is best modelled with a proof system. Thus we define a new static proof system based on unsatisfiable hitting formulas.
A refutation of an arbitrary CNF $F$ in the {\hit} proof system is an unsatisfiable hitting formula such that each of its clauses is a weakening of a clause in $F$ (i.e. a clause of $F$ with extra literals).

By thinking of hitting as a proof system we reinterpret the conjecture of Peitl and Szeider as the following question: is it possible to efficiently formalize the model counting argument above within the {\res} proof system?
Then the question of the hardness of hitting formulas for {\res} can be phrased in terms of the relative strength of {\res} and {\hit}: can {\hit} be separated from {\res}? That is, can we find formulas that are easy to refute in {\hit} and hard to refute in {\res}? More in general, by relating {\hit} to other proof systems we can pinpoint both the hardness of hitting formulas and the ability to formalize Iwama's counting argument in those proof systems.

It turns out that {\hit} is tightly connected to the tree-like version of {\res} ({\tres)}, which is exponentially weaker than {\res} \cite{BIW04}. It encompasses all DPLL algorithms \cite{DP60,DLL62},
which form the base of multiple (exponential-time) upper bounds for SAT (see, e.g., \cite{DH21} for a survey). A DPLL algorithm splits the input problem $F$ into subproblems $F|_{x=0}$ and $F|_{x=1}$
for some variable $x$ and applies easy simplification rules.

More precisely, {\tres} quasi-polynomially simulates {\hit} (Theorem~\ref{th:tres-qsim-hit}), and the simulation cannot be improved to a polynomial one (Theorem~\ref{th:tres-l-b}). This partially answers the question ``How hard can hitting formulas be for resolution?'' raised in \cite{PS22} in the following way. Not only every hitting formula has proofs of quasi-polynomial size, their unsatisfiability can be decided in quasi-polynomial time by a DPLL algorithm.
 The simulation also entails that every exponential-size lower bound we already have for {\tres} holds for {\hit}, which in particular allows for a separation of {\res} from {\hit}.


Even though the very weak proof system {\tres} is enough to quasi-polynomially simulate {\hit}, it is surprisingly difficult to push the upper bound all the way to a polynomial: even though we compare {\hit} to a number of known proof systems with different strengths with the hope of obtaining a polynomial simulation, the only system where we can polynomially simulate {\hit}
is the very powerful {\EF} (Corollary~\ref{cor:ef-sim-hit}).
As a byproduct of this result, we prove also that
various static (semi)algebraic proof systems (Nullstellensatz, Sherali--Adams, Lov{\'{a}}sz--Schrijver, Sum-of-Squares)
are indeed Cook--Reckhow (deterministically polynomial-time verifiable) proof systems even when we measure the proof size in a succinct way, ignoring the part enforcing Boolean variables. Such distinction can be safely ignored in lower bound results, but in principle ought to be accounted for when constructing upper bounds.
Efficient deterministic formal proof verification becomes more and more important because
of the increasing interest in algorithms based on sum-of-squares~\cite{BS14,FKP19}.

In more detail, we study the relation between various versions of {\hit} and known proof systems such as:
\begin{itemize}
\item {\resp}, defined in \cite{IS20} by analogy with
the system {\rlin} of \cite{RT08} in the same vein as
Kraj\'{\i}\v{c}ek's $\textsc{R}(\ldots)$ systems \cite{Kra98}.
No superpolynomial-size lower bound is known for it,
however, \cite{IS20} proves an exponential bound for its tree-like version.
{\resp} extends {\res} by
allowing clauses to contain affine equations modulo two
instead of just literals, and this is the weakest known bounded-depth Frege system with parity gates
where we do not know a superpolynomial-size lower bound.

We prove two separations showing that {\hit} is incomparable with {\tresp} (Sect.~\ref{sec:linear-hitting-vs-tree-like-resolution},
the separation is quasi-polynomial in one direction
and exponential in the other direction).

\item Nullstellensatz ({\ns}), defined in \cite{BeameEtAl96} (where also an exponential-size lower bound
was proved),
along with its version {\nsr} \cite{dRLNS21} that uses dual variables
($\olnot{x}=1-x$ introduced in \cite{ABSRW02}
for {\pc} \cite{CEI96}, which is a ``dynamic'' version of Hilbert's Nullstellensatz
that allows generating elements of the ideal step-by-step).
An exponential-size lower bound for {\nsr} follows from \cite{BOPCI00}
(see Corollary~\ref{cor:pebbling-ns}).
\item Cutting Planes ({\cp}), defined in \cite{CCT87}, uses linear inequalities
as its proof lines and has two rules: the rule introducing nonnegative linear combinations
and the integer rounding rule ($\frac{\sum c_ix_i\ge c}{\sum c_ix_i\ge \lceil c\rceil^{\vphantom{W}}}$ for integer $c_i$'s).
\item {\F}, defined in \cite{Rec76,CR79}, can be thought of as any
implicationally complete ``textbook'' derivation system for propositional logic.
Proving superpolynomial lower bounds for it is a long-standing open problem 
that seems out of reach at the moment.
\item Systems augmented by Tseitin's extension rule and its analogues, such as {\EF}.
This rule allows the introduction of new variables denoting some functions
of already introduced variables.
\end{itemize}

Given that known proof systems do not obviously polynomially simulate {\hit}, this leaves us with the following question: does augmenting SAT algorithms with the ability to reason about hitting  formulas lead to any improvements? Or its counterpart about proof systems, how powerful are proof systems resulting from combining known proof systems with {\hit}?

Recall that a DPLL algorithm splits the input problem $F$ into subproblems $F|_{x=0}$ and $F|_{x=1}$.
Algorithms that give upper bounds for SAT use more general splittings; in fact
one can split over any tautology, that is,
consider subproblems $F\land G_1$, \ldots , $F\land G_m$,
where $G_1\lor\dotsb\lor G_k$ is a tautology.
Put in another way, one can split over an unsatisfiable formula $\olnot{G_1}\land\dotsb\land \olnot{G_k}$ ---
including unsatisfiable hitting formulas.
We use this idea, although in a DAG-like context, to introduce the following generalization of {\hit}.
\begin{description}
\item[{\hitres}] merges {\hit} with {\res}. It uses the weakening rule and also extends the main
resolution rule to
\[
\frac{C_1\lor H_1,\ \dotsc,\ C_k\lor H_k}{C_1\lor\dotsb\lor C_k}
\]
for a hitting formula $H_1\land\dotsb\land H_k$.
It is also p-simulated by {\EF} (Corollary~\ref{cor:ef-sim-hitres}).
\end{description}

Other ways in which we can generalize {\hit} while keeping with the spirit of the proof system are to allow some leeway in the requirement for the subcubes to form a partition, or in the type of objects that constitute the partition. While at first these may appear to be a mere mathematical curiosity, the connections to Nullstellensatz in the case of {\hitodd} and to the partition bound in the case of {\hitp} show that these are natural proof systems.

\begin{description}
\item[{\hitk{k}}] strengthens {\hit} by allowing to cover a falsifying assignment
with at most $k$ sets. Such proofs can be efficiently verified and p-simulated in
{\EF} using the inclusion-exclusion formula and polynomial identity testing (PIT) (Theorem~\ref{th:hitk-simulation}).
\item[{\hitodd}] strengthens {\hit} by allowing to cover a falsifying assignment
with an odd number of sets. Such proofs also can be efficiently verified 
and p-simulated in {\EF} using PIT (Prop.~\ref{prop:hitodd-verification}).
This system is equivalent to a certain version of Nullstellensatz,
which we discuss in Sect.~\ref{sec:odd-hitting}. 
We prove a lower bound for {\hitodd} (Corollary~\ref{cor:pebbling-ns})
that allows us to separate it from {\res}.
\item[{\hitp}] strengthens {\hit} by allowing the complements of affine subspaces
instead of clauses, that is, a clause can now contain affine equations instead of just literals, 
and $S_i$ is thus an affine subspace. Such proofs can be verified similarly
to {\hit} using the Gaussian elimination. We prove an exponential-size lower bound for {\hitp}
(Theorem~\ref{th:hitting-plus-lb-pm}) which, additionally, separates it from {\cp}. 
\end{description}

\tikzstyle{sline} = [line width=0.5mm]
\tikzstyle{mdashed}=[sline, dash pattern=on 2mm off 1mm]
\tikzstyle{mdotted}=[sline, dotted]
\begin{figure}[ht]
    \centering
    \begin{tikzpicture}[node distance=53mm, auto, every node/.append style={rectangle, draw=black, >=stealth }, shorten >=1mm, shorten <=1mm] 
       \node (tlres) {{\tres}};       
       \node[above left  of=tlres] (hitting) {{\hit}};
       \node[above right of=tlres] (tlresxor) {{\tresp}};
       \node[above right of=hitting] (hittingxor) {{\hitp}};
       \node[left of=hittingxor] (exfrege) {{\EF}};
       \node[right of=hittingxor] (cp) {{\cp}};
       \node[left of=tlres] (res) {{\res}};
       \node[left of=hitting] (hitodd) {{\hitodd}};
       \draw[>-, mdotted] (hittingxor) to node[above, draw=none] {\ref{th:hitting-plus-lb-pm}}  (cp);
       \draw[>->, sline] (tlres) to  (res);
       \draw[>->, mdashed] (hitting) to [bend right] node[above, draw=none] {l.b. \ref{cor:hit-vs-res}\phantom{WWWW}} 
        (res);
       \draw[|->, sline] (tlres) to [bend left] node[below, draw=none] {\ref{th:hit-sim-tres}, \ref{th:tres-l-b}\phantom{WWWI}} (hitting);
       \draw[>->, sline] (tlres) to [bend right] node[below, draw=none] {\phantom{WW}\cite{IS20}} (tlresxor);
       \draw[->, sline] (hitting) to [bend left] (hittingxor);
       \draw[->, sline] (tlresxor) to [bend right] node[above, draw=none] {\phantom{WWWI}as in \ref{th:hit-sim-tres}} (hittingxor);       
       \draw[>-|, mdotted] (hitting.east) to node[draw=none, below] {\ref{th:hitting-vs-treeresxor}, \ref{cor:tlresxor-vs-hitting}}(tlresxor.west);
       \draw[->, mdashed] (hitting) to[bend left] node[below, pos=0.52, draw=none] {\phantom{WWWI}\ref{th:tres-qsim-hit}} (tlres);
       \draw[>->, sline] (hitting) to[bend left] node[left, draw=none] {\ref{cor:ef-sim-hit}} (exfrege);
       \draw[>->, sline] (hitodd) to[bend left] node[left, draw=none] {\ref{prop:hitodd-verification}} (exfrege);
       \draw[>->, sline] (hitting) to (hitodd);
       \draw[>-<, mdotted] (hitodd) to[bend right] node[left, draw=none] {\ref{prop:hitodd-tseitin}, \ref{cor:pebbling-ns}\phantom{I}} (res);
    \end{tikzpicture}
    \caption{Arrow \exarrow{->} means that $B$ p-simulates $A$, a dashed arrow \exarrow{->, mdashed} means $B$ quasi-polynomially simulates $A$. \exarrow{|-, mdotted} means a quasi-polynomial separation (a lower bound is for the system $A$). An arrowhead in the tail \exarrow{>-, mdotted} means that $A$ is exponentially separated from $B$. A dotted line \exarrow{-, mdotted} means that we do not know any simulations between $A$ and $B$. Known simulations involving {\cp} and {\EF} are not shown.}
    \label{fig:results}
\end{figure}

A summary of our simulations and separations is depicted in Figure~\ref{fig:results}, and more precise bounds are stated in Table~\ref{table:results}. Now we turn to a more detailed discussion.

\newcommand{\lb}[1]{{\textcolor{purple}{#1}}}
\begin{table}[ht]\begin{center}\begin{tabular}{ |l|c|c|c|c|c|c|c| }
\hline 
Statement&{\hit}&{\hitp}&{\small{\hitodd}}&{\tres}&{\tresp}&{\res}&{\cp}\\
\hline 
\ref{th:tres-l-b}, \ref{th:res-u-b} &
$2^{\tilde O(m)}$ &
&& 
$\lb{{2^{\tilde\Omega(m^{2-\varepsilon})}}^{\phantom{V}}}$ &
&$2^{\tilde O(m)}$&\\
\ref{th:hitting-vs-treeresxor} &
$2^{\tilde O(m)}$ &
&&& 
$\lb{2^{\tilde\Omega(m^{2-\varepsilon})}}$ 
&&\\
\ref{cor:pebbling-ns} &
&&
$\lb{2^{\tilde\Omega (n)}}$ &
&&
poly&\\
\ref{cor:tlresxor-vs-hitting}, \cite{IS20} &
\lb{$2^{n^{\Omega(1)}}$} &
&&& 
poly 
&\lb{$2^{\Omega(n)}$}&\\
\ref{th:hitting-plus-lb-pm} &
&
$\lb{2^{\Omega(n)}}$ &
&&&&poly\\ 
\hline
\end{tabular}\end{center}
\caption{Precise bounds in our separations. Upper bounds are black and lower bounds are \lb{purple}.}
\label{table:results}
\end{table}

\subsection{Our results and methods}


\subsubsection{Simulations of {\hit}-based systems and proof verification using PIT}
Proof verification is not straightforward in static (semi)algebraic proof systems
that use either dual variables $\bar{x}=1-x$ or do not open the parentheses in $(1-x)$ 
for the negation of a variable $x$ (such as static Lov\'asz--Schrijver
or Sherali--Adams proofs or {\ns} proofs with dual variables).
A similar situation occurs with the verification of {\hit} proofs which,
contrary to most (or all?) known proof systems,
is based on model counting.
Such reasoning is not expressed naturally in propositional logic,
and it makes it difficult to simulate {\hit} proofs in other proof systems.
We observe that {\hit} proofs can be expressed similarly to {\ns} proofs
with dual variables without explicitly mentioning the side polynomials
for $x^2-x$ and $x+\bar{x}-1$ (in particular, we notice
that over $\GF(2)$, such proofs, which we call \emph{succinct {\nsr}} proofs, are equivalent
to {\hitodd} proofs, and that over any field they p-simulate {\hit} proofs in a straightforward manner).
We show that the two problems have the same cure:
we provide an efficient polynomial identity testing
procedure for multilinear polynomials modulo $x+\bar{x}-1$ that can also be formalized in {\EF}.

Our approach uses the main idea of the Raz--Shpilka 
polynomial identity testing for noncommutative circuits \cite{RS05}. 
We introduce new variables for quadratic polynomials; crucially
it suffices to do so for a basis instead of the potentially exponential number of polynomials.
This serves as an inductive step
cutting the degrees.
Namely, at the first step we consider
two variables $x_1$ and $x_2$ and quadratic polynomials
(potentially, 
$(1-x_1)(1-x_2)$,
$(1-x_1)x_2$, 
$x_1(1-x_2)$, 
$x_1x_2$, 
$1-x_1$, 
$x_1$, 
$1-x_2$, and 
$x_2$)
appearing in the monomials $m_i$ as $\bar{x}_1\bar{x}_2$, $\bar{x}_1x_2$, etc., and replace them using linear combinations
of new variables $y^{1,2}_i$,
thus decreasing the degree by one.
At the next step we treat all the variables $y^{1,2}_i$ as a single ``layer''
(note that they are \emph{not} multiplied by each other).
We merge this layer of $y^{1,2}_i$
with $x_3$, getting a layer of variables $y^{1,2,3}_j$, and so on, 
until we reach a linear equation, which is easy to verify.

In order to implement this strategy
we prove a lemma allowing us to merge two layers of variables (Lemma~\ref{lem:merging-two-levels})
by ensuring that after the merge the equivalence of polynomials still holds.

By using this polynomial identity testing we get not only an efficient algorithm
for checking static proofs (including succinct {\nsr} proofs), 
but also a polynomial simulation in the Extended
Polynomial Calculus ({\extpc}) system that has been recently used in \cite{Ale21},
where an exponential-size lower bound has been proved.
Given that {\extpc} over $\GF(2)$ is equivalent to {\EF} (Prop.~\ref{prop:ef-sim-extpc}),
we obtain p-simulations of {\hit}, {\hitres}, {\hitodd} and {\hitk{k}} in {\EF}.

A simpler proof that succinct {\sa} is verifiable in polynomial time
was developed independently in~\cite{dRPR24}, where the proof system
is named \emph{semantic} {\sa}. Interestingly, after applying
algebraic manipulations, their proof implicitly reduces the problem of
verifying a {\sa} proof to that of verifying a {\hit} proof. This
suggests that proofs systems relying on model counting are not that
uncommon after all.

\subsubsection{Separations of {\hit} from classical systems}
A polynomial simulation of {\tres} in {\hit} (Theorem~\ref{th:hit-sim-tres}) can
be easily shown by converting {\tres} to a decision tree, then the assignments in the leaves provide a disjoint partition of the Boolean cube.
We show a quasi-polynomial simulation in the other direction
through careful analysis
of a recursive argument (Theorem~\ref{th:tres-qsim-hit}).
The main idea is that an unsatisfiable formula containing $m$ clauses must necessarily
contain a clause of width $w \le \log_2 m$, and in a hitting formula this clause must contain a variable
that occurs with the opposite sign in at least $(m-1)/w$ clauses.
Making a decision over this variable thus removes a lot of clauses in one of the two branches.
We also employ a generalization of this idea to show that {\hitk{k}} proofs
can be quasi-polynomially simulated in {\hit} (Prop.~\ref{prop:hitk-qsim-hit}) and hence in {\tres}.

A polynomial simulation in the other direction is impossible because of a superpolynomial separation.
To show this result (Theorem~\ref{th:tres-l-b}) we use query complexity, and in particular,
the result of~\cite{AKK16} separating unambiguous query complexity from randomized query complexity.
We lift it using xorification to obtain the desired separation.

We then obtain a two-way separation between {\hit} and {\tresp} (Sect.~\ref{sec:hitting-vs-res-ns}). On the one hand Tseitin formulas
are hard for {\res} \cite{Urq87} and hence for {\hit}. On the other hand, 
\cite{IS20} shows that they have polynomial-size {\tresp} (and thus also {\hitp}) proofs.
In the other direction, similarly to the separation between {\hit} and {\tres},
we again use the separation of \cite{AKK16} between unambiguous certificate
complexity and randomized query complexity as our starting point. However, since for {\tresp} we are unable
to use decision trees, we need to go through randomized communication complexity arguments, using a randomized query-to-communication lifting theorem~\cite{GPW17}.

Eventually, we discuss separations of {\hit} from {\res} and {\ns}.
While the relevant lower bounds for {\hit} follow directly from known lower bounds for {\tres},
the other direction seems much more difficult, if possible at all.
One natural candidate for a separation result could be
the formulas that we used to separate
{\hit} from {\tres}, but this cannot work because they turn out to have {\res} proofs of polynomial size
(Theorem~\ref{th:res-u-b}).
We show this fact using dag-like query complexity~\cite{GGKS20}, the analogue of resolution width in query complexity, which stems from a game characterization of {\res} \cite{Pud00,AD08}.
We need to reprove the result of~\cite{AKK16} accordingly, improving it to a separation between unambiguous dag-like query complexity and randomized query complexity. This immediately yields an upper bound on the {\res} width.
Concerning {\ns}, it is a simple observation that {\hit} is simulated by {\ns} with respect to width vs degree.
Furthermore, as we discussed above, succinct {\nsr} proofs (over any field)
simulate {\hit} with respect to size, therefore separating {\hit} from {\res} would amount to separating explicit vs succinct {\nsr} size.

\subsubsection{A lower bound for {\hitodd}}

As mentioned above {\hitodd} is polynomially equivalent to succinct {\nsr} proofs over $\GF(2)$, and
we explain this in more detail in the beginning of Sect.~\ref{sec:odd-hitting}. It is
easy to see that {\hitodd} has short proofs of Tseitin formulas and thus it is
exponentially separated from {\res}.
The opposite direction (Cor.~\ref{cor:pebbling-ns}) requires slightly more effort.
It is known that {\res} width can be separated from {\ns} degree~\cite{BOPCI00}.
We use this result to get our size separation
using xorification and the random restriction technique of Aleknhovich
and Razborov (see~\cite{Ben02}).

\subsubsection{A lower bound for {\hit}(\texorpdfstring{$\oplus$}{⊕})}

Our lower bound for {\hitp} (Theorem~\ref{th:hitting-plus-lb-pm}) uses a communication complexity argument. Communication complexity reductions have a long history of applications in proof complexity \cite{BPS05,HN12,GP18,IS20,dRNV16}. The first step in these reductions is a simulation theorem, which shows that a refutation of an arbitrary CNF $\phi$ in the proof system of interest can be used to obtain a low-cost communication protocol solving the communication problem $\Search(\phi)$: given an assignment to the variables of $\phi$, find a clause of $\phi$ falsified by this assignment. The second step is reducing a known hard communication problem (usually set disjointness) to $\Search(\phi)$ for a carefully chosen CNF $\phi$.

Until recently the applications of these reductions were limited to either proving a lower bound for a tree-like version of the system or proving a size-space tradeoff, neither of which applies to our result. However, over the last few years, the list of applications of the communication approach in proof complexity has grown significantly. A major breakthrough came in \cite{Sok17,GGKS20} with a dag-like lifting theorem from resolution to monotone circuits and cutting plane refutations. Another novel idea was introduced in \cite{GHJMPRT22}, where the authors derived a lower bound for Nullstellensatz via a communication-like reduction from the $\Omega(\sqrt{n})$ lower bound on the approximate polynomial degree of $\mathrm{AND}_n$ \cite{NS95}.

We use yet another twist on this idea: we apply a communication reduction to the \emph{partition bound} \cite{JK10}, a generalization of randomized communication protocols which simulates {\hitp} (Lemma~\ref{lem:hitting-xor-reduction}). To the best of our knowledge this is the first application of the partition bound in a proof complexity context.
We then adapt (Theorem~\ref{th:perfect-matching-reduction}) a communication reduction from set disjointness in~\cite{IR21} so that it works for the partition bound and use the fact that set disjointness is still hard for the partition bound to get our lower bound (Theorem~\ref{th:hitting-plus-lb-pm}). The choice of the reduction of~\cite{IR21} is not particularly important, and we believe that reductions from~\cite{BPS05,GP18,IS20} should also work. A nice feature of the reduction we use is that we get a lower bound for a natural combinatorial principle: a formula encoding the non-existence of a perfect matching in a complete bipartite graph $K_{n,n+2}$. Because this formula is known to have short {\cp} proofs, we obtain a separation between {\hitp} and {\cp} as an immediate corollary.

\subsection{Further research}

\paragraph{Relation between {\hit} and {\res}.} 
Although we have gained a lot of understanding of the hardness of hitting formulas for resolution, the initial question of Peitl and Szeider is not fully answered. In particular, we do not know whether hitting formulas can be superpolynomially hard for {\res}. The negative answer implies a simulation of {\hit} by {\res}. 
To show the positive answer it is sufficient to separate two query complexity measures: dag-like query complexity ($\dagw$) and unambiguous certificate complexity ($\ucert$).
The dag-like query complexity of the falsified clause search problem for a formula $F$ corresponds to the resolution width of $F$.
The unambiguous certificate complexity for this problem corresponds to the width of {\hit} refutations of $F$.
Note that unambiguous certificate complexity only makes sense for functions, while dag-like query complexity is defined for (total) relations.
Unfortunately, separating even regular certificate complexity ($\cert$) and $\dagw$ is an open problem for \emph{functions} (without the uniqueness requirement the certificate complexity can only decrease, so it might be easier to separate $\dagw$ from $\cert$ than from $\ucert$). Lemma~\ref{lem:composition-and-width} and Lemma~\ref{lem:width-cheatsheet} show that $\dagw$ is resistant to known lower bound techniques in the field of query complexity, so tackling it will likely lead to finding new techniques there. Notice that we know how to separate $\dagw$ and $\cert$ for \emph{relations}: every lower bound on the resolution width for an $O(1)$-CNF formula constitutes a separation for the corresponding falsified clause search problem. Such a separation (constant vs.\ polynomial) is unachievable for functions (we cannot hope for better than quadratic separation for functions as $\dagw(R) \le \cert(R)^2$). Can we use ideas from resolution lower bounds to separate $\dagw$ and $\ucert$ (or at least $\cert$)?

\paragraph{Separate {\hit} and {\hitk{2}}.} With xorification like in Lemma~\ref{lem:xor_dt} this problem can be shown to be equivalent to a simple (if only in the statement!) question in query complexity: separate unambiguous certificate complexity and $2$-unambiguous certificate complexity (where every input has one or two certificates). It is known how to separate one-sided versions of these query models \cite{GKY22}, but similarly to the question of {\hit} vs {\res} it is unclear how to extend this to the two-sided case. 

\paragraph{Is it possible to separate {\hitp} and {\tresp}?} 
In Section~\ref{sec:no-qp-simulation} we give evidence that a simulation of {\hitp} by {\tresp} along the lines of Theorem~\ref{th:tres-qsim-hit} is not possible. That, however, does not rule out the existence of such a simulation.  \cite[Conjecture~5.1.3]{SS21} conjectures that every affine subspace partition can be refined to one corresponding to a parity decision tree with a quasi-polynomial blow-up. With some caveats\footnote{The refinement might be non-constructive, but its mere existence does not imply the simulation. The simulation might produce parity decision trees that are not refinements of the initial {\hitp} refutation but nevertheless, solve the relation $\Search(\phi)$.}, the statement of this conjecture is equivalent to the existence of  quasi-polynomial simulation of {\hitp} by {\tresp}. So, is there an exponential separation between these two systems? It seems that communication-based lower bounds for {\tresp} can be transferred to {\hitp} as it is done in Section~\ref{sec:hitp-lowerbound}. There are several other techniques that yield {\tresp} lower bounds such as prover-delayer games \cite{IS20,Gry19}, reduction to polynomial calculus degree \cite{GK18}, and the recent lifting from decision tree depth to parity decision tree size directly \cite{CMSS22,BK22}. None of those seem to work for {\hitp}, so it is reasonable to think that some of the yielded formulas may have an upper bound in {\hitp}. The most promising technique seems to be the lifting since it yields a wide family of formulas hard for {\tresp} with the source of hardness inherent to the tree-like structure of refutations.



\paragraph{Better upper bound on {\hit}.} One intriguing matter is that although a very weak proof system such as {\tres} is enough to quasi-polynomially simulate {\hit}, we need to go all the way to the very strong proof system {\EF} for the simulation to become polynomial. A natural question is then what is the weakest proof system that is enough to polynomially simulate {\hit}.

It is consistent with our findings that a fairly weak proof system such as {\nsr} is already enough to simulate {\hit}; indeed this would be the case if {\nsr} and succinct {\nsr} were equivalent. Hence we ask the same question regarding succinct (semi)algebraic proof systems: what is the weakest proof system that polynomially simulates succinct {\nsr} or succinct {\sa}? And in particular, is succinct {\nsr} equivalent to {\nsr} and is succinct {\sa} equivalent to {\sa}? One way to answer all these questions would be to formalize the PIT of Theorem~\ref{th:extpc-sim-hit} in a weaker proof system.


The situation with {\hitp} is even worse. We have shown how to p-simulate most of the generalizations of {\hit} that we defined, including {\hitodd} and {\hitk{k}}, in {\extpc}, but the argument does not work as is for {\hitp} since we are relying on a noncommutative PIT. Therefore we do not know even an {\EF} simulation of {\hitp} (though it is of course quite expected). 

\paragraph{Non-automatability of {\hit}.} It follows from Theorem~\ref{th:tres-qsim-hit} and quasi-polynomial automatability of {\tres} \cite{BP96} that {\hit} is also quasi-polynomially automatable. Can we show that it is impossible to do better? We think that it is possible to adapt the similar result of de Rezende \cite{dR21} for {\tres}.

\section{Basic definitions}
\subsection{Basic notation}

For a function $f:\mathbb{N}\to\mathbb{R}$, $\tilde O(f)$ and $\tilde\Omega(f)$ denote $O$ and $\Omega$ up to logarithmic factors,
that is, $g = \tilde O(f)$ and $h = \tilde \Omega(f)$ if $g = O(f\log^Cf)$ and $h = O(f/\log^Cf)$ respectively for a constant $C$.
For example, $2^nn^2=\tilde O(2^n)$ and $n/\log n=\tilde\Omega(n)$.

Let $f\colon\{0,1\}^n\to\{0,1\}$ be a Boolean function. The \emph{deterministic query complexity of $f$}, denoted by $\D(f)$, is the minimal number of (adaptive) queries to the input variables that is enough to compute $f(x)$ for any input $x$.
The \emph{randomized query complexity of $f$}, $\R(f)$, is 
the minimum number of queries needed by a randomized algorithm
that outputs $f(x)$ for any input $x$ with probability at least $2/3$.
A partial assignment $\alpha\in\{0,1,{*}\}^n$ is a \emph{certificate} for $f$ if for any two assignments $x,y\in\{0,1\}^n$ agreeing with $\alpha$, $f(x) = f(y)$. The \emph{size} of a certificate is the number of non-star entries.
The \emph{certificate complexity of $f$ on an input $x$}, denoted $\cert(f,x)$, is size of the smallest certificate $\alpha$ such that $x$ agrees with $\alpha$. 
For $b\in\{0,1\}$, the \emph{(one-sided) $b$-certificate complexity of $f$} is defined as $\cert_b(f) = \max_{x: f(x) = b} \cert(f,x)$. 
The \emph{(two-sided) certificate complexity of $f$} is the maximum of $0$- and $1$-certificate complexities, $\cert(f) = \max\{\cert_0(f),\cert_1(f)\}$. 
We say that a family of certificates $A\subset \{0,1,*\}^n$ is \emph{unambiguous} if any two distinct certificates $\alpha,\beta\in A$ conflict, i.e., there is no assignment that agrees with both $\alpha$ and $\beta$. 
For $b\in\{0,1\}$, the \emph{(one-sided) unambiguous $b$-certificate complexity of $f$}, $\ucert_b(f)$, is the minimum number $w$ such that there is an unambiguous family of certificates $A$ such that $A$ contains only certificates of size at most $w$ and every $x\in f^{-1}(b)$ agrees with some certificate in $A$. 
The \emph{(two-sided) unambiguous certificate complexity of $f$} is defined as $\ucert(f) = \max\{\ucert_0(f), \ucert_1(f)\}$.

\TODOJ{New paragraph below --- please check.}
For the definition of the basic proof complexity notions such as proof system and p-simulation, 
we refer the reader to \cite{CR79}. We consider also quasi-polynomial simulations:
a proof system $\Pi_1$ \emph{quasi-polynomially simulates} proof system $\Pi_2$
if for certain $k\in\mathbb{N}$,
for every formula $F$, the system $\Pi_1$ has a proof of $F$ of size at most $2^{(\log s)^k}$,
where $s$ is the size of the shortest proof of $F$ in $\Pi_2$.
One could define (and name) a constructive (analogous to p-simulation) 
version of this notion, and in fact our quasi-polynomial simulations 
(Theorem~\ref{th:tres-qsim-hit}, Proposition~\ref{prop:hitk-qsim-hit})
are constructive, that is, we can produce the proofs in the simulating proof system
in time polynomial in their length. This is not important for the separation
results, so we use the term ``quasi-polynomial simulation'' without emphasizing
the constructiveness.
We say that a proof system $\Pi_1$ is \emph{quasi-polynomially separated}
from $\Pi_2$ if there is an infinite sequence of formulas $F_n$, whose size tends to infinity, and a specific $k\in\mathbb{N}$ such that
$\Pi_2$ has no proofs of $F_n$ of size $2^{(\log (s_n+|F_n|))^k}$,
where $s_n$ is the size of the shortest proof of $F_n$ in $\Pi_2$.
Somewhat abusing the notation we say that $\Pi_1$ is \emph{quasi-polynomially stronger}
than $\Pi_2$ if it polynomially simulates $\Pi_2$ and is quasi-polynomially separated from it.

We use the following notation for widely known proof systems:
{\res} for Resolution,
{\tres} for tree-like Resolution,
{\resp} for Resolution over XORs of \cite{IS20},
{\tresp} for its tree-like version,
{\cp} for Cutting Planes,
{\ns} for Nullstellensatz,
{\pc} for Polynomial Calculus,
{\F} for Frege and {\EF} for Extended Frege.

\emph{Deterministic communication complexity} of a search problem defined by a ternary relation $R$
is the minimal amount of communication (number of bits) that is enough to solve the following 
communication problem for two players on any input: Alice is given $x$, Bob is given $y$, 
and their goal is to find some $z$ such that $(x,y,z) \in R$.
Alice and Bob can exchange information by sending bit messages to each other. 
At the end of the game both players must know $z$.
\emph{(Public coin) $\varepsilon$-error randomized communication complexity} of a search problem
is the minimal amount of communication that is enough for players to win the communication game 
with probability at least $1-\varepsilon$ if the players have access to a public source of random bits.
If $\varepsilon$ is not explicitly specified then we assume $\varepsilon = 1/3$.
More information on the standard definitions  of communication complexity can be found in~\cite{KN97}.

\subsection{Hitting formulas and proof system}
Iwama \cite{Iwa89} started to study hitting formulas as a polynomial-time tractable subclass
of satisfiability problems (see also \cite{Kul04}).
\begin{definition}[Hitting formula]\label{def:hitformula}
A \emph{hitting formula} is a formula $F=C_1\land\dotsb\land C_m$ 
in conjunctive normal form such that every two of its clauses $C_i$ and $C_j$
contain contrary literals, that is, there is some literal $\ell$ such that $\ell\in C_i$
and $\bar{\ell}\in C_j$; in other words, $C_i\lor C_j$ is a tautology.
\end{definition}

Sometimes the notion is defined for formulas in disjunctive normal form. 
We call them a different name to avoid misunderstanding.

\begin{definition}[Unambiguous DNF]\label{def:unambiguous-DNF}
An \emph{unambiguous DNF} is the negation of a hitting formula,
that is, every two its terms (conjunctions) contradict each other.
\end{definition}

\begin{definition}[{\hit} proof system]\label{def:hit}
A refutation of a CNF $F$ in {\hit} is an unsatisfiable hitting formula $H$
such that every clause $C$ in $H$ has a strengthening $C'\subseteq C$ in $F$.
\end{definition}

{\hit} refutations can be verified in polynomial time:
the unsatisfiability of $H$ can be easily checked by counting
the number of falsifying assignments, as implicitly noticed by Iwama \cite{Iwa89}
(note that the sets of falsifying assignments
for any two clauses of $H$ are disjoint), and matching clauses
to their strengthening is done simply by considering all pairs $C\in H$, $C'\in F$.

The soundness of {\hit} is trivial, the completeness is given by the ``complete'' 
hitting formula consisting of all possible clauses containing all the variables of $F$:
the unique assignment falsifying such a clause $C$ must also falsify some clause $C'$ of
(unsatisfiable) $F$, which is then the required strengthening of $C$.

\subsection{Other {\hit}-based proof systems}
\subsubsection{{\hitres}}
{\hit} is a ``static'' proof system with no real derivation procedure.
We add more power to it by incorporating such steps into a {\res} refutation.
Indeed, a resolution step can be generalized to resolve over any contradiction, 
not just $x\land \bar{x}$.
In {\hitres} we resolve by hitting formulas:
\begin{definition}[{\hitres}]\label{def:hitres}
This proof system embraces both {\hit} and {\res}.
One derivation step uses an unsatisfiable hitting formula
$H_1\land\dotsb\land H_k$:
\[
\frac{C_1\lor H_1,\ \dotsc,\ C_k\lor H_k}{C_1\lor\dotsb\lor C_k}.
\]
We also allow weakening steps:
\[
\frac{C}{C\lor D}.
\]
\end{definition}
\begin{proposition}
{\hitres} p-simulates both {\hit} and {\res}.
\end{proposition}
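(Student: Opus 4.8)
The plan is to exhibit, for each proof in {\hit} or in {\res}, a {\hitres} proof of essentially the same size, using the fact that the derivation rule of {\hitres} degenerates to each system's native step for an appropriate choice of hitting formula.

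\medskip

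\noindent\textbf{Simulating {\res}.} First I would observe that the ordinary resolution rule is the special case of the {\hitres} rule where $k=2$ and the hitting formula is $H_1 \land H_2 = x \land \bar{x}$, which is trivially an unsatisfiable hitting formula (the two unit clauses contain the contrary literals $x$ and $\bar x$). Indeed, writing the premises as $C_1 \lor x$ and $C_2 \lor \bar x$, the {\hitres} rule yields exactly $C_1 \lor C_2$. Since {\hitres} also contains the weakening rule explicitly, and a {\res} refutation is a sequence of resolution and (sometimes) weakening steps ending in the empty clause, each line of a {\res} refutation is reproduced verbatim by one {\hitres} step of the same kind. This gives a line-by-line p-simulation of {\res}.

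\medskip

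\noindent\textbf{Simulating {\hit}.} A {\hit} refutation of $F$ is a single object: an unsatisfiable hitting formula $H = D_1 \land \dotsb \land D_m$ where each $D_i \supseteq C'_i$ for some clause $C'_i \in F$. To turn this into a {\hitres} derivation of the empty clause, I would apply the {\hitres} rule once with the hitting formula $H$ itself in the role of $H_1 \land \dotsb \land H_m$ (so $H_i := D_i$), and with each $C_i := \varnothing$ the empty clause. The premises required by the rule are then $\varnothing \lor D_i = D_i$ for every $i$. Each such premise is obtained in one weakening step from the axiom clause $C'_i \in F$, since $C'_i \subseteq D_i$. The conclusion of the rule is $C_1 \lor \dotsb \lor C_m = \varnothing$, the empty clause, as desired. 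Hence a {\hit} refutation of size $s$ becomes a {\hitres} refutation with at most $s$ weakening steps plus one application of the main rule, which is polynomial (in fact linear).

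\medskip

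\noindent The argument is essentially immediate once the two degenerate instances of the {\hitres} rule are identified, so there is no real obstacle; the only points to check carefully are that $x \land \bar x$ genuinely satisfies Definition~\ref{def:hitformula} (it does, vacuously, for a single pair) and that the $C_i$ in the {\hit} simulation may be taken empty, which is allowed since the rule places no nonemptiness requirement on the $C_i$. One should also note that both simulations are constructive and run in polynomial time given the original proof, so they are p-simulations in the strict sense.
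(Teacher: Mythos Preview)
Your proof is correct and follows essentially the same approach as the paper: use $x\land\bar x$ as the hitting formula to recover {\res}, and use weakening to obtain the clauses of the given hitting formula before applying a single {\hitres} step to conclude. Your version is in fact more explicit than the paper's, which leaves the final single application of the {\hitres} rule (with all $C_i=\varnothing$) implicit.
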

\begin{proof}
{\hitres} generalizes {\res}: if one uses the hitting formula $x \land \bar{x}$
at every step, {\hitres} turns exactly into {\res}.
On the other hand, in {\hit} we need to demonstrate that every clause of a hitting formula
is a weakening of some input clause, and this can be simulated using the weakening rule.
\end{proof}
\subsubsection{{\hitodd}}
While a hitting formula covers every falsifying assignment exactly once,
that is, it satisfies exactly one clause, an odd hitting formula
does this an odd number of times.

\begin{definition}[Odd hitting formula]\label{def:hitoddformula}
An \emph{odd hitting formula} is a formula $F=C_1\land\dotsb\land C_m$ 
in conjunctive normal form such that every falsifying assignment
falsifies an odd number of its clauses.
\end{definition}
\begin{definition}[{\hitodd} proof system]\label{def:hitodd}
A refutation of a CNF $F$ in {\hitodd} is an unsatisfiable odd hitting formula $H$
such that every clause $C$ in $H$ has a strengthening $C'\subseteq C$ in $F$.
\end{definition}

It is not straightforward how to verify that a (not necessarily unsatisfiable) formula is an odd hitting formula, and  how to verify that a formula is an unsatisfiable odd hitting formula (thus verifying {\hitodd} proofs). We show it in~Prop.~\ref{prop:hitodd-formula-verification} and  Prop.~\ref{prop:hitodd-verification}.
\subsubsection{\texorpdfstring{\hitk{k}}{Hitting[k]}}

One can generalize hitting formulas by allowing
a falsifying assignment to falsify a limited number
of clauses (and not just a single clause) \cite{Kul07}.
\begin{definition}[Hitting-$k$ formula]\label{def:hitk-formula}
A \emph{hitting-$k$ formula} is a formula $F$ 
in conjunctive normal form such that every 
assignment falsifying $F$
falsifies at most $k$ clauses of $F$.
\end{definition}
\begin{definition}[\hitk{k}]\label{def:hitk}
A refutation of a CNF $F$ in {\hitk{k}} is an unsatisfiable hitting-$k$ formula $H$
such that every clause $C$ in $H$ has a strengthening $C'\subseteq C$ in $F$.
\end{definition}
We show in Theorem~\ref{th:hitk-simulation} that {\hitk{k}} refutations can be verified in polynomial time.

\subsubsection{\texorpdfstring{\hitp}{Hitting(xor)}}

{\hitp} stands to {\hit} the same way as {\resp} stands to {\res},
where {\resp} is the system defined in \cite{IS20} that
allows clauses to contain affine equations modulo two
instead of just literals.
It resembles the system {\rlin} of \cite{RT08} and falls under the concept
of Kraj\'{\i}\v{c}ek's $\textsc{R}(\ldots)$ systems \cite{Kra98}.

\begin{definition}[Hitting$(\oplus)$ formula]
A \emph{hitting$(\oplus)$ formula} decomposes $\{0,1\}^n$ into disjoint affine subspaces over $\GF(2)$. Namely, it is a conjunction of $\oplus$-clauses of the form
$\bigvee_k \left(c_k\oplus\bigoplus_{i\in I_k} x_i\right)$, where $c_k\in\{0,1\}$ are constants,
$x_i$'s are variables, and any two its $\oplus$-clauses do not share
a common falsifying assignment.
\end{definition}

Note that we can check that two affine subspaces are disjoint using Gaussian elimination, and this gives an efficient way of checking whether a given formula is hitting$(\oplus)$.

$\oplus$-clauses can be thought of as sets of linear (affine) equations similarly to
clauses that we usually think of as sets of literals.

\begin{definition}[{\hitp} proof system]\label{def:hitp}
A refutation of a CNF $F$ in {\hitp} is an unsatisfiable hitting$(\oplus)$ formula $H$
such that every $\oplus$-clause $C$ in $H$ has a strengthening $C'\subseteq C$ in $F$.
\end{definition}

Note that {\hitp} can be thought of also as a \emph{proof system
for sets of affine subspaces covering $\{0,1\}^n$}, that is, 
unsatisfiable systems of disjunctions of linear (affine) equations.

\section{PIT helps to simulate {\hit}, and more}\label{sec:verification}

\TODOJ{This section was moved from 4 to 3.}

\subsection{{\EF} p-simulates {\hit}}

We prove that {\hit} can be p-simulated at least in the most powerful logical
propositional proof system, {\EF}.
The obstacle is that the soundness of {\hit} is based on the counting argument
that involves the number of assignments falsified by a clause, and it is not easy
to express this argument in propositional logic.

Our strategy is to p-simulate {\hit} in a strong algebraic system
that is p-equivalent to {\EF} in the case of $\GF(2)$.

There are several proof systems extending the power of {\pc} by
allowing to express polynomials in a more compact way than linear combinations of monomials.
Grigoriev and Hirsch~\cite{GH03} introduced {\fpc} that allows
to express polynomials as algebraic formulas without opening the parentheses.
Of course, this needs usual associativity--commutativity--distributivity rules
to transform these formulas. The next powerful system is
{\extpc} considered by Alekseev \cite{Ale21}. This is simply {\pc}
with Tseitin's extension rule generalized so that variables
can be introduced for arbitrary polynomials. It can be viewed as
a way to express {\pc} proofs where polynomials can be represented
as algebraic circuits (but transformations of these circuits 
must be justified using the definitions of extension variables 
that denote gates values).
Eventually, Grochow and Pitassi~\cite{Pit96,GP18} suggested to
generalize proof systems to allow the randomized verification of the proofs,
and in these proof systems, one can switch for free between different 
circuit representations of a polynomial.

A {\F} system \cite[\S2]{CR79} is defined as any implicationally complete inference system
that uses sound constant-size rule schemata for Boolean formulas (a schema
means that the formulas in the rules are represented by meta-variables,
for example, $F$ and $G$ in the modus ponens rule $\frac{F;\ F\supset G}{G}$
can be any formulas). An {\EF} system additionally allows us to introduce new
variables using the axiom schema $x \Leftrightarrow A$ for \emph{any} formula $A$, 
where $x$ is a new variable.

Grigoriev and Hirsch~\cite[Theorem 3]{GH03} prove that {\fpc} (over any field),
a system that allows us to represent polynomials using arbitrary algebraic formulas and to transform them using the ring rules,
p-simulates {\F} (and also a similar statement for constant-depth {\fpc} over finite fields versus {\F} with modular gates). They also state that {\F} p-simulates {\fpc} over $\GF(2)$ \cite[Remark 5]{GH03}. We include a formal proof of this statement for completeness.
Namely, we prove that {\fpc} over $\GF(2)$ is a {\F} system itself
(and it is known that all sound and implicationally complete {\F} systems over all possible sets of Boolean connectives are equivalent \cite[Theorem 5.3.1.4.i]{Rec76}).
\begin{proposition}\label{prop:f-sim-fpc}
{\fpc} over $\GF(2)$ is a {\F} system.
\end{proposition}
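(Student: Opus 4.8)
The plan is to show that \fpc{} over $\GF(2)$, after a routine encoding of its proof lines as Boolean formulas, satisfies the definition of a \F{} system: it uses sound constant-size rule schemas over Boolean formulas, and it is implicationally complete. Once both of these are established, we may invoke \cite[Theorem 5.3.1.4.i]{Rec76} to conclude that it is p-equivalent to every other \F{} system, which is what the statement asserts.

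First I would set up the dictionary between the two worlds. Over $\GF(2)$ a polynomial equation $p = 0$ is naturally read as the Boolean assertion that a certain formula is \emph{false} (or $p = 1$ that it is true), with the ring operations $+$ and $\cdot$ corresponding to XOR and AND of the associated formulas; an algebraic formula in \fpc{} is literally a Boolean formula once we reinterpret the gates, and conversely $\lor$, $\lnot$, $\supset$ are all expressible by $+$ and $\cdot$ (e.g. $\lnot x \leftrightarrow 1 + x$, $x \lor y \leftrightarrow x + y + xy$), so the two formula classes are polynomially interconvertible. The key observation, which is special to $\GF(2)$, is that the ``Boolean axioms'' $x^2 - x = 0$ are automatically valid, since $x^2 = x$ holds over $\GF(2)$ at the level of the ground field — this is exactly why \fpc{} over $\GF(2)$ behaves like a genuine propositional system rather than merely an algebraic one.

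Next I would go through the \fpc{} rules one at a time and check that each translates to a sound, schematic, constant-size Boolean inference rule: the linear-combination rule $\frac{p=0\ ;\ q=0}{\alpha p + \beta q = 0}$, the multiplication rule $\frac{p = 0}{x p = 0}$, and the formula-manipulation rules implementing associativity, commutativity, and distributivity (and the handling of coefficients, which over $\GF(2)$ are just $0$ and $1$). Each of these is a fixed schema in the meta-variables naming the subformulas, so the constant-size requirement is met; soundness is immediate from the ring identities. Conversely, to get implicational completeness I would argue that \fpc{} over $\GF(2)$ can simulate some known \F{} system — either directly by deriving modus ponens and the standard axiom schemas, or, more economically, by citing \cite[Theorem 3]{GH03}, which already gives that \fpc{} over any field p-simulates \F{}; implicational completeness of \F{} then transfers. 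I would also note that proof size is preserved up to a polynomial under the formula translations in both directions, so ``is a \F{} system'' is meant in the size-sensitive sense needed for the later p-simulation claims.

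The main obstacle I anticipate is bookkeeping rather than conceptual depth: one has to be careful that the translation of a \fpc{} \emph{line} (a polynomial equation) to a Boolean formula does not blow up in size — which is fine because we keep polynomials as formulas and never expand into monomials — and that the \fpc{} rules, some of which are stated as rewriting rules on formula syntax, genuinely fit the ``constant-size schema'' template of \cite[\S2]{CR79} rather than being an unbounded family. A secondary subtlety is making sure the argument really uses $\GF(2)$ and not an arbitrary field: over larger fields $x^2 = x$ fails as a ground identity and the polynomials no longer correspond to Boolean formulas in the required way, so the Boolean-axiom-free formulation of \fpc{} is essential here. Modulo these checks, the proposition follows by assembling the verified rule list, the implicational completeness obtained from \cite[Theorem 3]{GH03}, and \cite[Theorem 5.3.1.4.i]{Rec76}.
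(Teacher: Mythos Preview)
Your proposal is correct and covers the same ingredients as the paper's proof: identifying $\GF(2)$-polynomial equations with Boolean formulas over $\{\oplus,\land,\lnot\}$, observing that the \fpc{} rules are sound finite schemas, and establishing implicational completeness. The one substantive difference is how you obtain implicational completeness.

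The paper does this by a direct, self-contained trick: from $A_1,\dotsc,A_k\models F$ one gets that $A_1,\dotsc,A_k,1{-}F$ is unsatisfiable, so by refutational completeness there is an \fpc{} derivation of $1$; then one multiplies the entire derivation by $F$ (citing \cite[Remark~2]{GH03}), which turns the extra hypothesis $1{-}F$ into $F(1{-}F)$, derivable as $0$, and the final $1$ into $F$. This reduces implicational completeness to refutational completeness with a single syntactic move. Your route---inferring implicational completeness from the p-simulation of \F{} given by \cite[Theorem~3]{GH03}---is also valid, but note that p-simulation is stated for tautologies, so ``implicational completeness transfers'' still needs the short extra step of simulating modus ponens (and conjunction introduction) in \fpc{}, which you do acknowledge as the ``direct'' option. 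The paper's argument is lighter because it avoids invoking the full Frege simulation and instead uses only refutational completeness plus one multiplication.

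A minor wording point: your remark that ``$x^2=x$ holds over $\GF(2)$ at the level of the ground field'' is about the vanishing of $x^2-x$ as a function on $\{0,1\}$, not as a polynomial; the Boolean axiom $x^2-x=0$ is still a nontrivial syntactic axiom in \fpc{}, it just happens to be a tautology under the Boolean reading of lines---which is exactly what a \F{} axiom schema should be. This does not affect correctness, only phrasing.
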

\begin{proof}
{\fpc} operates with polynomial equations over $\GF(2)$,
and these equations can be considered 
as Boolean formulas that use $\oplus$, $\land$ and the negation.
All its rules are, of course, sound, the system is complete, 
and the implicational completeness
can be shown as follows: if $A_1,\dotsc,A_k\models F$, then
$A_1,\dotsc,A_k,1-F\models 1$; by completeness,
there is a derivation $A_1,\dotsc,A_k,1-F\vdash^* 1$,
which we can multiply by $F$ \cite[Remark 2]{GH03}.
\end{proof}

\begin{definition}[\cite{Ale21}]\label{def:extpc}
An {\extpc} refutation over $R$ of a set of polynomials $P\subset R[x_1,\dotsc,x_n]$
is a {\pc} refutation over $R$ of a set $P\cup Q$,
where $Q$ consists of polynomials defining new variables $y_i$: 
\[
Q := \{y_1 - q_1(x_1, \dotsc, x_n), y_2 - q_2(x_1, \dotsc, x_n, y_1), \dotsc, 
y_m - q_m(x_1, \dotsc, x_n, y_1, \dotsc, y_{m - 1})\}
\]
where $q_i \in R[x_1,\dotsc,x_n, y_1, \dotsc, y_{i - 1}]$ are arbitrary polynomials. 
\end{definition}
While \cite{Ale21} defines {\extpc} over arbitrary fields and even rings, 
we use it over finite fields only.

Similarly to Prop.~\ref{prop:f-sim-fpc}, we prove that {\extpc} over $\GF(2)$ is an {\EF} system
(and it is known that
all {\EF} systems are p-equivalent \cite[Theorem 5.3.2.a]{Rec76}).
\begin{proposition}\label{prop:ef-sim-extpc}
{\extpc} over $\GF(2)$ is an {\EF} system.
\end{proposition}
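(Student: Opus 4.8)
The plan is to mirror the proof of Proposition~\ref{prop:f-sim-fpc} but now track the extension variables. First I would observe that, just as in that proof, a polynomial equation $p = q$ over $\GF(2)$ can be read as a Boolean formula in the connectives $\oplus$, $\land$ (and constants $0,1$), so an {\extpc} line is literally a Boolean formula; the {\pc} axioms and rules (addition of two equations, multiplication of an equation by a monomial, the Boolean axioms $x^2 = x$) are all sound constant-size schemas over these connectives, and implicational completeness follows exactly as before by the ``multiply by $F$'' trick of \cite[Remark 2]{GH03}. This shows the underlying {\pc}-part is a {\F} system.

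Next I would handle the extension rule. In {\extpc} the set $Q$ introduces fresh variables via equations $y_i = q_i(x_1,\dots,x_n,y_1,\dots,y_{i-1})$, where $q_i$ is an arbitrary polynomial over $\GF(2)$. Translating, $y_i = q_i$ is precisely the Boolean formula $y_i \Leftrightarrow A_i$, where $A_i$ is the Boolean formula obtained from the polynomial $q_i$ (using $+\mapsto\oplus$, $\cdot\mapsto\land$), and $A_i$ mentions only the variables $x_1,\dots,x_n,y_1,\dots,y_{i-1}$ that are already available. Hence each defining equation in $Q$ is an instance of the {\EF} extension axiom schema $x \Leftrightarrow A$. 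So {\extpc} over $\GF(2)$ is exactly a {\F} system augmented with the extension axiom, i.e. an {\EF} system in the sense of \cite[\S2]{CR79}.

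The one point that needs a little care — and which I expect to be the main (mild) obstacle — is bookkeeping about \emph{size}: an {\EF} system must be p-equivalent to the standard one, so I would check that the back-and-forth translation between polynomial equations and Boolean $\oplus/\land$-formulas is polynomial-size and that nothing blows up when we expand a monomial or re-express a {\pc} rule application as a Frege-style derivation step. Since each {\pc} rule is a fixed schema and monomials/polynomials are already written down explicitly in the refutation, this is routine. I would then conclude by invoking \cite[Theorem 5.3.2.a]{Rec76} that all {\EF} systems are p-equivalent, so {\extpc} over $\GF(2)$ is p-equivalent to {\EF}; combined with the (already established) direction that {\extpc} p-simulates {\F}'s extension mechanism, this gives the proposition. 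The proof is essentially the Extended-Frege analogue of Proposition~\ref{prop:f-sim-fpc}, with the extension variables of {\extpc} matched one-to-one with the extension axioms of {\EF}.
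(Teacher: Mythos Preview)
Your proposal is correct and follows essentially the same approach as the paper: identify $\GF(2)$-polynomials with Boolean formulas over $\{\oplus,\land,\bar{\phantom{x}}\}$ so that the underlying {\pc} part is a {\F} system (as in Proposition~\ref{prop:f-sim-fpc}), and then observe that each {\extpc} defining equation $y_i=q_i$ is literally an instance of the {\EF} extension axiom $y_i\Leftrightarrow A_i$. The paper's proof is just these two sentences; your version is a faithful expansion with the size bookkeeping made explicit.
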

\begin{proof}
Extension variables can be introduced for any polynomial,
but again these polynomials are Boolean formulas in the basis of $\{\oplus,\land,\bar{\phantom{x}}\}$.
So {\extpc} is an {\EF} system.
\end{proof}


The main theorem of this section is
\begin{theorem}\label{th:extpc-sim-hit}
{\extpc} over a finite field p-simulates {\hit}.
\end{theorem}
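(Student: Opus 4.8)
The plan is to reduce the verification-of-unsatisfiability argument for a hitting formula---which at face value is a model-counting statement over the Boolean cube---to a polynomial identity that {\extpc} can certify after introducing suitable extension variables. Concretely, let $H=D_1\wedge\dots\wedge D_m$ be a {\hit} refutation of the input CNF $F$, and translate each $\oplus$-free clause $D_j$ into its indicator polynomial $p_{D_j}=\prod_{i\in P_j}(1-x_i)\cdot\prod_{i\in N_j}x_i$ as in Prop.~\ref{prop:ns-degree-simulation}. Because $H$ partitions $\{0,1\}^n$ exactly once, the polynomial identity $\sum_{j=1}^m p_{D_j}\equiv 1$ holds over $\mathrm{GF}(2)$ (indeed over any field), and this is the heart of why {\hit} is sound. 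So the task splits into two parts: (i) derive, inside {\extpc}, the equation $\sum_j p_{D_j}=1$ from nothing but the Boolean axioms $x_i^2-x_i$, and (ii) combine this with the fact that each $D_j$ weakens some clause $C_j\in F$ so that $p_{C_j}$ divides $p_{D_j}$ (as polynomials, $p_{D_j}=p_{D_j\setminus C_j}\cdot p_{C_j}$), yielding $\sum_{C\in F}q_C\,p_C=1$ for appropriate $q_C$, which is exactly a {\pc}/{\extpc} refutation of $F$ once we have the $p_C=0$ axioms; step (ii) is the easy bookkeeping already done in Prop.~\ref{prop:ns-degree-simulation}, so the whole weight of the theorem is in step (i).

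For step (i), the naive approach---expand every $p_{D_j}$ into its monomials and cancel---blows up exponentially, since a wide clause contributes $2^{\text{width}}$ monomials, and {\pc} lines are sparse sums of monomials. This is precisely where the Raz--Shpilka-style PIT from the introduction enters, implemented via the layer-merging Lemma~\ref{lem:merging-two-levels}. The idea is to never write the $p_{D_j}$ in full: instead, process the variables $x_1,\dots,x_n$ one at a time, maintaining at stage $t$ a ``layer'' of extension variables $y^{(t)}_1,\dots,y^{(t)}_{r_t}$ such that each $p_{D_j}$, after substituting the Boolean values of $x_1,\dots,x_t$ symbolically, is represented as a fixed \emph{linear} combination of the current layer variables (the layer variables stand for the distinct restricted ``tails'', of which there are at most $m$ many, not $2^t$). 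Merging layer $t$ with the next variable $x_{t+1}$: each tail either starts with $(1-x_{t+1})$, with $x_{t+1}$, or with neither, so the new tails are obtained from products of old-layer variables with $x_{t+1}$, $1-x_{t+1}$, or $1$; introduce extension variables $y^{(t+1)}$ for a spanning set of these (at most $m$ of them, since there are at most $m$ clauses), define them by the degree-one-in-$x_{t+1}$ equations, and use Lemma~\ref{lem:merging-two-levels} to transfer the invariant ``the representations of the $p_{D_j}$ as linear forms are consistent'' across the merge. After $n$ stages the tails are constants and the identity $\sum_j p_{D_j}=1$ has become a purely linear equation over $\mathrm{GF}(2)$ in the final layer of extension variables, which {\pc}---hence {\extpc}---refutes (or derives) in polynomial size by Gaussian elimination. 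All the extension-variable definitions are polynomials of degree at most $2$ (products of a single previous-layer variable with a degree-$\le 1$ form in $x_{t+1}$), so they are legitimate {\extpc} extension axioms, and there are $O(nm)$ of them.

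The main obstacle I expect is making the inductive invariant in step (i) precise and machine-checkable in the {\extpc} proof: we need not merely that the linear-form representations of each $p_{D_j}$ agree \emph{as polynomials}, but that {\extpc} can actually derive, from the extension axioms of layer $t+1$ and the already-derived equations of layer $t$, the corresponding equations of layer $t+1$---this is the content of Lemma~\ref{lem:merging-two-levels}, and the care goes into checking that the linear algebra (choosing a spanning subset, expressing the non-chosen products as linear combinations of chosen ones) is itself carried out with a polynomial-size {\pc} derivation, using only the $x_i^2-x_i$ axioms where needed. A secondary technical point is the field: over $\mathrm{GF}(2)$ we get {\EF} via Prop.~\ref{prop:ef-sim-extpc}, but the construction only uses multilinearity and linear algebra over the field, so it goes through over any finite field, giving the stated generality; one just has to be mindful that $1-x$ versus $x$ and $x+\bar x-1$ are handled uniformly (this is the ``succinct {\nsr}'' viewpoint mentioned earlier). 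Once step (i) is in hand, concatenating it with the Prop.~\ref{prop:ns-degree-simulation} argument and prepending a derivation of the $p_C=0$ axioms from $F$'s clauses finishes the refutation of $F$ in {\extpc} of size polynomial in $m$ and $n$.
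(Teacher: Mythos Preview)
Your proposal is correct and takes essentially the same approach as the paper: translate the hitting clauses into pseudomonomials whose sum is identically $1$, then certify this identity in {\extpc} via the Raz--Shpilka layer-merging PIT (Lemma~\ref{lem:merging-two-levels}), introducing extension variables for a basis of the accumulated products one original variable at a time until only a trivially-checkable linear equation remains. The paper merges ``heads'' (the products over $x_1,\dots,x_t$) rather than ``tails'', and handles your step~(ii) implicitly by working throughout with dual variables $x^R=1-x$, but these are cosmetic differences.
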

We prove it in the next subsection.

\begin{corollary}\label{cor:ef-sim-hit}
{\EF} p-simulates {\hit}.
\end{corollary}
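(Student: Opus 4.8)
The plan is to derive this immediately from Theorem~\ref{th:extpc-sim-hit} together with Proposition~\ref{prop:ef-sim-extpc}. First I would instantiate Theorem~\ref{th:extpc-sim-hit} at the finite field $\GF(2)$: this supplies a polynomial-time algorithm that converts any {\hit} refutation of a CNF $F$ into an {\extpc} refutation over $\GF(2)$ of the polynomial encoding of $F$, of size polynomial in the size of the given {\hit} refutation. Next I would invoke Proposition~\ref{prop:ef-sim-extpc}, which says that {\extpc} over $\GF(2)$ is an {\EF} system: its extension axioms $y_i - q_i = 0$ are Boolean definitions $y_i \Leftrightarrow A$ for formulas $A$ in the basis $\{\oplus, \wedge, \neg\}$, and its {\pc} rules are sound constant-size schemas. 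Finally, since all {\EF} systems are p-equivalent \cite[Theorem 5.3.2.a]{Rec76}, the fixed {\EF} system p-simulates {\extpc} over $\GF(2)$; composing this p-simulation with the transformation above yields a p-simulation of {\hit} by {\EF}.

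The only routine point to verify is the dictionary between a CNF $F$ and the set of polynomials $\{p_C \mid C \in F\}$ appearing in the {\extpc} formulation, and, symmetrically, between a clause-based refutation and a polynomial-based one; this translation is computable in polynomial time and preserves size up to polynomial factors, so it does not affect the conclusion. There is no genuine obstacle in the corollary itself: all the substantive work is already contained in Theorem~\ref{th:extpc-sim-hit}, and the corollary is merely the observation that $\GF(2)$ is a legitimate choice of finite field for that theorem and that {\extpc} over $\GF(2)$ coincides, up to p-equivalence, with {\EF}.
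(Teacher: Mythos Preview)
Your proposal is correct and matches the paper's own proof, which simply cites Theorem~\ref{th:extpc-sim-hit} and Proposition~\ref{prop:ef-sim-extpc}; you have merely spelled out the composition and the p-equivalence of {\EF} systems explicitly.
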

\begin{proof}
Follows from Theorem~\ref{th:extpc-sim-hit} and Prop.~\ref{prop:ef-sim-extpc}.
\end{proof}

\begin{corollary}\label{cor:ef-sim-hitres}
{\EF} p-simulates {\hitres}.
\end{corollary}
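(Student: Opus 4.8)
The plan is to reduce \hitres{} to \hit{} plus \res{}, both of which we already know are p-simulated by \EF{} (the former via Corollary~\ref{cor:ef-sim-hit}, the latter because \EF{} is the strongest of the standard propositional systems and p-simulates \res{} trivially). More precisely, I would argue that a \hitres{} refutation of a CNF $F$ of size $s$ can be transformed, with only a polynomial blow-up, into an \EF{} refutation of $F$. The key structural observation is that each line of a \hitres{} derivation is a clause, and each inference step is either a weakening step or a ``hitting resolution'' step
\[
\frac{C_1\lor H_1,\ \dotsc,\ C_k\lor H_k}{C_1\lor\dotsb\lor C_k}
\]
where $H_1\land\dotsb\land H_k$ is an unsatisfiable hitting formula. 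Weakening steps are immediate in \EF{} (indeed in any Frege system, since $C \supset C \lor D$ is a tautology). So the whole content of the simulation is to justify a single hitting resolution step.

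For a fixed hitting resolution step, I would proceed as follows. Let $G := H_1 \land \dots \land H_k$, an unsatisfiable hitting formula over the variables appearing in the $H_i$. Note that $G$ is itself refuted by \hit{} trivially (it is an unsatisfiable hitting formula that is a weakening of itself), so by Corollary~\ref{cor:ef-sim-hit} there is a polynomial-size \EF{} refutation of $G$, i.e.\ an \EF{} proof of $\olnot{H_1}\lor\dotsb\lor\olnot{H_k}$ (writing $\olnot{H_i}$ for the term/conjunction of literals that is the negation of the clause $H_i$). Now assume $\EF{}$ has already derived the premises $C_1\lor H_1,\ \dots,\ C_k\lor H_k$. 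Reasoning in \EF{}: for each $i$, $C_i \lor H_i$ together with $\olnot{H_i}$ yields $C_i$; combining over the disjunction $\bigvee_i \olnot{H_i}$ (which \EF{} has just proved) by a case analysis, in the case where $\olnot{H_i}$ holds we obtain $C_i$ and hence $C_1\lor\dotsb\lor C_k$. Since this disjunction of cases is exhaustive, \EF{} derives $C_1\lor\dotsb\lor C_k$. This case analysis is a constant-depth propositional manipulation whose size is polynomial in $k$, the sizes of the $C_i$, and the size of the \EF{} refutation of $G$; all of these are polynomial in $s$. Performing this for each of the at most $s$ inference steps of the \hitres{} refutation and concatenating the resulting \EF{} derivations gives an \EF{} refutation of $F$ of size $\mathrm{poly}(s)$.

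The main obstacle — such as it is — is bookkeeping rather than conceptual: one must be careful that the polynomial-size \EF{} proof of $\olnot{H_1}\lor\dotsb\lor\olnot{H_k}$ guaranteed by Corollary~\ref{cor:ef-sim-hit} is over the right set of variables (the variables occurring in the $H_i$), and that splicing many such sub-derivations together, each introducing its own extension variables for its own hitting subformula, does not cause name collisions or a super-polynomial accumulation of extension axioms; renaming extension variables to be fresh per step handles this, and the total count stays polynomial because each step contributes only polynomially many. One should also check that the ``$p$-'' in p-simulation is genuinely satisfied, i.e.\ that the transformation is computable in polynomial time; this is clear since Theorem~\ref{th:extpc-sim-hit} (hence Corollary~\ref{cor:ef-sim-hit}) is itself a p-simulation, and the gluing described above is an explicit polynomial-time procedure. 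Alternatively, and perhaps more cleanly, one can observe that \hitres{} is p-simulated by \hitk{k} for a suitable $k$, or directly run the argument of Theorem~\ref{th:extpc-sim-hit} on the \hitres{} derivation treating each hitting resolution step as a PIT-checked algebraic identity; but the modular reduction to Corollary~\ref{cor:ef-sim-hit} above seems the most transparent route.
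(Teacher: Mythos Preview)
Your proposal is correct and follows essentially the same approach as the paper: simulate each hitting-resolution step by invoking Corollary~\ref{cor:ef-sim-hit} on the hitting subformula $H_1\land\dotsb\land H_k$, then combine with the premises $C_i\lor H_i$. The paper phrases the combination slightly differently---working in {\extres} and observing that weakening each axiom $H_i$ to $C_i\lor H_i$ in the {\extres} refutation of the $H_i$'s turns that refutation into a derivation of (a subclause of) $C_1\lor\dotsb\lor C_k$---but this is equivalent to your case analysis in {\EF}.
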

\begin{proof}
We show how {\EF} simulates a single step of {\hitres} refutation that uses a hitting formula $H_1\land\dotsb\land H_k$:
\[
\frac{C_1\lor H_1,\ \dotsc,\ C_k\lor H_k}{C_1\lor\dotsb\lor C_k}.
\]
Since {\EF} is p-equivalent to {\extres}, one can construct in polynomial time an {\extres} refutation $H_1,\dotsc,H_k\vdash^* \bot$ by Corollary~\ref{cor:ef-sim-hit}.
Observe that if we weaken the premises to $C_1\lor H_1$, \ldots, $C_k\lor H_k$,
then this refutation turns into a derivation of a subset of $C_1\lor\dotsb\lor C_k$.
One can now combine the simulations of all steps into a single {\extres} refutation.
\end{proof}

\subsection{Proof of Theorem~\ref{th:extpc-sim-hit}: {\extpc} p-simulates {\hit}}
We have a hitting formula $H=\{C_1,\dotsc,C_m\}$, translate it into a system of polynomial
equations $\{h_i=0\}$ and want to construct an {\extpc} refutation of this system. 
In fact, $\sum_ih_i \equiv 1$ as polynomials (in what follows, we use the notation
$\equiv$ for the equality of polynomials). 
This is certainly true pointwise on $\{0,1\}^n$, 
these polynomials are multilinear,
and thus $\sum_ih_i$ is identical to~$1$.
It remains to derive this fact in {\extpc}.

We translate formulas in CNF to systems of polynomial equations
using the dual variables as in {\pcr} of \cite{ABSRW02}: for every variable $x$,
we introduce a variable $x^R$ along with the axiom $x+x^R-1=0$.
Thus every clause $C_i=\ell_1\lor\dotsb\lor \ell_k$ 
is represented by a monomial $m_i=\ell_1\dotsb\ell_k$:
every negative literal of $C_i$ is translated to its variable,
and every positive literal is translated to the dual variable.
In the proof below, we ignore these formalities and speak
in the terms of $x$ and $1-x$ instead of $x$ and $x^R$.
We switch between these two representations ($x^R$ and $1-x$)
locally when needed (in particular, we never switch to the $1-x$
representation for more than two variables in a monomial,
and switch back to $x^R$ as soon as we are done with the respective step).

As mentioned in the introduction, our approach is based on the Raz--Shpilka deterministic polynomial identity testing
for noncommutative circuits~\cite{RS05}. 
The main idea is to introduce new variables for quadratic polynomials:
it suffices to do it for a basis. Namely, at the first step we consider
two variables $x_1$ and $x_2$ and quadratic polynomials
(potentially, 
$(1-x_1)(1-x_2)$,
$(1-x_1)x_2$, 
$x_1(1-x_2)$, 
$x_1x_2$, 
$1-x_1$, 
$x_1$, 
$1-x_2$, and 
$x_2$)
appearing in the monomials $m_i$ as $x_1^Rx_2^R$, $x_1^Rx_2$, etc., and replace them using linear combinations
of new variables $y^{1,2}_i$,
thus decreasing the degree by one.
At the next steps we treat all the variables $y^{1,2}_i$ as a single ``layer'' 
(note that they are \emph{not} multiplied by each other).
We merge this layer of $y^{1,2}_i$
with $x_3$, getting a layer of variables $y^{1,2,3}_j$, and so on, 
until we reach a linear equation, which is easy to verify.

In order to implement this strategy,
we prove a lemma allowing to merge two layers of variables.
This lemma holds over any field $\mathbb{F}$.
\begin{lemma} \label{lem:merging-two-levels}
Let $\vec{x},\vec{y},\vec{z}$ be three disjoint vectors of variables. Suppose that $P_i(\vec{x}),Q_i(\vec{y}),R_i(\vec{z})$ are polynomials satisfying
\begin{equation} \label{eq:PQR}
 \sum_{i=1}^t P_i(\vec{x}) Q_i(\vec{y}) R_i(\vec{z}) \equiv 0.
\end{equation}
Let $W_1,\dotsc,W_k\in\mathbb{F}[\vec{x},\vec{y}]$, where $k \leq t$, be a basis for $\{P_i(\vec{x}) Q_i(\vec{y}) \mid i \in [t]\}$. In particular, let $S_i(\vec w)=\sum_{j=1}^k \sigma_{ij} w_j$ be the expansion of $P_i(\vec{x}) Q_i(\vec{y})$ in this basis, that is, $P_i(\vec{x}) Q_i(\vec{y})=S_i(\vec W(\vec{x},\vec{y}))$. Then
\begin{equation} \label{eq:SR}
 \sum_{i=1}^t S_i(\vec{w}) R_i(\vec{z}) \equiv 0.
\end{equation}
\end{lemma}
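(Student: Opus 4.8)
The plan is to substitute the basis expansion of each product $P_i(\vec{x})Q_i(\vec{y})$ into the hypothesis~\eqref{eq:PQR}, and then to exploit that the variables $\vec{z}$ are disjoint from $\vec{x},\vec{y}$ in order to pass to coefficient-wise identities. Conceptually this lemma is exactly the ``degree-cutting'' step alluded to above: it lets us replace the two bottom layers of variables by a single layer of fresh variables $\vec{w}$ while preserving the vanishing of the sum.

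First I would rewrite the hypothesis. Using $P_i(\vec{x})Q_i(\vec{y})=\sum_{j=1}^k\sigma_{ij}W_j(\vec{x},\vec{y})$ and substituting into~\eqref{eq:PQR},
\[ 0 \equiv \sum_{i=1}^t \Bigl(\sum_{j=1}^k \sigma_{ij}W_j(\vec{x},\vec{y})\Bigr)R_i(\vec{z}) = \sum_{j=1}^k W_j(\vec{x},\vec{y})\,A_j(\vec{z}), \qquad\text{where } A_j(\vec{z}):=\sum_{i=1}^t\sigma_{ij}R_i(\vec{z})\in\mathbb{F}[\vec{z}]. \]

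Second, the key step: I claim each $A_j\equiv 0$. This rests on the general observation that if $W_1,\dots,W_k\in\mathbb{F}[\vec{x},\vec{y}]$ are linearly independent over $\mathbb{F}$ and $B_1,\dots,B_k\in\mathbb{F}[\vec{z}]$ with $\vec{z}$ disjoint from $\vec{x},\vec{y}$, then $\sum_j W_jB_j\equiv 0$ forces $B_j\equiv 0$ for all $j$. To prove this, expand each $B_j$ as an $\mathbb{F}$-linear combination $B_j=\sum_\mu b_{j\mu}\mu$ over the finitely many monomials $\mu$ in $\vec{z}$ that occur; then $\sum_j W_jB_j=\sum_\mu\bigl(\sum_j b_{j\mu}W_j\bigr)\mu$. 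Since every monomial of $\mathbb{F}[\vec{x},\vec{y},\vec{z}]$ factors uniquely as a monomial in $\vec{x},\vec{y}$ times a monomial in $\vec{z}$, this polynomial vanishes iff $\sum_j b_{j\mu}W_j\equiv 0$ in $\mathbb{F}[\vec{x},\vec{y}]$ for every $\mu$; linear independence of the $W_j$ then forces $b_{j\mu}=0$ for all $j,\mu$, i.e.\ $B_j\equiv 0$. Applying this with $B_j=A_j$ gives each $A_j\equiv 0$.

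Finally I would assemble~\eqref{eq:SR}. Since $S_i(\vec{w})=\sum_{j=1}^k\sigma_{ij}w_j$,
\[ \sum_{i=1}^t S_i(\vec{w})\,R_i(\vec{z}) = \sum_{j=1}^k w_j\Bigl(\sum_{i=1}^t\sigma_{ij}R_i(\vec{z})\Bigr) = \sum_{j=1}^k w_j\,A_j(\vec{z})\equiv 0, \]
which is precisely~\eqref{eq:SR}. I do not expect a genuine obstacle here: the argument is linear algebra combined with unique factorization of monomials, and it works over an arbitrary field $\mathbb{F}$ as claimed. The only point requiring care is to phrase the disjoint-variable step through the free-module / unique-monomial-factorization fact, rather than by ``plugging in values for $\vec{z}$'', which would implicitly require an infinite field.
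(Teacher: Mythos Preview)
Your proof is correct. Both your argument and the paper's define the same auxiliary polynomials $A_j(\vec{z})=\sum_i\sigma_{ij}R_i(\vec{z})$ (the paper calls them $T_j$) and reduce the lemma to showing each $A_j\equiv 0$; the difference lies in how this vanishing is established. The paper argues by contradiction via point evaluation: if some $T_{j^*}\not\equiv 0$, pick a point $\zeta$ with $T_{j^*}(\zeta)\neq 0$ (passing to a large enough extension field when $T_{j^*}$ is not multilinear), substitute $\vec{z}\mapsto\zeta$ and $\vec{w}\mapsto\vec{W}(\vec{x},\vec{y})$, and obtain a linear dependence among the $W_j$'s. You instead use the formal tensor-product structure $\mathbb{F}[\vec{x},\vec{y},\vec{z}]\cong\mathbb{F}[\vec{x},\vec{y}]\otimes_{\mathbb{F}}\mathbb{F}[\vec{z}]$: expanding the $A_j$ over $\vec{z}$-monomials and reading off coefficients immediately forces each $A_j=0$ by linear independence of the $W_j$. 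Your route is more elementary---it avoids the multilinear/non-multilinear case split and the field-extension step entirely, and works uniformly over any $\mathbb{F}$---while the paper's evaluation argument is perhaps more intuitive and closer in spirit to the Raz--Shpilka PIT that motivates the lemma. Your final remark about avoiding ``plugging in values'' is exactly on point: that is precisely what the paper does, with the extension-field trick as the patch for finite $\mathbb{F}$.
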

\begin{proof}
Let $T_j(\vec z)=\sum_{i=1}^t \sigma_{ij}R_i(\vec z)$.
Then 
\[
\sum_{j=1}^k w_jT_j(\vec z) = \sum_{i=1}^t\sum_{j=1}^k \sigma_{ij}w_jR_i(\vec z)=\sum_{i=1}^t S_i(\vec w)R_i(\vec z)
\]
is the polynomial that we are proving to be identically zero.
Assuming the contrary, we conclude that for some $j=j^*$, the polynomial
$T_{j^*}$ is not identically zero.

If $T_{j^*}$ would be multilinear (as it is in our applications, where all $R_i$'s are linear), 
that would already
be enough to reach a contradiction: there should be some vector $\vec\rho$ of 0/1-values 
such that $T_{j^*}(\vec\rho)\neq 0$. Let us substitute $\vec\rho$ for $\vec z$
and $\vec W(\vec x,\vec y)$ for $\vec w$ in (\ref{eq:SR}).
Under this substitution, (\ref{eq:SR}) and (\ref{eq:PQR}) turn into the same equation,
which shows a linear dependency of $W_j$'s
contradicting the fact that $\{w_j \mid j\in[k]\}$ is a basis. 

In order to prove the statement without the multilinearity condition,
choose an extension field that is large enough so that we could choose
a vector $\zeta$ of values in this field such that $T_{j^*}(\zeta)\neq 0$,
and perform the same substitution obtaining a linear dependency
of $W_j$'s over the extension field and hence in $\mathbb{F}$.
\end{proof}

With this lemma at hand, we are ready to prove the simulation
theorem.

\begin{proof}[Proof of Theorem~\ref{th:extpc-sim-hit}]
We consider a hitting formula $H=\{C_1,\dotsc,C_m\}$
and translate each its clause $C_j\in H$
into a product $\prod_{i=1}^n P^i_j(x_i)$,
where 
\[
P^i_j(x_i) := 
\begin{cases}
1,    &\textrm{if $x_i$ does not occur in $C_j$,}\\
x_i,  &\textrm{if $x_i$ occurs in $C_j$ negatively,}\\
1-x_i,&\textrm{if $x_i$ occurs in $C_j$ positively.}
\end{cases}
\]
We call this product a monomial, because in {\extpc}
it can be represented using dual variables $x_i^R=1-x_i$.
We are going to refute $\sum_{j=1}^m \prod_{i=1}^n P^i_j(x_i)$ in {\extpc},
namely, we derive the polynomial 1 from it.

Consider the vector space spanned by the set $\{P^1_j(x_1) P^2_j(x_2) \mid j\in[m]\}$.
We can find a basis $\{Y^{1,2}_i(x_1,x_2) \mid i\in[r]\}$
and introduce extension variables for its polynomials,
$y^{1,2}_i=Y^{1,2}_i(x_1,x_2)$.
We then consider the linear functions $P^{1,2}_j(Y^{1,2}(x_1,x_2))$ giving 
the expansion of $P^1_j(x_1) P^2_j(x_2)$
over this basis, and we can derive in {\extpc} that
$P^{1,2}_j(\vec y^{1,2})-P^1_j(x_1) P^2_j(x_2)=0$.

Recall that 
\begin{equation}\label{eq:extpc-sim-start}
\sum_{j=1}^m \prod_{i=1}^n P^i_j(x_i)
\end{equation}
is identically $1$,
because this is a multilinear polynomial that equals $1$ pointwise on $\{0,1\}^n$
($2^n$ values uniquely define $2^n$ coefficients of the multilinear polynomial).
Then Lemma~\ref{lem:merging-two-levels} shows that
\begin{equation}\label{eq:extpc-sim-onelevel}
\sum_{j=1}^m P^{1,2}_j(\vec y^{1,2})\prod_{i=3}^n P^i_j(x_i)
\end{equation} 
is also identically $1$.

Since the new variables are not multiplied by each other in our monomials,
we can continue this process merging the variables $\vec y^{1,2}$ with $x_3$,
then the new variables $\vec y^{1,2,3}$ with $x_4$, and so on,
until we merge all variables into $\vec y^{[n]}$.
That is, we eventually arrive at 
\begin{equation}\label{eq:extpc-sim-finish}
\sum_{j=1}^m P^{[n]}_j(\vec y^{[n]})
\end{equation}
for a \emph{linear} function $P^{[n]}_j$.
This linear polynomial is also identically $1$.

It is easy to see that {\extpc} proves efficiently that all these polynomials
are equivalent (switching between dual variables and their definitions whenever needed
within two layers of variables), 
in particular, it derives efficiently (\ref{eq:extpc-sim-finish})
from (\ref{eq:extpc-sim-start}).

Since (\ref{eq:extpc-sim-finish}) is a linear polynomial that is identically 1,
it has all zero coefficients except for the free term that is equal to $1$.
\end{proof}

The proof of Theorem~\ref{th:extpc-sim-hit} can be used for
proving in {\extpc} similar statements about multilinear polynomials
that use dual variables. In particular, it can be used for 
simulating {\hitodd} and {\hitk{k}}. 

\begin{proposition}\label{prop:hitodd-verification}
{\hitodd} proofs can be verified in deterministic polynomial time.
{\extpc} over $\GF(2)$ p-simulates {\hitodd}.
In particular, {\EF} p-simulates {\hitodd}.
\end{proposition}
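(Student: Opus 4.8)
The plan is to reduce the verification task to a polynomial identity test over $\GF(2)$ that is exactly the one handled by the Raz--Shpilka-style layer-merging procedure from the proof of Theorem~\ref{th:extpc-sim-hit}, and then to reuse essentially the same construction for the {\extpc} simulation.

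First I would translate each clause $C_i$ of $H$ into its indicator monomial $p_{C_i}=\prod_{j\in P_i}(1-x_j)\cdot\prod_{j\in N_i}x_j$, exactly as in Proposition~\ref{prop:ns-degree-simulation}; it is multilinear and $p_{C_i}(\alpha)=1$ precisely when $\alpha$ falsifies $C_i$. Working over $\GF(2)$, the quantity $\sum_i p_{C_i}(\alpha)$ is the parity of the number of clauses of $H$ falsified by $\alpha$, so $H$ is an \emph{unsatisfiable} odd hitting formula if and only if every $\alpha\in\{0,1\}^n$ falsifies an odd (hence nonzero) number of its clauses, i.e.\ if and only if $\sum_i p_{C_i}(\alpha)\equiv 1$ for every $\alpha$; since $\sum_i p_{C_i}$ is multilinear this is equivalent to the polynomial identity $1+\sum_i p_{C_i}\equiv 0$ over $\GF(2)$. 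Verifying a {\hitodd} proof thus amounts to this identity test together with the trivial check that every clause of $H$ is a weakening of a clause of $F$.

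To run the identity test in deterministic polynomial time I would apply the layer-merging procedure of the proof of Theorem~\ref{th:extpc-sim-hit} to $1+\sum_i p_{C_i}$, with the constant $1$ viewed as the monomial in which every factor $P^i_j$ equals $1$: after processing $x_1,x_2$ we compute, by Gaussian elimination over $\GF(2)$, a basis of the at-most-four-dimensional space spanned by the quadratic pieces $P^1_j(x_1)P^2_j(x_2)$, introduce a bounded number of new formal variables for it, and rewrite the sum with its degree dropped by one; iterating over $x_3,x_4,\dots$ we reach after $n$ rounds a \emph{linear} polynomial whose identity to $0$ is read off from its coefficients. Each round keeps the number of outer summands at most $m+1$, the per-layer dimension bounded, and hence every intermediate expression of size $O(mn)$, so the procedure is polynomial-time; and by Lemma~\ref{lem:merging-two-levels} (its hypothesis together with the trivial back-substitution $\vec w:=\vec W(\vec x,\vec y)$ gives an \emph{equivalence}) each round preserves the property of being identically zero in both directions, so the final linear polynomial vanishes iff $1+\sum_i p_{C_i}$ does. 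As in Theorem~\ref{th:extpc-sim-hit} we stay in the multilinear case throughout, so no field extension is needed.

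For the simulation, given a {\hitodd} refutation $H$ of $F$ I would fix for each $C\in H$ a clause $C'\subseteq C$ of $F$ and set $q_D=\sum_{\{C\in H\mid C'=D\}}p_{C\setminus C'}$, so that $\sum_{D\in F}p_Dq_D=\sum_{C\in H}p_C$, exactly as in Proposition~\ref{prop:ns-degree-simulation}; since $H$ is an odd hitting formula this sum is identically $1$ over $\GF(2)$ by the reasoning above. Its derivation in {\extpc} over $\GF(2)$ is precisely the layer-merging derivation of Theorem~\ref{th:extpc-sim-hit}: the only ingredients there were Lemma~\ref{lem:merging-two-levels}, valid over every field, and the fact that the relevant sum of monomials is identically $1$, which for {\hitodd} is supplied over $\GF(2)$ by the odd-hitting property in place of the field-independent partition property used for {\hit}. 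This gives a polynomial-size {\extpc}-over-$\GF(2)$ refutation of $F$, constructible in polynomial time, and then {\EF} p-simulates {\hitodd} by Proposition~\ref{prop:ef-sim-extpc}. The one point that needs real care — beyond transporting the machinery of Theorem~\ref{th:extpc-sim-hit} — is that the layer merging genuinely \emph{certifies} non-identities (this is where the two-sided equivalence of Lemma~\ref{lem:merging-two-levels} is used) and that its intermediate expressions stay polynomial-size at every round; the conceptually new observation is just that over $\GF(2)$ the odd-hitting condition is \emph{exactly} the polynomial identity $\sum_i p_{C_i}\equiv 1$, which is why the already-available deterministic PIT applies verbatim.
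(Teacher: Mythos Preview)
Your proposal is correct and takes essentially the same approach as the paper, which simply notes that the proof of Theorem~\ref{th:extpc-sim-hit} works over $\GF(2)$; you have spelled out exactly why, namely that over $\GF(2)$ the unsatisfiable odd-hitting condition is precisely the identity $\sum_i p_{C_i}\equiv 1$, so the same layer-merging PIT applies verbatim both for verification and for the {\extpc} derivation. Your added remark that Lemma~\ref{lem:merging-two-levels} together with back-substitution gives a two-sided equivalence (needed to certify non-identities during verification) is a useful clarification that the paper leaves implicit.
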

\begin{proof}The proof of Theorem~\ref{th:extpc-sim-hit} works in particular over $\GF(2)$.\end{proof}
This argument allows us to verify \emph{unsatisfiable} odd hitting formulas.
However, a similar technique also makes it possible to check arbitrary formulas
for being odd hitting.
\begin{proposition}\label{prop:hitodd-formula-verification}
Given a formula in CNF, it can be checked in deterministic polynomial time
whether $F$ is an odd hitting formula.
\end{proposition}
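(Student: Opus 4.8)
The plan is to reduce ``$F$ is an odd hitting formula'' to $m$ instances of polynomial identity testing of exactly the kind handled by the layer-merging argument behind Theorem~\ref{th:extpc-sim-hit}.

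Write $F=C_1\wedge\dots\wedge C_m$ over variables $x_1,\dots,x_n$ and recall the polynomials $p_{C_i}$: on $\{0,1\}^n$ we have $p_{C_i}(x)=1$ exactly when $x$ falsifies $C_i$, so $k(x):=\sum_{i=1}^m p_{C_i}(x)$ is the number of clauses of $F$ falsified by $x$, and $x$ satisfies $F$ iff $k(x)=0$. Hence $F$ is odd hitting iff every $x$ with $k(x)$ even has $k(x)=0$, i.e.\ iff for every $i\in[m]$ and every $x$ one has $[\,k(x)\text{ even}\,]\cdot p_{C_i}(x)=0$. I would now move to $\GF(2)$: the Boolean value $[\,k(x)\text{ even}\,]$ equals $1\oplus\bigoplus_{j=1}^m p_{C_j}(x)$, so the requirement for a fixed $i$ is the identity $\bigl(1\oplus\bigoplus_j p_{C_j}\bigr)\,p_{C_i}\equiv 0$ over $\GF(2)$ on $\{0,1\}^n$. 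Using that $p_{C_i}^2\equiv p_{C_i}$ on the cube, and that for $i\ne j$ the product $p_{C_i}p_{C_j}$ is $\equiv 0$ when $C_i$ and $C_j$ conflict and is $\equiv p_{C_i\cup C_j}$ otherwise (where $C_i\cup C_j$ is the clause whose literal set is the union, its falsifying set being the intersection of the two falsifying subcubes), this collapses to
\[
 \bigoplus_{\substack{j\ne i\\ C_i,\,C_j\text{ non-conflicting}}} p_{C_i\cup C_j}\ \equiv\ 0\qquad\text{over }\GF(2)\text{ on }\{0,1\}^n .
\]
Thus $F$ is an odd hitting formula if and only if all $m$ of these identities (one for each $i$) hold.

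Each such identity has precisely the form treated in the proof of Theorem~\ref{th:extpc-sim-hit}: its left-hand side is a sum of at most $m-1$ ``monomials'' $p_{C_i\cup C_j}=\prod_{k=1}^n P^k_j(x_k)$, where each factor $P^k_j(x_k)\in\{1,x_k,1+x_k\}$ is univariate and linear, taken in a fixed variable order. I would decide it with the layer-merging procedure of Lemma~\ref{lem:merging-two-levels}: repeatedly replace the first two variable layers by one new layer of variables indexing a basis of the span of the partial products (which has dimension at most $m$); by the lemma this preserves the identity, and substituting the basis polynomials back for the new variables gives the converse, so each merge is an equivalence. After $n-1$ merges only one layer of variables remains and the expression is linear, so one just checks that all its coefficients are zero. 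Every merge is linear algebra over $\GF(2)$, so the test runs in time polynomial in $m$ and $n$; performing it for all $i\in[m]$ is still polynomial time, which proves the proposition.

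The step that needs care is the $\GF(2)$ bookkeeping: that the predicate ``$k(x)$ is even'' is captured by the single $\GF(2)$ polynomial $1\oplus\bigoplus_j p_{C_j}$, and that the quadratic products $p_{C_i}p_{C_j}$ collapse to single read-once monomials $p_{C_i\cup C_j}$ (or vanish), so that each of the $m$ resulting identities is again a short sum of read-once products and therefore stays within the scope of the noncommutative-PIT argument. Note that a single $\GF(2)$ identity in all the $p_{C_i}$ simultaneously does not suffice --- for instance $[\,k\text{ even}\,]\cdot k(x)$ vanishes modulo $2$ for a trivial reason --- which is why the per-clause splitting is essential. The only other point to flag is that Lemma~\ref{lem:merging-two-levels}, though stated one-directionally, yields an equivalence after substituting the new variables by the basis polynomials, hence a genuine decision procedure and not merely a verifier.
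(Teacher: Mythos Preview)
Your proof is correct and takes essentially the same approach as the paper: reduce the odd-hitting test to $m$ per-clause $\GF(2)$ identities of sums of pseudomonomials, each checkable by the layer-merging PIT of Theorem~\ref{th:extpc-sim-hit}. The only cosmetic difference is that the paper phrases the per-clause reduction as ``substitute the partial assignment falsifying $C_i$ into $F$ and drop $C_i$'' rather than your algebraic ``multiply by $p_{C_i}$ and simplify,'' but these yield the same identity (your $\bigoplus p_{C_i\cup C_j}$ is $p_{C_i}$ times the paper's sum, and the common factor involves disjoint variables).
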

\begin{proof}
We need to check that there is no falsifying assignment that falsifies
an even number of clauses. For each clause $C\in F$, substitute
the negation of $C$ as an assignment into $F$, and drop the identically false
clause resulting from substituting $\bar{C}$ into $C$; denote this formula $F_C$. 
Then check that falsifying assignments falsify an even number of clauses of $F_C$
by verifying the identity $\sum_i M_i \equiv 0$ over $\GF(2)$, where $M_i$'s are pseudomonomials (non-negative juntas)
corresponding to the clauses of $F_C$. If for some $C$ the identity is false,
then $F$ had a falsifying assignment that satisfied an even number of clauses.
\end{proof}

\begin{theorem}\label{th:hitk-simulation}
{\hitk{k}} proofs can be verified in deterministic polynomial time.
{\extpc} over a finite field p-simulates {\hitk{k}}.
In particular, {\EF} p-simulates {\hitk{k}}.
\end{theorem}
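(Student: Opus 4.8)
The plan is to follow the same two-pronged strategy that worked for {\hitodd} in Proposition~\ref{prop:hitodd-verification}: reduce both the verification question and the {\extpc}-simulation question to a polynomial-identity-testing statement about multilinear polynomials written with dual variables, and then invoke the machinery of Theorem~\ref{th:extpc-sim-hit} (i.e.\ Lemma~\ref{lem:merging-two-levels} applied layer by layer). The only genuinely new ingredient compared to {\hit} and {\hitodd} is that a hitting-$k$ formula does not give $\sum_j m_j \equiv 1$ or $\sum_j m_j \equiv 0 \pmod 2$; instead each falsifying assignment is covered between $1$ and $k$ times, so the natural object is the inclusion--exclusion combination of the pseudomonomials.

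For verification, first I would check in polynomial time that $H$ is a hitting-$k$ formula at all. For every $(k{+}1)$-subset $\{C_{i_0},\dots,C_{i_k}\}$ of clauses, the set of assignments falsifying all of them is the restriction obtained by conjoining the negations $\olnot{C_{i_j}}$ (as partial assignments); this is nonempty iff the partial assignments are jointly consistent, which is immediate to test. A hitting-$k$ formula is precisely one for which every such $(k{+}1)$-subset is inconsistent, and for fixed $k$ there are only $O(m^{k+1})$ subsets, so this is polynomial time. Next, to certify that the hitting-$k$ formula $H = \bigwedge_j C_j$ is \emph{unsatisfiable}, I would use the identity: $H$ is unsatisfiable iff every assignment in $\{0,1\}^n$ falsifies at least one clause, i.e.\ iff the multilinear polynomial
\[
\sum_{\emptyset \neq I \subseteq [m],\ |I| \le k}\ (-1)^{|I|+1} \prod_{j \in I} m_j
\]
is identically $1$, where $m_j$ is the pseudomonomial (product of $x_i$ and $x_i^R=1-x_i$) associated with $C_j$, and the sum ranges only over subsets $I$ whose clauses are pairwise non-contradictory on their common variables --- any other product is the zero polynomial, and in a hitting-$k$ formula there are only $O(m^k)$ nonzero terms, each a pseudomonomial of degree at most $kn$. (Here I use that for a hitting-$k$ formula the pointwise inclusion--exclusion count $\sum_{|I|\le k,\ \cap_{j\in I} S_j \ni \alpha}(-1)^{|I|+1}$ equals $1$ exactly when $\alpha$ is covered, since the covering multiplicity of $\alpha$ is in $\{0,1,\dots,k\}$ and $\sum_{\ell=1}^{t}(-1)^{\ell+1}\binom{t}{\ell}=1$ for $1\le t\le k$, while it is $0$ when $t=0$.) The left-hand side is a sum of polynomially many pseudomonomials in dual variables, so by the PIT procedure behind Theorem~\ref{th:extpc-sim-hit} --- merge $x_1,x_2$ into a layer of extension variables, then merge in $x_3$, and so on via Lemma~\ref{lem:merging-two-levels} until a linear polynomial remains --- this identity is decidable in deterministic polynomial time.

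For the {\extpc}-simulation, the same computation doubles as the refutation: run the layer-merging argument of the proof of Theorem~\ref{th:extpc-sim-hit} on the polynomial displayed above, deriving step by step that it is equivalent to a linear polynomial identically equal to $1$, hence deriving the constant $1$; combining this with the input axioms $m_j = 0$ (one per clause $C_j$ of $H$, each a weakening of a clause of $F$) and the Boolean axioms gives the {\extpc} refutation of $F$. Since $k$ is constant the number of nonzero inclusion--exclusion terms is polynomial and degrees stay bounded by $kn$, so the refutation has polynomial size. The {\EF} consequence then follows from Proposition~\ref{prop:ef-sim-extpc}. I expect the main obstacle to be bookkeeping the inclusion--exclusion correctly --- in particular making sure that (i) products over contradictory subsets are genuinely zero so they can be dropped without affecting the identity, and (ii) the claimed pointwise identity of the inclusion--exclusion sum really is $1$ on all of $\{0,1\}^n$ and not merely on the covered assignments, which is exactly where the hitting-$k$ hypothesis (covering multiplicity $\le k$) is used, and where a plain union bound as in the {\hit} case no longer suffices.
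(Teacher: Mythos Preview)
Your proposal is correct and follows essentially the same route as the paper: reduce to the inclusion--exclusion identity $\sum_{\emptyset\neq I}(-1)^{|I|+1}\prod_{j\in I}m_j\equiv 1$, observe that the hitting-$k$ hypothesis kills all terms with $|I|>k$ so only $O(m^k)$ pseudomonomials remain, and then feed this into the layer-merging PIT of Theorem~\ref{th:extpc-sim-hit}. The paper's proof is terser (it does not spell out the separate check that $H$ is hitting-$k$ via $(k{+}1)$-subsets, nor the pointwise computation $\sum_{\ell=1}^t(-1)^{\ell+1}\binom{t}{\ell}=1$), but the argument is the same.
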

\begin{proof}
For a hitting-$k$ formula $\bigwedge_{i=1}^m T_i$, by the inclusion-exclusion formula
\[
 \sum_{\emptyset \neq I \in 2^{[m]}} (-1)^{|I|+1} \prod_{i \in I} T_i = 1,
\]
where we abuse the notation by identifying a clause and its {\pcr} translation 
into a monomial that uses dual variables.
Note that the terms containing more than $k$ clauses $T_i$'s are zeros.

Now we can proceed by analogy with the proof of Theorem~\ref{th:extpc-sim-hit}
and Corollary~\ref{cor:ef-sim-hit}.
\end{proof}

\subsection{Bonus: succinct proofs and efficient verification of static (semi)algebraic proof systems}\label{sec:bonus}

The proof of Theorem~\ref{th:extpc-sim-hit} does not just provide an {\extpc} proof,
it provides a deterministic polynomial-time verification procedure for
polynomial identity testing for multilinear identities modulo $x+\bar{x}=1$. It can be used in other settings, for example,
for verifying proofs in static (semi)algebraic systems.

Historically, deterministic polynomial-time verification
of proofs in such systems has not been a major concern, because proving a superpolynomial lower bound for such a
system implies a lower bound for a variety of systems that emerge from
supplementing the basic static system with additional means of verification,
for example, the axioms of the polynomial ring as in \cite{GH03}.
However, an increased interest to automated search for sum-of-squares-based proofs
reveals the need for an efficient deterministic formal proof verification procedure. 
A typical static proof constitutes a formal combination
of polynomials including non-negative juntas written in a formal way
(without opening the parentheses). 
To verify such a proof one needs to check 
that this polynomial is identical to a constant.
Opening the parentheses would not work as it would produce
far too many monomials.

Fortunately, the proof of Theorem~\ref{th:extpc-sim-hit} demonstrates 
that any multilinear polynomial identity using dual variables over a finite field
can be verified efficiently. In this subsection we show how to use this idea
for static proof checking.

Verifying {\ns} proofs is easy: it suffices to open parentheses in products
of two polynomials, each of them being represented as a sum of monomials with coefficients.
Alekhnovich et al \cite{ABSRW02} suggested using \emph{dual} variables in {\pcr},
essentially adding extension axioms for the negations of variables to {\pc}.
Such dual variables can be (and have been) also used in other algebraic
and semialgebraic proof systems, in particular, {\ns} turns
into a more powerful system {\nsr} \cite{dRLNS21}.
It is more difficult to verify {\nsr} proofs, however,
the proof of Theorem~\ref{th:extpc-sim-hit} already does it over a finite field:
opening the parentheses, as before, without expanding the definitions of
dual variables turns the proof into a sum of monomials involving the input and dual
variables, exactly as studied in the proof of Theorem~\ref{th:extpc-sim-hit}.
In fact, the identity being verified is a \emph{succinct} {\nsr} proof
of the form \begin{equation}\label{eq:ns}\sum f_ig_i \equiv 1\end{equation}
\emph{without} explicitly mentioning the Boolean axioms $x^2-x=0$ or the dual
variables axioms $x+\bar{x}-1=0$,
that is, $f_i$'s are only the translations of the original clauses.
To verify this identity using the framework of Theorem~\ref{th:extpc-sim-hit},
we turn $f_ig_i$'s into multilinear polynomials by dropping monomials
containing dual variables $x$ and $\bar{x}$ together
and reducing the degrees of variables to one in other monomials.

When we switch to $\mathbb{Q}$ in {\nsr} or turn our attention to
semialgebraic proof systems, there is a problem with this approach:
the coefficients can grow in our transformations between the bases,
and the bit-size of the new proof can become superpolynomial
in terms of the bit-size of the original proof
(note that according to the Cook--Reckhow definition
we are taking into account the bit-size of the proof and
not just the degree or the number of monomials as they sometimes do
in the context of algebraic proofs).
To avoid this obstacle, we can proceed as follows:
\begin{itemize}
\item transform the proof from rationals to integers 
by multiplying it by all the denominators,
the free term $1$ or $-1$ then becomes a different positive/negative constant, 
\item use the Chinese Remainder Theorem to employ verification in finite fields. 
\end{itemize}
Let $M$ be an upper bound on the absolute values of coefficients before monomials in (\ref{eq:ns})
or a similar equation in semialgebraic proof systems.
Then by the Chinese Remainder Theorem, it suffices to verify that $P \equiv 0$ modulo primes whose product exceeds $2M+1$. Using a deterministic polynomial-time primality algorithm, we can find all primes up to $K \log(M) \log \log(M)$ in polynomial time. For an appropriate value of $K$, their product exceeds $2M+1$. Then it is enough to verify the required identity modulo each of these primes, that is, in the respective finite fields.

Common static systems proofs can be verified 
as multilinear polynomial identities using dual variables
by considering succinct proofs and multilinearization as above.
\begin{description}
\item[Static {\ls}, {\sa}, {\scs}.] The static Lov\'asz--Schrijver~\cite{GHP02} and Sherali--Adams~\cite{DMR09,ALN16}, and Subcube Sums proof systems~\cite{FMSV20} are defined as follows.

A \emph{pseudomonomial} is a product of input variables and their negations (in the form of $1-x$). A \emph{conic junta} is a nonnegative linear combination of pseudomonomials. We can generalize proofs using juntas right away by allowing dual variables (and then juntas become simply monomials) and augmenting the system with axioms $x+\bar{x}-1=0$, as is done in \cite{DMR09,dRLNS21}. We can also consider succinct proofs as above by formulating identities modulo $x^2-x=0$ and $x+\bar{x}-1=0$ for every variable $x$.

Static {\ls} is defined for a system of inequalities $s_i\ge0$
that typically include the (obvious) translations of Boolean clauses as linear inequalities
along with inequalities $x^2-x\ge0$, $x-x^2\ge0$, $x\ge0$, $1-x\ge0$ for every variable. 
It is defined as a formal sum-of-products $\sum s_i J_i \equiv -1$,
where $J_i$'s are conic juntas. The dual variables version can be defined
similarly by augmenting the list of $s_i$'s with $x+\bar{x}-1\ge 0$, $1-x-\bar{x}\ge0$.

Sherali--Adams proofs are considered for systems of polynomial equations
(the translations of clauses into pseudomonomials $C_i$ along with $x^2-x=0$
for every variable), the proof is a formal sum-of-products
$\sum C_i (P_i-Q_i) + R \equiv -1$. It can be viewed as a static Lov\'asz--Schrijver
proof, where $C_i=0$ is represented by the two inequalities $C_i\ge0$ and $-C_i\ge0$.
Note that ``additive'' (linear inequalities) 
and ``multiplicative'' (equations for pseudomonomials)
representations of clauses are in fact equivalent with respect to size~\cite[Lemmas 3.1, 3.2]{GHP02},
where efficiently representing long clauses is, of course, only possible with dual variables.

A Subcube Sums proof can be viewed a restriction of succinct unary {\sa} where $P_i-Q_i=1$ and only the size of $R$ counts towards the size of the proof, with copies of the same monomial counted with multiplicity. For instance, a {\scs} proof of a hitting formula is simply ``$0$''. Similarly to {\hit}, this is an inherently succinct proof system.

Also, non-propositional versions of these systems are available, where the axioms are not translations of clauses of a Boolean formula, and multiplying such axioms is allowed, which would a priori cause a problem with our strategy. However, for translations of clauses, the identities that we need to verify are still linear combinations of monomials in the input and dual variables, so they fall completely under the same framework.

\item[Static {\ls}$_+^{\infty}$, {\sos}.] These systems augment the previous systems with squares.
A proof in {\ls}$_+^{\infty}$ is an identity of the sort
\[\sum s_i J_i + \sum_j q_j^2 J'_j\equiv -1\]
where $q_j$ are arbitrary polynomials and $J_i$'s and $J'_j$'s are conic juntas.
There are several definitions of {\sos} in the literature,
but modulo the equivalence between equations and pairs of opposite inequalities,
they are essentially the same in the propositional case.
These systems still fall under our framework, 
as opening the parentheses in a square of a sum of monomials,
$(\sum a_iM_i)^2 = \sum a_i^2M_i^2 + \sum (2a_ia_j)M_iM_j$,
gives only a quadratic number of monomials.
\end{description}
%

\section{\texorpdfstring{{\hit}}{Hitting} vs \texorpdfstring{{\tres}}{tl-Res} and other classical systems}

In this section we prove that while {\hit} p-simulates {\tres},
in the other direction {\tres} simulates {\hit} only quasi-polynomially.
Moreover, {\tres} is quasi-polynomially weaker than {\hit}.

We also relate {\hit} to other proof systems:
the tree-like version of {\resp} (they are incomparable),
certain versions of {\ns}, and {\res}.

\subsection{\texorpdfstring{{\tres}}{tl-Res} quasi-polynomially simulates \texorpdfstring{{\hit}}{Hitting}}

We can use a construction of small decision trees from DNF covers of Boolean functions to quasi-polynomially simulate {\hit} in {\tres}.

\begin{theorem}[\cite{EhrenfeuchtHaussler}] \label{th:eh}
  Let $D_0$ and $D_1$ be DNF covers of size $r$ and $s$ that cover the 0s and 1s of a function $f\colon\{0,1\}^n\to \{0,1\}$  respectively. Then there is a decision tree computing $f$ of size at most $2(rsn)^{\log(r+s)+1}$.
\end{theorem}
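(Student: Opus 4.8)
The plan is to build the decision tree recursively, using a divide-and-conquer strategy driven by the structure of one of the two DNF covers. Fix the DNF covers $D_0 = T_1 \lor \dotsb \lor T_r$ (of the zeros) and $D_1 = T'_1 \lor \dotsb \lor T'_s$ (of the ones), and write $k = r+s$ for the total number of terms. The key observation is a width bound: since $D_0$ and $D_1$ together cover all of $\{0,1\}^n$ and are "complementary" (no assignment satisfies a term of $D_0$ and a term of $D_1$ simultaneously), there must exist a term --- say a term $T$ of $D_0$ --- of small width. More precisely, by a counting/probabilistic argument there is a term $T$ (in one of the two covers) with at most $\log k$ literals: otherwise each term would be falsified with probability at least $1 - 2^{-\log k} = 1 - 1/k$ under a uniformly random assignment, so with positive probability all $k$ terms are falsified, contradicting that the union of the two covers is everything. (One can also get a cleaner statement: the term of minimum width has width at most $\log k$.)

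Next I would query all the $\le \log k$ variables of this narrow term $T$. This splits the cube into at most $2^{\log k} = k$ branches. In exactly one branch, all the literals of $T$ are satisfied; on that branch $T \equiv 1$, so $f \equiv 0$ there (if $T$ was a term of $D_0$) and we place a leaf. In every other branch, $T$ is falsified, so $T$ disappears from $D_0$, and the restricted covers $D_0\restrict{}{\rho}$, $D_1\restrict{}{\rho}$ are complementary DNF covers of the restricted function $f\restrict{}{\rho}$ on the remaining $\le n$ variables, with a total of at most $k-1$ terms. Recurse. The recursion depth in terms of the "number of terms" parameter is at most $k$, and at each level we pay a branching factor of at most $k$ in the number of leaves and a factor of $n$ (actually the number of variables queried, $\le \log k$, but $n$ is a safe bound) — so one expects a size bound of the rough shape $(kn)^{O(\log k)}$, and a careful accounting of the constants yields exactly $2(rsn)^{\log(r+s)+1}$.

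The main technical obstacle is getting the bookkeeping to produce the stated clean bound rather than something weaker like $n^{O(\log k)}$ with worse constants. Two points need care. First, the branching: querying $\log k$ variables naively gives a $k$-ary branch, but only \emph{one} child is "dead" (a leaf) and the rest carry a cover with one fewer term; to make the recursion close with the base $rsn$ rather than $kn$, one should more cleverly charge the shrinkage to \emph{both} $r$ and $s$ over the course of the recursion — essentially, alternately picking a narrow term of $D_0$ (decreasing $r$) and of $D_1$ (decreasing $s$), or arguing that whichever cover the narrow term comes from, that cover's count drops, and balancing the recursion $S(r,s,n) \le 1 + n \cdot S(r-1,s,n) \cdot (\text{something})$ versus the symmetric one. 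Second, one must track that the number of variables strictly decreases (we query at least one new variable each step) so the recursion terminates and the depth is genuinely controlled. Modulo these accounting details --- which is exactly what the Ehrenfeucht--Haussler argument does --- the construction above gives the theorem, and since the paper only \emph{cites} this result, I would simply invoke it; the sketch above is how I would reconstruct the proof if needed.
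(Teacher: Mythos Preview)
The paper does not prove Theorem~\ref{th:eh}; it is quoted from \cite{EhrenfeuchtHaussler} and used as a black box for Theorem~\ref{th:tres-qsim-hit-suboptimal}. Your decision to ``simply invoke it'' is therefore exactly what the paper does, and is the right call.

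That said, your reconstruction sketch has a genuine gap, not just bookkeeping. The skeleton is right: there must be a term $T$ of width $w\le\log k$ (your union-bound argument is fine, though phrase it as ``probability that some term is satisfied is $<k\cdot 1/k=1$'' rather than via falsification probabilities). The problem is the recursion you set up. Querying all $w$ variables of $T$ yields up to $2^w-1$ non-leaf children, each with at most $k-1$ terms, so the recursion $L(k)\le 1+(2^w-1)L(k-1)$ solves only to $L(k)\lesssim\prod_i 2^{\log i}\approx k!$, not $(kn)^{O(\log k)}$. Your sentence ``the recursion depth in terms of the number-of-terms parameter is at most $k$ \dots\ so one expects $(kn)^{O(\log k)}$'' does not follow: depth $k$ with branching $k$ gives $k^k$.

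What is missing, and what you half-identify in your second paragraph, is that the \emph{opposite} cover must also shrink. Every term of $D_1$ conflicts with $T$ on one of $T$'s $w$ literals, so by averaging some literal $\ell$ of $T$ has its negation in at least $s/w\ge s/\log k$ terms of $D_1$. Querying just that one variable gives two children: one kills $T$ (so $r$ drops by $1$), the other kills an $\Omega(1/\log k)$ fraction of $D_1$. This two-branch recursion is exactly what the paper carries out in its direct adaptation, Theorem~\ref{th:tres-qsim-hit}, where it is solved explicitly to $n^{O(\log^2 m)}$. So the fix is not to query the whole narrow term, but a single well-chosen literal of it; the averaging step is the missing idea, not merely a constant to be chased.
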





We adapt the argument of Ehrenfeucht--Haussler to prove the following theorem.
Intuitively, every hitting formula defines a subcube partition of the Boolean cube $\{0,1\}^n$. The structure of this partition can be used to greedily construct a decision tree ({\tres} refutation) that always queries the most conflicting variable in the narrowest clause. 

\begin{theorem}\label{th:tres-qsim-hit} If a CNF formula $F$ has a {\hit} refutation of size $m$, then $F$ has a {\tres} refutation of size at most $O(2^{2\log^3 m})$.
\end{theorem}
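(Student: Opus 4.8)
The statement improves Theorem~\ref{th:tres-qsim-hit-suboptimal} from a bound of roughly $2^{3\log^3 m}$ to roughly $2^{2\log^3 m}$, so I would not go through the black-box Ehrenfeucht--Haussler theorem (Theorem~\ref{th:eh}) but instead re-run its inductive argument directly in the setting at hand, tailoring it to produce a single decision tree for the falsified-clause search problem of the hitting formula $G$ rather than composing $\log m$ separate trees for the bits $f_0,\dots,f_{\log m-1}$. The key observation driving Ehrenfeucht--Haussler is that if a function has a DNF cover of size $r$ and a CNF cover (equivalently, a cover of the complement) of size $s$, then some variable appears in ``many'' terms of one cover, so querying it shrinks one of the covers multiplicatively; here the search relation $\Search(G)$ has the special structure that $G$ itself is simultaneously a small CNF (the clauses) and, dually, an unambiguous DNF cover of the all-ones function via the negated clauses, and moreover every sub-relation obtained by a partial restriction is again of the same type with no more clauses than before. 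This self-similarity is exactly what lets the induction close with a cleaner exponent.

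First I would set up the induction on the pair (number of clauses $m$, number of variables $n$) — or better, on a single potential such as $m\cdot n$ or $\log m + \log n$ — with the inductive hypothesis being a bound of the form $T(m,n)\le (\text{poly})^{\,g(m)}$ for an appropriate $g$ with $g(m)=O(\log^2 m)$, so that after $n$ is absorbed the final bound is $2^{2\log^3 m + o(\log^3 m)}$. At the inductive step I consider the unsatisfiable hitting formula $G=C_1\wedge\dots\wedge C_M$ with $M\le m$. Since $G$ is unsatisfiable it contains a clause of width $w\le\log_2 M\le\log_2 m$ (a narrow clause must exist, exactly as used in the proof of Theorem~\ref{th:tres-qsim-hit}); pick such a clause $C_i$ with literal set of size $w$. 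Branching on all $w$ of these variables in turn (a small complete subtree of size $2^w\le m$) produces at most $2^w$ leaves; in the one leaf where all literals of $C_i$ are falsified, $C_i$ is already falsified and the search problem terminates, while in each of the other leaves the restricted formula $G\!\restriction$ is again an unsatisfiable hitting formula. The gain is that the branching variables in $C_i$, by the hitting property, each occur with the opposite sign in a constant fraction of the clauses, so on a carefully chosen branch many clauses die — but for the size bound the cleaner route is simply to account that we pay a factor $2^w\le m$ per level and that after at most $n$ levels all variables are set, giving $T\le m^{\,?}$; the subtlety is extracting the $\log^3$ rather than $n\log$ in the exponent.

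To get the $\log^3$ behavior I need the sharper Ehrenfeucht--Haussler-style recursion rather than the naive ``$n$ levels, factor $m$ each.'' The right move is: at each step, query the variable $x$ that maximizes the number of clauses of $G$ containing a fixed literal on $x$; by an averaging argument over the $\le w\le\log m$ literals of the guaranteed narrow clause and the $M$ clauses, some literal $\ell$ lies in at least $(M-1)/\log m$ clauses, so restricting $\ell\mapsto 0$ kills that many clauses while restricting $\ell\mapsto 1$ keeps all $M$ but decreases $n$. Writing $T(M,n)\le T(M,n-1) + T(M - (M-1)/\log m,\, n-1) + 1$ and unfolding, the number of clauses drops by a constant factor after every $O(\log m)$ levels along any root-to-leaf path where we keep taking the ``$\ell\mapsto 1$'' direction, so after $O(\log^2 m)$ levels the clause count is down to $1$ and the branch closes; combined with the branching factor this yields a tree of size $2^{O(\log^3 m)}$, and a careful constant-tracking (the content of ``an accurate induction also gives slightly better parameters'') brings the leading constant down to $2$. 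The decision tree then solves $\Search(G)$, and since every clause of $G$ is a weakening of a clause of $F$, the same tree solves $\Search(F)$; converting a decision tree of size $S$ for $\Search(F)$ into a {\tres} refutation of $F$ of size $O(S)$ is standard.

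\textbf{Main obstacle.} The conceptual content is all in Theorem~\ref{th:tres-qsim-hit-suboptimal}; the difficulty here is purely the bookkeeping needed to shave the constant from $3$ to $2$ in the exponent of $\log^3 m$. Concretely, the hard part is choosing the recursion parameter and the measure of progress so that (i) the ``number of clauses'' potential provably reaches $1$ after $\le c\log^2 m$ levels with $c$ as small as possible, and (ii) the branching overhead $2^w$ per level with $w\le\log m$ multiplies through to exactly $2^{2\log^3 m}$ and not $2^{3\log^3 m}$. This requires being careful about whether one branches on the whole narrow clause at once (cost $2^{\log m}=m$ but kills one leaf outright) or one literal at a time (cost $2$ per level but need $\log m$ levels to exhaust the clause), and about not double-counting the depth contributed by the ``$n$ decreases'' direction versus the ``$M$ decreases'' direction — a standard but delicate two-parameter induction of exactly the Ehrenfeucht--Haussler type.
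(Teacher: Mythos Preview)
Your third paragraph is exactly the paper's argument: take a narrowest clause $C$ (width $w\le\log m$), use the hitting property (every other clause contains $\bar\ell$ for some $\ell\in C$) plus averaging over the $\le\log m$ literals of $C$ to find an $\ell\in C$ whose negation $\bar\ell$ appears in at least $(m-1)/\log m$ of the other clauses, branch on that single variable, and solve the resulting two-parameter recurrence to get $S(n,m)\le n^{2\log^2 m}$. Two small fixes: it is $\bar\ell$, not $\ell$, that lies in many clauses (that is precisely what the hitting condition supplies), and the branch that sets $\ell$ true satisfies $C$ and hence removes at least one clause, so the ``bad'' side of the recurrence is $m-1$ rather than $m$ --- the paper uses this, though in fact the induction $S(n,m)\le n^{2\log^2 m}$ closes even with your weaker $T(M,n-1)$ term. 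The detour in your second paragraph (querying all $w$ literals of $C$ at once, paying $2^w$) is not what the paper does and is not needed; the paper goes straight to the single-literal branching.
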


\begin{proof}
  Let $G$ be a hitting formula with $n$ variables and $m$ clauses that is a refutation of $F$ and let us recursively build a decision tree that solve the falsified clause search problem of $G$. As every clause of $G$ is a weakening of some clause of $F$ the resulting tree will also solve the falsified clause search problem of $F$.
  Let $C$ be a narrowest clause in $G$, which has width at most $\log m$ (otherwise a union bound shows that the formula is satisfiable). Since $G$ is a hitting formula, every other clause has at least one contrary literal with respect to $C$. Let $\ell_i$ be the literal in $C$ that appears in the maximal number of other clauses with a different sign, which by an averaging argument is at least $(m-1)/\log m$ times. The decision tree queries the corresponding variable. If the answer does not satisfy $C$ then it satisfies all clauses containing $\olnot{\ell_i}$, hence the resulting formula has $n-1$ variables and at most $m - (m-1)/\log m$ clauses. Otherwise the formula has $n-1$ variables and at most $m-1$ clauses.

  The number of leaves in the decision tree satisfies the recurrence 
  \[
    S(n,m) \leq S(n-1,m - (m-1)/\log m) + S(n-1,m-1).
  \] 
  We claim that $S(n,m) \leq n^{2\log^2 m}$. Indeed,
  \begin{align*}
    S(n,m)
    &\leq S(n-1,m - (m-1)/\log m) + S(n-1,m-1) \\
    &\leq (n-1)^{2(\log(m - (m-1)/\log m))^2} + (n-1)^{2\log^2 (m-1)} \\
    &\leq (n-1)^{2(\log(m(1- 1/(2\log m))))^2} + (n-1)^{2\log^2 m} \\
    &= (n-1)^{2(\log m+\log(1-1/(2\log m)))^2} + (n-1)^{2\log^2 m} \\
       &\leq (n-1)^{2\log m\cdot(\log m + \log(1-1/(2\log m)))} + (n-1)^{2\log^2 m} \\
    &\leq (n-1)^{2\log m\cdot(\log m - 1/(2\log m))} + (n-1)^{2\log^2 m} \\
       &= (n-1)^{2\log^2 m - 1} + (n-1)^{2\log^2 m} \\
    &\leq (n-1+1)^{2\log^2 m}. \qedhere
  \end{align*}
\end{proof}

\begin{remark}
An alternative, more combinatorial way to compute the number of leaves of the tree is as follows. 
At a node of type $(n',m')$ (i.e., with $n'$ variables and $m'$ clauses), we have a left turn to a node of type $(n'',m'')$, where $n'' < n'$ and $m'' \leq (1-1/(2\log m')) m' \leq (1-1/(2\log m)) m'$, and a right turn to a node of type $(n'',m'')$, where $n'' < n'$ and $m'' < m'$. Every path from the root to a leaf (a node of type $(n',1)$) thus contains at most $\log_{(1 - 1 / (2\log m))^{-1}} m = \log m / \log (1 - 1 / (2\log m))^{-1} = O(\log^2 m)$ left turns. We can identify each leaf with a binary string of length at most $n$ with $O(\log^2 m)$ many zeroes. The number of leaves is thus at most
\[
 \binom{\leq n}{\leq O(\log^2 m)} \leq n^{O(\log^2 m)}.
\]
\end{remark}
The argument can be extended to {\hitk{k}} simulation by {\hit} (and hence by {\tres}). 
\begin{proposition}\label{prop:hitk-qsim-hit}
    {\hit} quasi-polynomially simulates {\hitk{k}} up to $k=(\log m)^{O(1)}$.
\end{proposition}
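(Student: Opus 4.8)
The plan is to mimic the recursive decision-tree construction from the proof of Theorem~\ref{th:tres-qsim-hit}, but now applied to a hitting-$k$ formula $G$ with $m$ clauses and $n$ variables that refutes $F$. The key structural fact we need to recover is that $G$ still contains a \emph{narrow} clause: since every falsifying assignment falsifies at most $k$ clauses, a union bound gives that the number of falsifying assignments is at most $k\sum_i 2^{n-|C_i|}$, so if every clause had width greater than $\log m + \log k$ this would be strictly less than $2^n$, contradicting unsatisfiability. Hence $G$ has a clause $C$ of width at most $w \le \log(mk) = O(\log m)$ (using $k \le (\log m)^{O(1)}$).

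Next I would exploit the weakened hitting condition to find a good query variable. In a hitting-$k$ formula, a clause $C' \ne C$ need not contain a literal contrary to $C$, but if it shares no contrary literal with $C$ then $C$ and $C'$ have a common falsifying assignment; more importantly, the assignments falsifying $C$ form an affine-free subcube of dimension $n-|C|$, and over this subcube $G$ restricted there is still a hitting-$k$ formula (in fact each such point is falsified by $C$ plus at most $k-1$ others). This is not immediately what I want — so instead I would argue directly: among the at most $w$ literals of $C$, consider the clauses that \emph{do} contain a literal contrary to $C$. Every clause $C' \neq C$ that shares \emph{no} contrary literal with $C$ survives in the branch falsifying $C$; but on that branch the restricted formula is a hitting-$k$ formula on $n-|C|$ variables in which $C$ has become empty, so it is really a hitting-$(k-1)$ formula on the rest. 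Branching on all $|C| \le w$ variables of $C$ at once produces $2^w - 1 \le mk$ leaves where $C$ is satisfied, each carrying a hitting-$k$ subformula with at most $m-1$ clauses and $n - w$ variables, plus one leaf where $C$ is falsified, carrying a hitting-$(k-1)$ formula with at most $m-1$ clauses and $n-w$ variables. Recursing first on the single decrement of $k$ until $k=1$ (which costs a factor of $(mk)^{O(k)} = m^{O(k)} = m^{(\log m)^{O(1)}}$, still quasi-polynomial) and then invoking Theorem~\ref{th:tres-qsim-hit} on the resulting genuine hitting formulas, we obtain a decision tree, hence a {\tres} refutation, of quasi-polynomial size; turning the decision tree back into a {\hit} refutation (the leaves partition the cube, each restricted to a clause of $F$) gives the {\hit} simulation.

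The main obstacle is getting the branching-on-$C$ step to genuinely reduce the parameter $k$ rather than $m$: naively, satisfying the narrow clause $C$ only removes one clause and keeps $k$ the same, which on its own yields a tree of size $m^{O(\log m)}$ for the final hitting-formula stage but does \emph{nothing} about the $k$ in hitting-$k$. The resolution is precisely that the one branch in which $C$ is \emph{falsified} has its multiplicity budget drop to $k-1$, and this is the only branch that can carry $k$-fold overlap, so after $k$ levels of this ``falsify the current narrowest clause'' recursion we are left with honest hitting formulas. One must be careful that widths (and hence the $2^w$ blow-up per level) stay bounded by $\log(mk)$ throughout — they do, since $m$ only decreases — and that the depth of the $k$-reducing recursion is exactly $k \le (\log m)^{O(1)}$, keeping the total size $n^{(\log m)^{O(1)}} \cdot 2^{O(k\log(mk))} = m^{(\log m)^{O(1)}}$, which is quasi-polynomial. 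I would state the recursion as $S(n,m,k) \le (mk)\cdot S(n-1,m-1,k) + S(n-1,m-1,k-1)$ with base case $S(n,m,1) \le n^{O(\log^2 m)}$ from Theorem~\ref{th:tres-qsim-hit}, and verify by induction on $k$ that $S(n,m,k) \le (mk)^{O(k)} \cdot n^{O(\log^2 m)}$.
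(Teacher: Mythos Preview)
There is a genuine gap: the recursion you set up does not solve to a quasi-polynomial bound. On the $2^w-1 \le mk-1$ branches where the narrow clause $C$ is satisfied, the parameter $k$ does \emph{not} decrease --- other clauses can still overlap $k$-fold at assignments that happen to satisfy $C$ --- so the only progress on those branches is $m \mapsto m-1$. Unrolling just the $k=2$ case,
\[
  S(n,m,2) \;\le\; 2m\cdot S(\cdot,m-1,2) + S(\cdot,m-1,1),
\]
already gives $S(n,m,2) \ge \bigl(\prod_{j=2}^{m} 2j\bigr)\,S(\cdot,1,2) = 2^{m-1}m!$, which is super-exponential; the inductive hypothesis $S\le (mk)^{O(k)} n^{O(\log^2 m)}$ cannot absorb the factor $2m$ coming from the first term. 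Your sentence ``this is the only branch that can carry $k$-fold overlap'' is the false step: satisfying $C$ removes $C$ but does nothing to overlaps among the remaining clauses. (If instead the falsify-$C$ branch is treated as a leaf --- the natural reading, since $C$ is now a found falsified clause --- the recursion becomes $S(m)\le 1+(mk-1)S(m-1)$, equally bad.)

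The paper's argument uses a different progress measure. For the step hitting-$2 \to$ hitting, it takes the three narrowest clauses $T_i,T_j,T_k$ (each of width $\le \log_2 m + O(1)$) and notes that for every non-disjoint pair $(T_a,T_b)$ the hitting-$2$ property forces $S_a\cap S_b\cap S_c=\emptyset$ for some $c\in\{i,j,k\}$; hence some literal of a narrow clause appears negated in $T_a$ or $T_b$. Averaging over the $O(\log m)$ narrow literals yields a single variable whose favourable setting eliminates an $\Omega(1/\log m)$ fraction of \emph{all} overlapping pairs. Branching on that one variable, the ``left'' child makes multiplicative progress on the overlap count, so after $O(\log^2 m)$ left turns there are no overlaps left and the tree has $\binom{\le n}{\le O(\log^2 m)} = n^{O(\log^2 m)}$ leaves. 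The analogous reduction hitting-$t\to$ hitting-$(t-1)$ costs $n^{O(t\log^2 m)}$, and cascading gives $n^{O(k^2\log^2 m)}$. The idea you are missing is a branching variable that makes \emph{multiplicative} progress on a potential (the number of bad overlaps), rather than the merely additive progress $m\mapsto m-1$.
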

\begin{proof}
Consider a hitting-2 formula with $m$ clauses. If it contains fewer than three clauses then it mentions constantly many variables, and so can be refined to a hitting formula of constant size. Otherwise, it mentions at most $m$ variables, and the narrowest three clauses $T_i,T_j,T_k$ each contain at most $\log_2 m + O(1)$ literals. Consider sets of assignment $S_i, S_j, S_k$ falsified by these clauses. If for some other clauses $T_a$, $T_b$ their respective sets $S_a,S_b$ are not disjoint, then $S_a \cap S_b \cap S_c = \emptyset$ for some $c \in \{i,j,k\}$, and so one of the literals in $T_c$ appears negated in one of $T_a,T_b$. Hence we can find a variable assignment such that for $\Omega(1/\log m)$ fraction of non-disjoint pairs $S_a,S_b$, the variable assignment removes one of the clauses. Construct a decision tree whose root asks about this variable, and recurse. A ``left turn'' is the answer which hits the $\Omega(1/\log m)$ fraction. After $O(\log^2 m)$ left turns, the remaining sets are disjoint (the clauses form a hitting formula), which is a leaf in our decision tree. Since the depth is at most $n$, the decision tree contains $\binom{\leq n}{\leq O(\log^2 m)}$ many leaves, and so the original formula can be refined to a hitting formula of size $n^{O(\log^2 m)}$.

A similar argument works for reducing  hitting-$t$ formulas to  hitting-$(t-1)$ formulas. In the base case, fewer than $t$ terms, the formula mentions fewer than $t-1$ variables, and so there is a refinement of size $2^t$. The decision tree contains $n^{O(t\log^2 m)}$ many leaves, and altogether the reduction loses a multiplicative factor of $n^{O(t\log^2 m)}$. We can reduce all the way to a hitting formula of size $n^{O(t^2\log^2 m)}$.

{By analogy, {{\hitk{{\log^km}}}} can be simulated in {\hit} with complexity $n^{O(\log^{k+2}m)}$.}
\end{proof}

Later in Theorem~\ref{th:tres-l-b} we prove that the simulation of {\hit} by {\tres} cannot be polynomial; however, we do not know whether it can be improved to $m^{O(\log m)}$.

\begin{corollary}\label{cor:hit-vs-res}
There are formulas that have polynomial-size {\res} proofs but require exponential-size {\hit} proofs.
\end{corollary}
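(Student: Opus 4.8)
*There are formulas that have polynomial-size {\res} proofs but require exponential-size {\hit} proofs.*

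The plan is to exploit the quasi-polynomial simulation of {\hit} by {\tres} (Theorem~\ref{th:tres-qsim-hit}) together with any known exponential separation between {\res} and {\tres}. The key observation is that the simulation is \emph{lossy} only quasi-polynomially: if a formula $F$ had a {\hit} refutation of size $m = 2^{o(n^{1/3})}$ (say), then it would have a {\tres} refutation of size $O(2^{2\log^3 m}) = 2^{o(n)}$. Contrapositively, any formula family that is exponentially hard for {\tres} --- i.e., requires {\tres} refutations of size $2^{\Omega(n)}$ --- cannot have {\hit} refutations of size smaller than $2^{\Omega(n^{1/3})}$, hence requires exponential-size {\hit} proofs.

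So the steps are as follows. First, fix a family of CNF formulas $\{F_n\}$ that have polynomial-size {\res} refutations but require $2^{\Omega(n)}$-size {\tres} refutations; such families are classical --- for instance the pebbling contradictions on pyramid graphs, or any of the standard examples witnessing the exponential separation of {\res} from {\tres} due to \cite{BIW04} (cited in the introduction as ``{\tres} is exponentially weaker than {\res}'' \cite{BIW04}). Second, suppose for contradiction that $F_n$ has a {\hit} refutation of size $m = m(n)$ which is sub-exponential, more precisely $m = 2^{o(n^{1/3})}$. Applying Theorem~\ref{th:tres-qsim-hit}, $F_n$ then has a {\tres} refutation of size $O(2^{2\log^3 m}) = 2^{o(n)}$, contradicting the {\tres} lower bound. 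Hence $m(n) = 2^{\Omega(n^{1/3})}$, which is exponential (in $n^{\Theta(1)}$), and the corollary follows since by construction the same $F_n$ have polynomial-size {\res} refutations.

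There is essentially no obstacle here beyond citing the right separating family: the only mild subtlety is bookkeeping the parameters of Theorem~\ref{th:tres-qsim-hit}, namely that a {\hit} proof of size $m$ yields a {\tres} proof of size $2^{O(\log^3 m)}$, so to derive a $2^{o(n)}$ {\tres} upper bound one only needs $\log^3 m = o(n)$, i.e., $m = 2^{o(n^{1/3})}$; any pebbling-type family hard for {\tres} comfortably has $n$ of the same order as the number of clauses, so the bound $2^{\Omega(n^{1/3})}$ on the {\hit} proof size is genuinely exponential in a fixed power of the formula size. (If one wants a cleaner exponent, one can instead invoke the tighter bound of Theorem~\ref{th:tres-qsim-hit-suboptimal} with the same effect, or simply note that pebbling contradictions have $n = \mathrm{poly}$ in the instance size so any $2^{n^{\Omega(1)}}$ bound is already a separation.) No new technical machinery is required.
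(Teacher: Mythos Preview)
Your proof is correct and follows the same approach as the paper: invoke the exponential separation of {\res} from {\tres} due to \cite{BIW04} and transfer the {\tres} lower bound to {\hit} via the quasi-polynomial simulation of Theorem~\ref{th:tres-qsim-hit}. The paper's own proof is a two-line version of exactly this argument, without the explicit parameter bookkeeping you carry out (which is nonetheless accurate and yields the $2^{\Omega(n^{1/3})}$ bound).
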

\begin{proof}
\cite{BIW04} proves that there are formulas that have polynomial-size {\res} proofs but require exponential-size {\tres} proofs. Using Theorem~\ref{th:tres-qsim-hit} the statement follows.
\end{proof}
\begin{remark}
Similarly to Theorem~\ref{th:tres-qsim-hit},
{\hitres} can be quasi-polynomially simulated in {\res}
(every hitting resolution step can be simulated using  Theorem~\ref{th:tres-qsim-hit}),
and thus an exponential-size lower bound for it also follows from
exponential-size lower bounds for {\res} (e.g., \cite{Urq87}).
\end{remark}

\subsection{{\hit} is quasi-polynomially stronger than {\tres}}
\begin{theorem}\label{th:hit-sim-tres} {\hit} p-simulates {\tres}.
\end{theorem}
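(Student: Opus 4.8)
The plan is to convert a tree-like resolution refutation of $F$ into a hitting refutation by reading off the leaves of the underlying decision tree. Recall that a $\tres$ refutation of $F$ of size $m$ is equivalent (up to polynomial factors) to a decision tree $T$ of size $O(m)$ solving the search problem $\Search(F)$: every assignment $x \in \{0,1\}^n$ is routed from the root to some leaf, and that leaf is labeled by a clause $C' \in F$ that $x$ falsifies. I would first make this equivalence precise, citing the standard correspondence between tree-like resolution and decision trees for the clause-search problem.

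Given such a decision tree $T$, for each leaf $v$ let $\rho_v \in \{0,1,*\}^n$ be the partial assignment recording the answers to the queries along the root-to-$v$ path, and let $C_v$ be the clause whose literals are exactly the negations of the literals fixed by $\rho_v$ (so that $C_v$ is falsified precisely by the assignments extending $\rho_v$). I would then take the candidate hitting refutation to be $H = \bigwedge_v C_v$ over all leaves $v$ of $T$. The size of $H$ is the number of leaves, which is at most the size of the $\tres$ refutation, so the simulation is polynomial (in fact size-preserving up to a constant).

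Three things need to be checked. First, $H$ is a hitting formula: for two distinct leaves $v \ne v'$, the root-to-$v$ and root-to-$v'$ paths diverge at some internal node querying a variable $x$, and $\rho_v, \rho_{v'}$ assign $x$ opposite values; hence $C_v$ and $C_{v'}$ contain $x$ with opposite signs, so $C_v \lor C_{v'}$ is a tautology. Second, $H$ is unsatisfiable: every $x \in \{0,1\}^n$ follows a unique root-to-leaf path, say ending at $v$, so $x$ extends $\rho_v$ and therefore falsifies $C_v$. Third, each $C_v$ is a weakening of a clause of $F$: the leaf label $C'_v \in F$ is a clause falsified by every assignment extending $\rho_v$, which forces every literal of $C'_v$ to be set to false by $\rho_v$, i.e. $\overline{C'_v} \subseteq \rho_v$ as a set of literals, so $C'_v \subseteq C_v$; thus $C'_v$ is a strengthening of $C_v$ in $F$ as required by Definition~\ref{def:hit}.

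I do not expect any serious obstacle here; this is the ``easy direction.'' The only point requiring a little care is the standard translation from $\tres$ to a decision tree for $\Search(F)$ (making sure the size does not blow up and that leaf labels are genuine clauses of $F$ falsified on the whole subcube), and the verification that a clause of $F$ falsified on a subcube must have all its literals fixed to false by the corresponding partial assignment — both routine. If one prefers to avoid invoking the decision-tree correspondence as a black box, one can instead argue directly by induction on the structure of the tree-like refutation, but the decision-tree route is cleanest.
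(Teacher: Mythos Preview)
Your proposal is correct and follows essentially the same approach as the paper: both translate the {\tres} refutation into a decision tree for $\Search(F)$ and take the clauses corresponding to the root-to-leaf paths as the hitting refutation. Your write-up is simply a more detailed version of the same argument, spelling out explicitly the hitting, unsatisfiability, and weakening checks that the paper leaves implicit.
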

\begin{proof}
A {\tres} proof can be viewed as a decision tree \cite{BIW04}:
every application of the resolution of clauses $C\vee x$ and $D\vee\bar{x}$ by the variable $x$
is considered as a decision by the variable $x$, so that the decision
$x=0$ leads to the node $C\vee x$ from the node $C\vee D$,
and the decision $x=1$ leads to the node $D\vee\bar{x}$.
Therefore, the assignment of decisions starting at the root, labelled by the final empty clause,
and ending at a leaf, labelled by a clause $L$ of the input formula $F$,
falsifies $L$. The negation of such assignment, viewed as a clause $N$,
is a weakening of $L$. The conjunction of all such $N$'s 
is an unsatisfiable hitting formula $H$ that is a proof of $F$.
The number of clauses in $H$ is at most the number of leaves in the decision tree and therefore at most
the number of occurrences of $F$'s clauses in the {\tres} refutation.
\end{proof}

We use $\oplus$-lifting to prove our separation result.
We need three folklore or easy statements that we include for the sake of completeness.

\begin{definition}[Composition of multivariate functions] For Boolean functions $f\colon \{0,1\}^n \to \{0,1\}$ and $g\colon\{0,1\}^m \to \{0,1\}$, we define $f \circ g^n\colon\{0,1\}^{nm} \to \{0,1\}$ as \[(f \circ g^n)(x^1_1, \dots, x^1_m, \dots, x^n_1, \dots, x^n_m) := f(g(x^1_1, \dots, x^1_m), \dots, g(x^n_1, \dots, x^n_m)).\]
\end{definition}

\begin{lemma}\label{lem:xor_dt}
If $f\colon \{0,1\}^n \to \{0,1\}$ is a function requiring decision tree depth $d$, then $f \circ (\oplus_2)^n$ requires decision tree size $2^{d}$.  
\end{lemma}
\begin{proof}
We consider a size-$s$ decision tree $T$ computing $(f \circ (\oplus_2)^n)(x^1_1, x^1_2, \dots, x^n_1, x^n_2)$
and transform it into a querying strategy for $f$ using at most $\log_2 s$ queries.
Starting at the root, we go down the tree
both querying $f(\vec y)$'s inputs $y_i$ and giving answers
to the queries made in the original tree $T$.
We will be interested in the number of leaves in the current subtree of $T$.

Assume that $T$ queries $x^i_j$. If the other variable $x^i_{2-j}$ has not been queried yet,
we do not query inputs of $f$ and simply choose the answer $\tau\in\{0,1\}$ for $x^i_j = \tau$
that directs us to the subtree that has fewer leaves.
Otherwise, we choose $\tau = y_i\oplus x^i_{2-j}$.
Therefore, we go into the subtree of $T$ containing at least twice fewer leaves
at least once per query to $y_i$'s, and we can do it at most $\log_2 s$ times
before we come to a leaf.
\end{proof}

\begin{lemma}\label{lem:unambiguous-certificate-composition}
  Suppose a function $f\colon\{0,1\}^n \to \{0,1\}$ has an unambiguous DNF of width $w$ and $g\colon\{0,1\}^m \to \{0,1\}$ has a decision tree of depth $d$. Then $f \circ g^n$ has an unambiguous DNF of width $w d$.
\end{lemma}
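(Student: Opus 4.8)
The plan is to build an unambiguous DNF for $f \circ g^n$ directly from the unambiguous DNF of $f$ and the decision tree of $g$. Let $\Phi$ be an unambiguous DNF for $f$ with terms of width at most $w$, and let $T$ be a depth-$d$ decision tree for $g$. For each term $\tau$ of $\Phi$, $\tau$ specifies values for some set $S_\tau \subseteq [n]$ of ``outer'' coordinates with $|S_\tau| \le w$, say requiring $g$-block $i$ to evaluate to $b_i^\tau$ for $i \in S_\tau$. For each $i \in S_\tau$, pick the set of root-to-leaf paths in $T$ that end at a leaf labelled $b_i^\tau$; each such path is a conjunction of at most $d$ literals on the variables $x_1^i,\dots,x_m^i$, and the disjunction over these paths is a width-$d$ unambiguous DNF for the event ``$g(x_1^i,\dots,x_m^i) = b_i^\tau$.'' Now replace $\tau$ by the conjunction, over all $i \in S_\tau$, of these block-DNFs, and distribute the AND over the ORs: this produces a DNF whose terms have width at most $wd$ (at most $d$ literals per block, at most $w$ blocks). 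Taking the union over all terms $\tau$ of $\Phi$ gives the claimed DNF for $f \circ g^n$; I will need to check it computes $f \circ g^n$, which is immediate since a path in $T$ for block $i$ is consistent with an input iff that block's $g$-value is the leaf label.

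The one real point to verify is \emph{unambiguity}, i.e.\ that any two distinct resulting terms conflict. First, any two paths in a single decision tree $T$ that end at different leaves conflict (they diverge at some queried variable), so within one block the block-DNF is unambiguous; in particular two distinct terms coming from the \emph{same} outer term $\tau$ but different path choices must differ in the path chosen for some block $i \in S_\tau$, and those two paths conflict. Second, two terms coming from distinct outer terms $\tau \ne \tau'$ of $\Phi$: since $\Phi$ is unambiguous, $\tau$ and $\tau'$ conflict, meaning there is an outer coordinate $i \in S_\tau \cap S_{\tau'}$ with $b_i^\tau \ne b_i^{\tau'}$. Then in block $i$ the first term uses a path of $T$ ending at a $b_i^\tau$-leaf and the second uses a path ending at a $b_i^{\tau'}$-leaf; since $b_i^\tau \ne b_i^{\tau'}$ these are different leaves, so the two paths conflict, hence the two big terms conflict. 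This covers all cases.

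I expect essentially no obstacle here — the argument is bookkeeping — but the step that deserves care is confirming that it is \emph{always} possible to choose, for every block $i \in S_\tau$, \emph{some} root-to-leaf path of $T$ ending at a $b_i^\tau$-leaf, i.e.\ that such a leaf exists. This holds because $g$ actually takes the value $b_i^\tau$ on some input (otherwise the term $\tau$ of $\Phi$ would be unsatisfiable and could be dropped without affecting unambiguity or correctness), and a decision tree for $g$ has a leaf of each value that $g$ attains. The width bound $wd$ is then exactly $|S_\tau|$ blocks times $d$ literals per path, and the number of terms is at most (number of terms of $\Phi$) times $(\text{number of leaves of }T)^w$, which is finite — size is not claimed in the lemma, only width, so I do not need to track it.
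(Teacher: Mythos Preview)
Your proof is correct and follows essentially the same approach as the paper: convert the decision tree for $g$ into an unambiguous DNF (split into $G_0$ and $G_1$), substitute the appropriate $G_{b_i^\tau}$ for each literal in each term of $\Phi$, and expand. You supply more detail on the unambiguity verification than the paper does, but the construction is identical.
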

\begin{proof}
A decision tree of depth $d$ can be represented as an unambiguous DNF of width $d$ itself,
so we have a set $G$ of terms (conjunctions) of width at most $d$ contradicting each other,
and it is split into two sets $G_0$ and $G_1$ equivalent to $g(\vec{x})=0$ and $g(\vec{x})=1$, respectively.

Consider a term in the unambiguous DNF representation of $f(\vec y)$ and replace each literal $y_i = \alpha$ with unambiguous DNF representation of $g(\vec{x}^i) = \alpha$ obtained from renaming the variables in $G_\alpha$, and then expand.  
\end{proof}

\begin{lemma}\label{lem:unambiguous-width2size}
Suppose an unambiguous DNF has width $w$. Then it has at most $2^w$ terms.
\end{lemma}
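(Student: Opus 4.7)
The plan is to count satisfying assignments. Suppose the DNF has $t$ terms $T_1, \ldots, T_t$, with term $T_i$ of width $k_i \le w$, over some set of $n$ variables. Each term $T_i$ is satisfied by exactly $2^{n - k_i}$ assignments in $\{0,1\}^n$, since fixing the $k_i$ literals in $T_i$ leaves $n - k_i$ variables free.

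The unambiguity assumption says that any two distinct terms contradict each other, so the sets of assignments satisfying the $T_i$'s are pairwise disjoint subsets of $\{0,1\}^n$. Summing over $i$,
\[
\sum_{i=1}^{t} 2^{n - k_i} \;\le\; 2^n,
\qquad\text{equivalently}\qquad
\sum_{i=1}^{t} 2^{-k_i} \;\le\; 1,
\]
which is a Kraft-style inequality. Since $k_i \le w$ for every $i$, each summand satisfies $2^{-k_i} \ge 2^{-w}$, and therefore $t \cdot 2^{-w} \le 1$, i.e., $t \le 2^w$.

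There is no real obstacle here; the only mild subtlety is that the ambient set of variables is not specified by the lemma. One can simply take $n$ to be the number of variables actually occurring in the DNF (or any larger number) and the argument is unchanged, since the counting on both sides scales by the same factor $2^{n - n'}$ when passing from $n'$ to $n$ variables. The bound is tight, as witnessed by the unambiguous DNF consisting of all $2^w$ minterms over $w$ fixed variables.
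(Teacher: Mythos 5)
Your proof is correct and is essentially the same counting argument as the paper's: each term has at least $2^{n-w}$ satisfying assignments, these sets are disjoint by unambiguity, and dividing $2^n$ by $2^{n-w}$ gives the bound. The Kraft-inequality phrasing is a minor repackaging of the same idea.
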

\begin{proof}
  Let $n$ be the number of variables in the DNF. Then each term has at least $2^{n-w}$ satisfying assignments. Then since no assignment satisfies two terms, the number of terms is at most $\frac{2^n}{2^{n-w}} = 2^w$.
\end{proof}

In order to prove our result, we need the following 
separation of randomized query complexity (deterministic is enough for our purpose)
and unambiguous certificate complexity from \cite{AKK16}.

\begin{definition}[{\cite{ABK16,AKK16}}]
Let $f\colon\{0,1\}^N\to\{0,1\}$ be a function,
$c = 10 \log N$ and $m = c\cdot \cert(f) \log N = 10\cert(f)\log^2 N$.
Then the cheat sheet version of $f$, denoted
$\fcs$, is a total function
$\fcs\colon(\{0, 1\}^N )^ c \times (\{0, 1\}^m)^{2^c} \to\{0,1\}$.

Let the input be written as $(x^1,x^2,\dotsc,x^c,Y_1,Y_2,\dotsc,Y_{2^c})$,
where for all $i\in[c]$, $x^i\in\{0,1\}^N$, and for all $j\in[2^c]$,
$Y_j\in\{0,1\}^m$.
Let $\ell_i = f(x^i)$ and $\ell\in[2^c]$
be the positive integer corresponding to the binary string 
$\ell_1,\dotsc,\ell_c$.
Then we define the value of
$\fcs$
to be 1 if and only if $Y_\ell$ contains certificates for 
$f(x^i)=\ell_i$ 
for all $i\in[c]$.
\end{definition}

At first glance, the definition of $\fcs$ might look nonconstructive
due to the usage of $\cert(f)$. However, the theorem of \cite{AKK16} uses
an appropriate upper bound on $\cert(f)$, which is proved along
with the interactive construction of the function.

\begin{theorem}[{\cite[Theorem 5.1]{AKK16}}]\label{th:AKK16} 
Let $f_0 = \AND_n$ and $f_k$ be defined inductively as $f_k := \AND_n \circ (\OR_n \circ f_{k-1})_{\mathrm{CS}}$, where $f_k$ has $O(n^{25^k})$ inputs.
Then $\R(f_k) = \tilde\Omega(n^{2k+1})$ and $\ucert(f_k) = \tilde O(n^{k+1})$.
\end{theorem}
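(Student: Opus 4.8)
The plan is to establish the three assertions --- the input count $O(n^{25^k})$, the lower bound $\R(f_k)=\tilde\Omega(n^{2k+1})$, and the upper bound $\ucert(f_k)=\tilde O(n^{k+1})$ --- simultaneously by induction on $k$, carrying along the one-sided certificate complexities $\cert_0(f_k)$ and $\cert_1(f_k)$. The base case $f_0=\AND_n$ is direct: $\R(\AND_n)=\Theta(n)$; the all-ones assignment is the unique $1$-certificate, so $\ucert_1(\AND_n)=\cert_1(\AND_n)=n$; the ``first zero'' family $\{x_1=\dots=x_{i-1}=1,\,x_i=0\}_{i\in[n]}$ is unambiguous of width $\le n$ and covers all $0$-inputs, so $\ucert(\AND_n)=n$ while $\cert_0(\AND_n)=1$; and $f_0$ has $n=O(n^{25^0})$ inputs.

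For the inductive step I would first record the elementary composition identities for the two gates: composing on the outside with $\OR_n$ multiplies $\cert_0$ and $\ucert_0$ by at most $n$ and leaves $\cert_1$ essentially unchanged (a $1$-input needs only one block to be $1$), while composing with $\AND_n$ multiplies $\cert_1$ and $\ucert_1$ by $n$ and leaves $\cert_0$ essentially unchanged; for the unambiguous versions one also has $\ucert_0(\AND_n\circ h)\le n\,\ucert_1(h)+\ucert_0(h)$ and symmetrically $\ucert_1(\OR_n\circ h)\le n\,\ucert_0(h)+\ucert_1(h)$, proved exactly as in Lemma~\ref{lem:unambiguous-certificate-composition} by substituting one-sided (unambiguous) certificates of the inner function into the outer certificate and checking that conflicts are preserved. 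The point of alternating $\AND_n$ with $\OR_n$ is that the ordinary certificate complexity $\cert(\OR_n\circ f_{k-1})=\max\bigl(\cert_1(f_{k-1}),\,n\,\cert_0(f_{k-1})\bigr)$ then grows by only a factor $n$ per level, not $n^2$.

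Next I would invoke the two defining properties of the cheat-sheet construction \cite{ABK16,AKK16}, applied to $g:=\OR_n\circ f_{k-1}$. First, $g_{\mathrm{CS}}$ is \emph{total} and $\R(g_{\mathrm{CS}})=\tilde\Omega(\R(g))$ --- morally, a randomized algorithm must still evaluate $g$ on one of the blocks $x^i$ to learn which cell $Y_\ell$ to read. Second, and crucially, the cheat sheet collapses the unambiguous certificate complexity down to the ordinary one: $\ucert(g_{\mathrm{CS}})=\tilde O(\cert(g))$ (and likewise $\cert(g_{\mathrm{CS}})=\tilde O(\cert(g))$). The reason is that a cheat-sheet certificate reveals the cell $Y_\ell$ in full --- only $m=10\,\cert(g)\log^2N$ bits --- and the content of $Y_\ell$ then prescribes exactly which $\le c\cdot\cert(g)=\tilde O(\cert(g))$ bits of the blocks $x^1,\dots,x^c$ to probe; since this recipe is a deterministic function of $\ell$ and of the revealed content of $Y_\ell$, the resulting family of certificates is automatically unambiguous. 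Note that the parameter $m$ uses the inductively-proven bound $\cert(g)=\tilde O(n^k)$ in place of the exact value, so the construction and the estimate are established together.

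Combining these ingredients closes the three inductions. For randomized complexity, $\R(f_k)=\R(\AND_n\circ g_{\mathrm{CS}})=\tilde\Omega\bigl(n\cdot\R(g_{\mathrm{CS}})\bigr)=\tilde\Omega\bigl(n\cdot\R(\OR_n\circ f_{k-1})\bigr)=\tilde\Omega\bigl(n^2\cdot\R(f_{k-1})\bigr)$, the outer steps using the standard randomized composition lower bounds for $\AND_n$ and $\OR_n$ after $O(\log n)$-fold error reduction of the inner function, so $\R(f_k)=\tilde\Omega(n^{2k+1})$. For unambiguous certificate complexity, $\ucert(f_k)\le n\,\ucert_1(g_{\mathrm{CS}})+\ucert_0(g_{\mathrm{CS}})=\tilde O\bigl(n\cdot\cert(\OR_n\circ f_{k-1})\bigr)$, and the recursion of the second paragraph gives $\cert(\OR_n\circ f_{k-1})=\tilde O(n^k)$ (with $\cert(\OR_n\circ f_0)=n$), whence $\ucert(f_k)=\tilde O(n^{k+1})$; the one-sided $\cert_b(f_k)$ needed to carry the recursion follow from $\cert_b(g_{\mathrm{CS}})=\tilde O(\cert(g))$ together with the $\AND_n$-composition identities. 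For the input count, $\OR_n$ multiplies the number of inputs by $n$, the cheat sheet of a function on $N$ variables has $cN+m\,2^{c}=\mathrm{poly}(N)$ inputs (with $2^c=N^{10}$ the dominant term), and $\AND_n$ multiplies by $n$; writing the number of inputs of $f_k$ as $O(n^{e_k})$ this yields $e_k\le C\,e_{k-1}+O(1)$ for an absolute constant $C<25$, and $e_0=1$, so $e_k\le 25^k$. The main obstacle is the randomized part: randomized query complexity obeys no clean composition theorem in general, so both the step $\R(\AND_n\circ h),\R(\OR_n\circ h)=\tilde\Omega(n\,\R(h))$ and the claim that wrapping in a cheat sheet does not erase the lower bound require genuine work --- the latter being the technical heart of the cheat-sheet method of \cite{ABK16} --- whereas the $\ucert$ and input-count bounds are, given the ``$\ucert\to\cert$ collapse'' of cheat sheets and the $\AND$/$\OR$-alternation, a matter of unwinding the right recursions.
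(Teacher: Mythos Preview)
The paper does not prove Theorem~\ref{th:AKK16}; it is quoted from~\cite{AKK16}. It does, however, re-derive (and strengthen to the dag-like measure $\uw$) the $\ucert$ half in Lemmas~\ref{lem:unambiguous-cheatsheet}--\ref{lem:dagAKK16}, and your inductive outline for that half is the same scheme the paper follows there.

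There is one genuine imprecision in your $\ucert$ argument. You assert that the cheat sheet collapses $\ucert$ to $\cert$, i.e.\ $\ucert(g_{\mathrm{CS}})=\tilde O(\cert(g))$, and justify it by saying that the contents of $Y_\ell$ prescribe which $\tilde O(\cert(g))$ bits of the $x^i$ to probe. That reasoning is correct \emph{only for $1$-certificates}: if $g_{\mathrm{CS}}=1$ then indeed $Y_\ell$ together with the prescribed positions is an unambiguous certificate of size $\tilde O(\cert(g))$. For a $0$-certificate one must first pin down the pointer $\ell$, and the only unambiguous way to do that is to give unambiguous certificates for each $g(x^i)$; so one gets only $\ucert_0(g_{\mathrm{CS}})=\tilde O(\ucert(g))$ --- exactly the asymmetry recorded in the paper's Lemma~\ref{lem:unambiguous-cheatsheet} and in the $\uw$ analogue Lemma~\ref{lem:width-cheatsheet}. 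This does not break the final bound, since feeding $\ucert_0((g_k)_{\mathrm{CS}})=\tilde O(\ucert(\OR_n\circ f_{k-1}))\le\tilde O(n\cdot\ucert(f_{k-1}))$ into your $\AND_n$ recursion still yields $\ucert(f_k)=\tilde O(n^{k+1})$; but it means you must carry $\ucert(f_{k-1})$, not only $\cert_b(f_{k-1})$, through the induction.

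On the randomized side: you first call $\R(\AND_n\circ h),\R(\OR_n\circ h)=\tilde\Omega(n\cdot\R(h))$ ``the standard randomized composition lower bounds'' and then, two paragraphs later, correctly flag exactly these steps as ``the main obstacle''. The second description is the accurate one. The general composition $\R(f\circ g)=\Omega(\R(f)\cdot\R(g))$ is a well-known open problem, and even the $\AND_n/\OR_n$ special cases are not textbook consequences of ``$O(\log n)$-fold error reduction of the inner function''; they need a genuine hard-distribution/direct-sum argument and are part of the substance of~\cite{ABK16,AKK16} rather than a black box one can invoke.
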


\begin{theorem}\label{th:tres-l-b} 
For every $\varepsilon>0$,
there exists a sequence of unsatifiable hitting formulas $G_m$ 
containing $2^{\tilde O(m)}$ clauses of width at most $\tilde O(m)$
such that 
$G_m$ requires {\tres} proof size $2^{\tilde\Omega(m^{2-\varepsilon})}$.
\end{theorem}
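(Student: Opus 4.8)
The plan is to start from the cheat-sheet functions $f_k$ of Theorem~\ref{th:AKK16}, whose unambiguous certificate complexity is small but whose (randomized, hence also deterministic) query complexity is large, and turn the gap into a gap between the size of a \hit{} refutation of an associated unsatisfiable CNF and the size of its \tres{} refutation. First I would fix the parameters: choose $k$ (growing slowly with $n$) so that $\ucert(f_k)=\tilde O(n^{k+1})$ and $\D(f_k)\ge\R(f_k)=\tilde\Omega(n^{2k+1})$, and set $N_k = O(n^{25^k})$ for the number of inputs; the final formula size parameter $m$ will be roughly $N_k$ (up to polylogarithmic factors), and since $n^{2k+1}$ versus $n^{k+1}$ is, as a function of $N_k \approx n^{25^k}$, a power gap that tends to $2$ as $k\to\infty$, tuning $k = k(\varepsilon)$ will yield the exponent $2-\varepsilon$.

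Next I would build the hitting formula. The function $f_k$ (and likewise $\neg f_k$) has an unambiguous DNF of width $w = \tilde O(\ucert(f_k)) = \tilde O(n^{k+1})$: one takes, for each input $x$, the minimal certificate guaranteed by the unambiguous certificate family, and these terms pairwise conflict by unambiguity and cover all of $\{0,1\}^{N_k}$. Negating, we get an unsatisfiable hitting formula $H_k$ whose clauses have width $w$; by Lemma~\ref{lem:unambiguous-width2size} it has at most $2^w = 2^{\tilde O(n^{k+1})}$ clauses. To make this a \hit{} \emph{refutation} of some CNF $F_k$, let $F_k$ be the CNF consisting of all clauses appearing in $H_k$ (or any CNF of which every clause of $H_k$ is a weakening — taking $F_k=H_k$ itself is cleanest), so $H_k$ is trivially a valid \hit{} refutation of $F_k$ of size $2^{\tilde O(n^{k+1})}$. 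Now apply $\oplus$-lifting: replace each variable by the XOR of two fresh variables, obtaining $F_k\circ(\oplus_2)$; by Lemma~\ref{lem:unambiguous-certificate-composition} (with $g=\oplus_2$, $d=2$) the lifted formula still has a \hit{} refutation, of width $2w$ and size $2^{\tilde O(n^{k+1})}$ — this is the $G_m$ in the statement, with $m=\tilde\Theta(n^{k+1})$ so that the size is $2^{\tilde O(m)}$ and the width is $\tilde O(m)$ as claimed.

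For the lower bound: a \tres{} refutation of $F_k\circ(\oplus_2)$ of size $s$ gives, by the standard correspondence between \tres{} refutations and decision trees for the falsified-clause search problem, a decision tree of size $s$ solving $\Search(F_k\circ(\oplus_2))$; in particular it computes $f_k\circ(\oplus_2)$ — where by "$f_k\circ(\oplus_2)$" I mean the function whose value the clause index encodes, exactly as in the proof of Theorem~\ref{th:tres-qsim-hit}. By Lemma~\ref{lem:xor_dt}, since $f_k$ requires decision tree depth $d=\D(f_k)=\tilde\Omega(n^{2k+1})$, any decision tree for $f_k\circ(\oplus_2)$ has size at least $2^d = 2^{\tilde\Omega(n^{2k+1})}$. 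Since $n^{2k+1} = m^{(2k+1)/(k+1)}$ up to polylog factors and $(2k+1)/(k+1)\to 2$, choosing $k\ge k_0(\varepsilon)$ gives $2^{\tilde\Omega(m^{2-\varepsilon})}$, completing the proof.

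\textbf{Main obstacle.} The routine parts are the parameter bookkeeping and the two lifting lemmas; the one place that needs genuine care is the translation from "\tres{} refutation of the CNF $F_k\circ(\oplus_2)$" to "decision tree computing $f_k\circ(\oplus_2)$". Solving the falsified-clause search problem of $H_k\circ(\oplus_2)$ identifies which term of the unambiguous DNF cover is satisfied, and one must check that this indeed determines the value of $f_k\circ(\oplus_2)$ — i.e. that the cover's $0$-terms and $1$-terms are kept separate under lifting and under the passage to a CNF — and that Lemma~\ref{lem:xor_dt} applies to this search-problem version rather than just to a single Boolean function. This is the step I would write out most carefully; everything else follows the template already used for Theorem~\ref{th:tres-qsim-hit} and the \hit{}-vs-\tres{} discussion.
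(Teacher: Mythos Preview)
Your proposal is correct and follows essentially the same approach as the paper: take the cheat-sheet function $f_k$, form the unsatisfiable hitting formula from the unambiguous DNFs of $f_k$ and $\neg f_k$, lift with $\oplus_2$, and use Lemma~\ref{lem:xor_dt} to convert the decision-tree-depth lower bound for $f_k$ into a decision-tree-size (hence \tres{}-size) lower bound for the lifted formula. One presentation slip to fix: in your first paragraph you say $m$ will be roughly $N_k\approx n^{25^k}$, but as you correctly use later, $m=\tilde\Theta(n^{k+1})$ (the unambiguous-certificate width), and $k$ should be a fixed constant depending on $\varepsilon$ rather than growing with $n$.
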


\begin{proof}
We consider the composition of a function separating $\R$ and $\ucert$ with the parity function.
Let $f$ be the function given by Theorem~\ref{th:AKK16}. Let $F_0$ and $F_1$ be the respective unambiguous DNFs, they both have width $\tilde{O}(n^{k+1})$ and thus by Lemma~\ref{lem:unambiguous-width2size} have size $2^{\tilde{O}(n^{k+1})}$. Therefore, 
Lemma~\ref{lem:unambiguous-certificate-composition} provides two unambiguous DNFs, $F_0 \circ (\oplus_2)^m$ representing $f \circ (\oplus_2)^m$ and $F_1 \circ (\oplus_2)^m$ representing $\bar{f} \circ (\oplus_2)^m$, both of width $\tilde{O}(n^{k+1})$ and of size $2^{\tilde{O}(n^{k+1})}$.

\vspace{2mm}
Consider a CNF $F = \overline{F_0 \circ (\oplus_2)^m \lor F_1 \circ (\oplus_2)^m}$.
This is an unsatisfiable hitting formula, so it is a short {\hit} proof of itself.
It contains $2^{\tilde{O}(n^{k+1})}$ clauses.

\vspace{2mm}
Suppose $F$ has a {\tres} refutation of size $s$. Let us view it as a decision tree solving the falsified clause search problem for $F$. Now let us change leaf labels in the following way: a leaf labeled with a clause that came from $F_i \circ (\oplus_2)^m$ gets label $i$. It is easy to see that the resulting tree computes $f \circ (\oplus_2)^m$. Recall that $f$ is constructed by Theorem~\ref{th:AKK16} and its query complexity is $\tilde{\Omega}(n^{2k+1})$. By Lemma~\ref{lem:xor_dt} its decision tree must have size $2^{\tilde{\Omega}(n^{2k+1})}$. Now the theorem claim holds for $m=n^{k+1}$.
\end{proof}

\subsection{{\hit} and {\tres}(\texorpdfstring{$\oplus$}{⊕}) are incomparable}
\label{sec:linear-hitting-vs-tree-like-resolution}

\subsubsection{A hard formula for {\hit}}
\label{sec:hitting-xor-vs-resolution}

In this section we observe that there exist formulas that are easy for {\tresp} and exponentially hard for {\hit}. For this, we recall the separation of {\tresp} from {\res} shown in \cite{IS20} for Tseitin formulas.

\begin{definition}[Tseitin formulas $T_{G,c}$]\label{def:tseitin}
For a constant-degree graph $G=(V,E)$ and a 0/1 vector $c$ of ``charges'' for the vertices, 
consider the following linear system in the variables $x_{e}$ for $e\in E$:
\[
\bigwedge_{v\in V}\; 
\left(
\bigoplus_{e\ni v} 
x_{e}=c_v 
\right),
\]
where $\bigoplus_{v\in V} c_v=1$.
In the corresponding Tseitin formula $T_{G,c}$ in CNF
each vertex constraint $\bigoplus_{e\ni v}x_e=c_v$
expands into $2^{\deg v - 1}$ clauses of width $\deg v$.
\end{definition}

\begin{theorem}[\cite{Urq87}]
\label{th:res-lowerbound-tseitin}
    There exists a family of constant-degree graphs $G_n$ with $n$ nodes and a family of charge vectors $c_n$ such that $\mathrm{Ts}_{G_n,c_n}$ requires {\res} refutation of size $2^{\Omega(n)}$.
\end{theorem}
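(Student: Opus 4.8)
The plan is to follow the now-standard route via the size--width tradeoff of Ben-Sasson and Wigderson, reducing the desired exponential size lower bound to a linear width lower bound for Tseitin formulas over expander graphs.

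First I would fix $G_n$ to be a family of $d$-regular graphs on $n$ vertices with $d=O(1)$ that are good edge expanders: there is a constant $\delta>0$ such that every vertex set $U$ with $n/3\le |U|\le 2n/3$ has at least $\delta n$ boundary edges, where a boundary edge of $U$ is an edge with exactly one endpoint in $U$. Such graphs exist explicitly (e.g.\ Ramanujan graphs) and also via a probabilistic argument, which already suffices. Pick any $c_n\in\{0,1\}^V$ with $\bigoplus_v (c_n)_v=1$, so $T_{G_n,c_n}$ is unsatisfiable; since $d=O(1)$ it has $O(n)$ variables, $O(n)$ clauses, and initial width $O(1)$. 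By the size--width tradeoff, any {\res} refutation of size $S$ yields one of width $O(1+\sqrt{n\log S})$, so a width lower bound of $\Omega(n)$ forces $\log S=\Omega(n)$, i.e.\ $S=2^{\Omega(n)}$. It therefore suffices to prove that every {\res} refutation of $T_{G_n,c_n}$ contains a clause of width $\Omega(n)$.

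Next I would introduce the complexity measure. For a clause $C$, let $\mu(C)$ be the least size of a set $U\subseteq V$ such that the conjunction of the parity constraints at the vertices of $U$ semantically implies $C$. This measure is subadditive along resolution steps (the union of two implying vertex sets implies the resolvent); every axiom clause has $\mu=1$ (it is implied by its own vertex constraint and is not a tautology, hence not implied by the empty set); and $\mu(\square)=n$, because a subset of the vertex constraints is unsatisfiable only when some union of connected components of $G$ carries odd total charge, which for connected $G$ with $\bigoplus_v(c_n)_v=1$ forces taking all of $V$. Starting at axioms ($\mu=1$) and ending at $\square$ ($\mu=n$), subadditivity guarantees that the refutation contains a clause $C$ with $n/3\le\mu(C)\le 2n/3$: take the first clause whose measure exceeds $n/3$; both its premises appear earlier, hence have measure at most $n/3$, so $\mu(C)\le 2n/3$. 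Note also that for this $C$ the system $\mathrm{axioms}(U)$ of its minimal witness $U$ is satisfiable, since otherwise $\mu(\square)\le\mu(C)<n$.

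Finally I would bound the width of such a $C$ via the edge expansion of a minimal witness. Let $U$ be a minimal vertex set with $\mathrm{axioms}(U)\models C$ and $|U|=\mu(C)$. Fix a boundary edge $e\in\partial U$ with endpoint $v\in U$. By minimality there is an assignment $\alpha$ satisfying $\mathrm{axioms}(U\setminus\{v\})$ and falsifying $C$; since $\mathrm{axioms}(U)\models C$, the assignment $\alpha$ must violate the constraint at $v$. Flipping $x_e$ in $\alpha$ does not affect any constraint at $U\setminus\{v\}$ (as $e$'s only endpoint in $U$ is $v$) and flips the parity at $v$, so the resulting $\alpha'$ satisfies $\mathrm{axioms}(U)$ and hence $C$; since $\alpha$ and $\alpha'$ differ only on $x_e$ and disagree on $C$, the variable $x_e$ occurs in $C$. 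Thus $\mathrm{width}(C)\ge|\partial U|\ge\delta n$, completing the width lower bound and hence the theorem. The main obstacle is the existence of the constant-degree expander family with the balanced-cut bound, which is classical; the remaining points are bookkeeping, chiefly the connectivity/odd-charge computation pinning $\mu(\square)=n$ and checking that the relevant sub-systems are satisfiable so that the implications above are never vacuous. I would also remark that this argument is a repackaging of Urquhart's original bottleneck-counting proof, which could be used instead at the cost of a more intricate presentation.
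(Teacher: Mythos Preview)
Your proof is correct and follows the now-standard modern route: Ben-Sasson--Wigderson size--width tradeoff combined with a linear width lower bound obtained via the subadditive vertex-measure $\mu$ and edge expansion. All the steps check out, including the minimality argument showing each boundary-edge variable must occur in the medium-complexity clause, and the bookkeeping about connectivity ensuring $\mu(\square)=n$.

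Note, however, that the paper does not give its own proof of this statement at all: it is simply quoted as a classical result with a citation to Urquhart~\cite{Urq87}. So there is nothing to compare against in the paper itself. Your writeup is a clean self-contained argument; as you yourself note, it is a repackaging of the original bottleneck-counting proof in the language of width, which is the presentation most readers today would expect.
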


\begin{theorem}[\cite{IS20}]
    For any graph $G$ and charges $c$ the Tseitin formula $\mathrm{Ts}_{G,c}$ has a tree-like Res($\oplus$) refutation of size linear in the size of the CNF.
\end{theorem}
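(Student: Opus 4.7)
The plan is to invoke the standard correspondence between {\tresp} refutations and parity decision trees solving the falsified-clause search problem, and to exhibit such a tree of size linear in the CNF encoding of $\mathrm{Ts}_{G,c}$. Since the CNF has $\sum_v 2^{\deg v - 1}$ clauses, ``linear in the CNF size'' permits $O(\sum_v 2^{\deg v})$ total nodes.

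First I would observe that because $\bigoplus_v c_v = 1$ while $\bigoplus_v \bigoplus_{e\ni v} x_e = 0$ (each edge is counted twice), on every assignment at least one vertex constraint $E_v \colon \bigoplus_{e\ni v} x_e = c_v$ must be violated. Fix an arbitrary ordering $v_1,\dots,v_n$ of the vertices and build a \emph{spine} of length $n-1$ whose $i$-th node queries the linear form $\bigoplus_{e \ni v_i} x_e$: if the returned value equals $c_{v_i}$ (constraint satisfied), continue to the $(i{+}1)$-st spine node; otherwise branch off into a subtree dedicated to $v_i$. Because the sum of charges is $1$ but the sum of left-hand sides is $0$, if $E_{v_1},\dots,E_{v_{n-1}}$ are all satisfied then $E_{v_n}$ is necessarily violated; the all-satisfied branch of the last spine node is therefore directed straight into a subtree for $v_n$ without ever querying $E_{v_n}$.

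Inside the subtree dedicated to a violated vertex $v$, we still need to identify which of the $2^{\deg v - 1}$ violating assignments actually occurred. To do so, I would query $\deg(v)-1$ of the incident variables $x_e$ using an ordinary decision tree of depth $\deg(v)-1$ and hence of $2^{\deg v - 1}$ leaves; together with the known value of $E_v$ this pins down all $\deg v$ variables incident to $v$, and hence the unique falsified input clause at that vertex. Summing the spine with all of these vertex subtrees gives total size $O\!\left(n + \sum_v 2^{\deg v - 1}\right) = O(|\mathrm{Ts}_{G,c}|)$, which is linear in the CNF.

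The final step is to translate this parity decision tree into a {\tresp} refutation via the standard dictionary: each leaf is labelled by the input clause it was designed to falsify, and each internal node querying a form $\ell$ is labelled by the Res$(\oplus)$-resolvent on $\ell$ of (appropriate weakenings of) its two children's clauses; the root then carries the empty clause. The main obstacle, and really the only point that needs any care, is to check that every leaf is labelled with a genuine input axiom --- but this is immediate from the construction, because each vertex subtree's leaves were designed to pick out exactly one of the $2^{\deg v - 1}$ assignments violating $E_v$, which matches precisely one clause of the CNF encoding of $E_v$.
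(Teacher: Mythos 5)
Your construction is correct, and since the paper cites this result from Itsykson--Sokolov without reproducing a proof, there is no in-paper argument to compare against; your ``spine plus local subtrees'' construction is the standard proof from \cite{IS20}. The parity argument ($\bigoplus_v c_v = 1$ while $\bigoplus_v \bigoplus_{e\ni v} x_e = 0$) correctly forces some vertex constraint to fail; the spine of parity queries locates such a vertex $v$, and the local ordinary decision tree of depth $\deg(v)-1$, combined with the already-determined parity of the edges at $v$, pins down the unique falsified clause, giving a parity decision tree of size $O\!\bigl(n + \sum_v 2^{\deg v - 1}\bigr)$ which converts to a {\tresp} refutation by the usual dictionary. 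One small point worth making explicit when you write it up: some branches may be inconsistent or some queries linearly dependent on the prefix (e.g.\ an edge variable already forced by earlier spine queries), but such branches can only be pruned, so the size bound survives; and at the last spine node the value of $E_{v_n}$ is only \emph{implied} by the path rather than queried, which is exactly why the leaf labelling must weaken the falsified axiom to a clause consisting of negations of the actually-queried forms (the implied equation $E_{v_n}=\bar c_{v_n}$ cannot appear in the clause, but the path still entails the falsity of every equation of that axiom, so the weakening step at the leaf is sound).
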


Given the quasi-polynomial simulation of Theorem~\ref{th:tres-qsim-hit}
and the following generalization of Theorem~\ref{th:hit-sim-tres}, 
we can separate {\hit} from {\tresp} and {\hitp}.

\begin{proposition}\label{prop:hitp-sim-tresp}
If $F$ has a tree-like {\resp} refutation of size $s$, 
then it has a {\hitp} refutation of size $s$.
\end{proposition}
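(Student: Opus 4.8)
The plan is to transplant the decision-tree picture used in the proof of Theorem~\ref{th:hit-sim-tres} to the affine setting. A tree-like \resp refutation of $F$, read top-down from the final empty $\oplus$-clause, is a \emph{parity decision tree}: each internal node is an application of the resolution rule over some affine form $\ell$, viewed as a query to $\ell$ with outgoing edges labelled $\ell = 0$ and $\ell = 1$ (the edge $\ell = \alpha$ leads to the premise whose $\ell$-disjunct is $\ell = \alpha\oplus1$, i.e.\ the premise not made true by this value), and each leaf is labelled by an axiom, i.e.\ a clause $L$ of $F$ (or, if the proof system allows weakening, a weakening of one). First I would verify the usual path invariant: if a node $v$ is reached by the equations $\ell_1 = b_1,\dots,\ell_t = b_t$, defining an affine subspace $A_v \subseteq \{0,1\}^n$, then every point of $A_v$ falsifies every disjunct of the $\oplus$-clause at $v$. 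This holds at the root ($A_{\mathrm{root}} = \{0,1\}^n$, empty clause) and is preserved along each edge, since adding $\ell = b_j$ keeps the disjuncts of the surviving premise false and makes its new $\ell$-disjunct false as well. At a leaf $\pi$ the invariant says that the whole subspace $A_\pi$ determined by the path falsifies the leaf clause $L_\pi$. As in the Boolean case the subspaces $\{A_\pi\}$ over all leaves partition $\{0,1\}^n$ (every assignment follows a unique path); dropping the leaves with $A_\pi = \emptyset$ (which are exactly the ones reached by an infeasible path, and in particular are the only leaves that can carry a tautological label) leaves a partition into nonempty affine subspaces all of whose labels are genuine clauses of $F$.

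Next, for each surviving leaf $\pi$ with $A_\pi = \{x : \ell_j(x) = b_j,\ j\in[t]\}$ I would form the $\oplus$-clause
\[
N_\pi \;:=\; L_\pi \;\vee\; \bigvee_{j=1}^{t}\bigl(\ell_j = b_j\oplus 1\bigr).
\]
A point falsifies $N_\pi$ exactly when it falsifies $L_\pi$ and satisfies $\ell_j = b_j$ for all $j$; since $A_\pi$ is contained in the falsifying set of $L_\pi$ by the invariant, the falsifying set of $N_\pi$ is precisely $A_\pi$. Put $H := \bigwedge_\pi N_\pi$. The falsifying sets of the $\oplus$-clauses of $H$ are the $A_\pi$, which are pairwise disjoint, so $H$ is a hitting$(\oplus)$ formula, and they cover $\{0,1\}^n$, so $H$ is unsatisfiable. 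Finally $L_\pi \subseteq N_\pi$ as sets of affine equations and $L_\pi$ (or its clause of $F$) is a clause of $F$, so each $\oplus$-clause of $H$ has a strengthening in $F$; hence $H$ is a valid \hitp refutation of $F$. Its size equals the number of leaves of the tree, which is at most the number of lines $s$ of the tree-like \resp refutation.

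The only point requiring care — and it is exactly the point already handled in Theorem~\ref{th:hit-sim-tres}, now over $\GF(2)$ — is the bookkeeping in the parity-decision-tree reading of a tree-like \resp proof: matching each edge of a resolution step to the right premise and checking that the affine-subspace invariant propagates. Once that is pinned down, the construction of $H$ and the verification that it is an unsatisfiable hitting$(\oplus)$ formula whose $\oplus$-clauses weaken clauses of $F$ are immediate, which is why the statement is only a proposition.
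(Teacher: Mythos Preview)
Your proof is correct and follows exactly the route the paper intends: its one-line proof simply says ``by analogy with Theorem~\ref{th:hit-sim-tres}, the leaves of a tree-like {\resp} refutation form a {\hitp} refutation,'' and you have spelled out precisely that analogy via the parity-decision-tree reading. Your explicit inclusion of $L_\pi$ in $N_\pi$ is a good touch, since in the affine case the literal-equations of $L_\pi$ need not literally appear among the negated path equations (unlike the Boolean case), so this is the cleanest way to guarantee the syntactic containment $L_\pi\subseteq N_\pi$ required by the definition of a {\hitp} refutation.
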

\begin{proof}
By analogy with Theorem~\ref{th:hit-sim-tres}, 
the leaves of a tree-like {\resp} refutation form a {\hitp} refutation. 
\end{proof}

\begin{corollary}
\label{cor:tlresxor-vs-hitting}
  There exists a family of CNF formulas $F_n$ such that $F_n$ requires Resolution refutation of size $2^{\Omega(n)}$, Hitting refutation of size $2^{n^{\Omega(1)}}$ and admits polynomial-size tree-like Res($\oplus$) refutation (and, consequently, polynomial-size {\hitp} refutation).
\end{corollary}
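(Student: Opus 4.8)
The plan is to instantiate the family $F_n$ as Tseitin formulas on the expander graphs $G_n$ with charge vectors $c_n$ supplied by Theorem~\ref{th:res-lowerbound-tseitin}, namely $F_n := \mathrm{Ts}_{G_n,c_n}$. Three properties must be checked: (i) $F_n$ requires {\res} refutations of size $2^{\Omega(n)}$; (ii) $F_n$ requires {\hit} refutations of size $2^{n^{\Omega(1)}}$; (iii) $F_n$ has polynomial-size {\tresp} refutations and hence polynomial-size {\hitp} refutations.

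Property~(i) is exactly the statement of Theorem~\ref{th:res-lowerbound-tseitin}. Property~(iii) also follows immediately: Theorem~\cite{IS20} (the one stated just above) gives a tree-like {\resp} refutation of $\mathrm{Ts}_{G,c}$ of size linear in the size of the CNF, hence polynomial in $n$ since the graphs have constant degree; then Proposition~\ref{prop:hitp-sim-tresp} converts this into a {\hitp} refutation of the same size. So the only real content is property~(ii), the {\hit} lower bound. For this I would invoke the quasi-polynomial simulation of Theorem~\ref{th:tres-qsim-hit}: if $F_n$ had a {\hit} refutation of size $M$, it would have a {\tres} refutation of size $O(2^{2\log^3 M})$, which in turn is at least a {\res} refutation of that size. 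Combining with the $2^{\Omega(n)}$ {\res} lower bound from~(i) forces $2^{2\log^3 M} = 2^{\Omega(n)}$, i.e.\ $\log^3 M = \Omega(n)$, so $M = 2^{\Omega(n^{1/3})} = 2^{n^{\Omega(1)}}$. (One could just as well route through the {\tres} lower bound for Tseitin formulas, but going via {\res} is cleaner since it uses the already-cited Urquhart bound directly.)

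The argument is essentially a two-line chaining of previously established results, so there is no substantial obstacle; the only thing to be careful about is bookkeeping the quasi-polynomial loss, i.e.\ confirming that the triple-log blow-up in Theorem~\ref{th:tres-qsim-hit} degrades the exponential {\res} lower bound only to a sub-exponential-but-still-superpolynomial $2^{n^{\Omega(1)}}$ bound for {\hit}, rather than something weaker. Since $n \mapsto n^{1/3}$ is still polynomially large, the bound $2^{n^{\Omega(1)}}$ claimed in the corollary is exactly what comes out, and no quantitative subtlety remains.
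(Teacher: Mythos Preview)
Your proposal is correct and follows exactly the same route as the paper: instantiate $F_n$ as the Tseitin formulas $\mathrm{Ts}_{G_n,c_n}$ from Theorem~\ref{th:res-lowerbound-tseitin}, use \cite{IS20} together with Proposition~\ref{prop:hitp-sim-tresp} for the {\tresp} and {\hitp} upper bounds, and derive the {\hit} lower bound by combining the {\res} lower bound with the quasi-polynomial simulation of Theorem~\ref{th:tres-qsim-hit}. The paper's proof is a one-liner that leaves the $2^{n^{\Omega(1)}}$ bookkeeping implicit, but your explicit computation $\log^3 M = \Omega(n) \Rightarrow M = 2^{\Omega(n^{1/3})}$ is exactly the intended unpacking.
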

\begin{proof}
  Take $\mathrm{Ts}_{G_n, c_n}$ from Theorem~\ref{th:res-lowerbound-tseitin} and apply Prop.~\ref{prop:hitp-sim-tresp}.
\end{proof}

\subsubsection{A hard formula for \texorpdfstring{{\tresp}}{tl-Res(xor)}}

In addition to separating {\hit} from {\tres}, we can follow the same plan to separate it from a stronger {\tresp} proof system, that is, to lift a separation between unambiguous certificate complexity and query complexity. We cannot use decision tree size to bound {\tresp} size, but rather the stronger randomized communication complexity measure.

\begin{theorem}[{\cite[Theorem~3.11]{IS20}}]
\label{th:res-parity-communication-sim}
  Let $F$ be an unsatisfiable CNF that has tree-like Res($\oplus$) refutation of size $t$ then the randomized communication complexity of the falsified clause search problem for $F$ is $O(\log t)$.
\end{theorem}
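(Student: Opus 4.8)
Since this statement is quoted from \cite{IS20}, I only recall how one would argue it; the same idea reappears when we transfer communication lower bounds into {\hitp} later on. The plan is to run the familiar divide-and-conquer (game-theoretic) simulation of tree-like refutations by low-communication protocols directly on the proof tree: a size-$t$ tree-like refutation lets two players walk from the contradiction at the root down to a falsified axiom in only $O(\log t)$ rounds, by repeatedly recursing into a balanced subtree. The one new ingredient for {\tresp} is that deciding whether a given proof line is falsified by the \emph{split} input assignment is no longer an $O(1)$ deterministic question, as it is for ordinary {\tres}, but an \scterm{Equality} instance --- which is exactly why the appropriate resource here is \emph{randomized} communication rather than, say, a parity decision tree.

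In detail, fix any partition of the variables of $F$ between $\Alice$ and $\Bob$, and let $\pi$ be a tree-like {\tresp} refutation with $t$ nodes, the root labelled by the empty $\oplus$-clause and each leaf by a clause of $F$. On a split input assignment $\alpha$ the players maintain the invariant that they hold a subtree $T'$ of $\pi$, rooted at a node $r$, which is still a legal derivation of its root line $C_r$ from the clauses of $F$ at its original leaves together with some ``cut'' leaves --- each $\oplus$-clause \emph{satisfied} by $\alpha$ --- while $C_r$ itself is \emph{falsified} by $\alpha$; this holds initially with $T' = \pi$. At each round the players pick, with no communication since $\pi$ is public, a node $v \in T'$ whose $T'$-subtree $T'_v$ has size roughly between $|T'|/3$ and $2|T'|/3$ (such $v$ exists in any binary tree; subtrees of constant size are finished off by brute force), and they test whether $\alpha$ falsifies $C_v$. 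Writing $C_v = \bigvee_j (f_j = a_j)$ and splitting each linear form as $f_j = f_j^{\Alice} + f_j^{\Bob}$ along the partition, $\alpha$ falsifies $C_v$ iff the $\Alice$-vector $(f_j^{\Alice}(\alpha))_j$ equals the $\Bob$-vector $(f_j^{\Bob}(\alpha) \oplus a_j \oplus 1)_j$ --- an \scterm{Equality} instance, solved by a public-coin protocol with $O(\log(1/\varepsilon))$ bits and error $\varepsilon$. If $\alpha$ falsifies $C_v$ the players recurse into $T'_v$; otherwise they replace $T'_v$ inside $T'$ by the single node $v$ (now a satisfied cut leaf) and recurse on the pruned tree. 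In both cases $|T'|$ shrinks by a constant factor, so after $O(\log t)$ rounds $T'$ is a single node, which being falsified by $\alpha$ cannot be a cut leaf and hence is a clause of $F$ falsified by $\alpha$; they output it, at a cost of $O(\log t)$ further bits for the index. Correctness of each step is immediate, since $T'$ is always a valid derivation of the falsified line $C_r$ from satisfied cut leaves and clauses of $F$, so some clause of $F$ inside $T'$ must be falsified.

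The one delicate point --- and the only thing that makes this more than the {\tres} case --- is controlling the error over the $O(\log t)$ rounds. A crude union bound wants per-round error $\varepsilon = \Theta(1/\log t)$, hence $\Theta(\log\log t)$ bits per round and $O(\log t \cdot \log\log t)$ overall; since the \scterm{Equality} test can be taken one-sided (equal vectors are never rejected, so the only failure mode is descending into a subtree containing no falsified axiom), this is trimmed back to the claimed $O(\log t)$, which is what \cite{IS20} record. Finally, the bound holds for \emph{every} partition of the variables of $F$, which is the form we use after composing $F$ with a communication gadget.
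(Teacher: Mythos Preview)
The paper does not prove this statement; it is quoted from \cite{IS20} and used as a black box in the proof of Theorem~\ref{th:hitting-vs-treeresxor}. Your sketch of the divide-and-conquer simulation --- walking the proof tree, maintaining a falsified root and satisfied cut leaves, and reducing each step to an \scterm{Equality} instance on the split linear forms --- is the standard argument and is correct in outline.

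The one point that is not justified is your final paragraph. You claim that taking \scterm{Equality} one-sided (never rejecting equal pairs) by itself trims the bound from $O(\log t\cdot\log\log t)$ to $O(\log t)$, but the argument you give does not establish this. One-sidedness guarantees that every \emph{prune} step is correct, so cut leaves are always genuinely satisfied; however, a wrong ``equal'' verdict can still send you into a subtree whose root is satisfied, breaking the invariant, and since which round is the final descent is itself a function of the randomness, a stopping-time style argument does not immediately give error $\le\varepsilon$ for the last test alone. The naive analysis still union-bounds over all $O(\log t)$ rounds and requires per-round error $O(1/\log t)$, hence $O(\log\log t)$ bits per round. Getting a clean $O(\log t)$ needs either a sharper protocol-specific argument or an appeal to a known simulation of \scterm{Equality}-oracle protocols; you should either supply that or check what \cite{IS20} actually record. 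Either way the distinction is immaterial here: the only use of this theorem in the paper concludes $\log s=\tilde\Omega(n^{2k+1})$, which absorbs polylogarithmic factors.
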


An analogue of Lemma~\ref{lem:xor_dt} for randomized communication complexity also holds, with the difference that we need to compose the DNFs $F_0$ and $F_1$ from Theorem~\ref{th:tres-l-b} with the indexing function instead of $\oplus$.
The indexing function $\indexing_m\colon [m] \times \{0,1\}^m \to \{0,1\}$ is defined as  $\indexing_m(i, x) = x_i$, i.e. it accepts an index and a vector and returns the element of the vector with the given index. Observe that $\indexing_m$ has a decision tree of depth $\lceil \log_2 m \rceil + 1$: we first query the index and then query a single bit of the vector.

\begin{theorem}[\cite{GPW17}]
\label{th:randomized-lifting}
    If a function $f\colon \{0,1\}^n \to \{0,1\}$ requires a randomized decision tree of depth $t$, then the function $f \circ (\indexing_m)^n$ where $m=n^{256}$ requires randomized communication cost $\Omega(t \log n)$.
\end{theorem}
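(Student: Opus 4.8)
The plan is to prove the randomized query-to-communication lifting theorem of Göös--Pitassi--Watson, following the structure of the Raz--McKenzie deterministic lifting argument adapted to the randomized (distributional) setting. Write $F = f \circ \indexing_m^n$ with $m = n^{256}$. I will argue the contrapositive: from a randomized protocol for $F$ of cost $c$ and error $1/3$ I will build a randomized decision tree for $f$ of depth $O(c/\log n)$, so that $\R(f) = O(c/\log n)$ after standard amplification. The first step is to pass to a fixed hard distribution: by Yao's minimax principle $\R(f) = \max_\mu \D_\mu^{1/3}(f)$, the largest over input distributions $\mu$ of the least depth of a \emph{deterministic} decision tree computing $f$ with error at most $1/3$ under $\mu$; fix the worst such $\mu$. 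Lift $\mu$ to a distribution $\mathcal D$ on inputs of $F$ by drawing $z \sim \mu$ and then, independently in each block $j \in [n]$, a uniformly random pointer and array whose $\indexing_m$-value equals $z_j$. Averaging out the public coins of the protocol yields a \emph{deterministic} protocol $\Pi$ of cost $c$ with error at most $1/3$ under $\mathcal D$. It then suffices to simulate $\Pi$ by a deterministic decision tree of depth $O(c/\log n)$ and error $1/3 + o(1)$ under $\mu$.

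The simulation is the usual rectangle walk with a density invariant. Identify Alice's input with the pointers $x = (x^1,\dots,x^n) \in ([m])^n$ and Bob's with the arrays $y = (y^1,\dots,y^n) \in (\{0,1\}^m)^n$, so that block $j$ outputs $y^j_{x^j}$. On input $z$, the decision tree simulates $\Pi$ while maintaining a combinatorial rectangle $R = \mathcal X \times \mathcal Y$ of inputs consistent with the transcript so far, a set $I \subseteq [n]$ of ``revealed'' blocks, and a partial assignment $\rho$ with $\supp\rho = I$ recording the queried bits of $z$. The invariant is that $\mathcal X$ is \emph{blockwise dense} outside $I$ (no projection onto a small set of unrevealed blocks is concentrated: min-entropy rate close to $1$) and $\mathcal Y$ stays large in the corresponding sense. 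Each time $\Pi$ sends a bit, $R$ shrinks to one of its halves, possibly violating density; I then run a \emph{density-restoring partition}: as long as some unrevealed block witnesses a density violation, query the corresponding bit $z_j$, add $j$ to $I$ with $\rho_j := z_j$, and restrict $R$ to the densest part consistent with the answer. When $\Pi$ halts, the decision tree outputs $\Pi$'s answer.

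Two estimates finish the argument. For the depth: a single communicated bit creates only $O(1)$ units of min-entropy deficit against the unrevealed blocks, whereas a density-restoring step removes an unrevealed block only after $\Omega(\log m) = \Omega(\log n)$ units of deficit have accumulated against it (this is exactly where $m$ polynomially large in $n$ is used), so over the whole run at most $O(c/\log n)$ blocks ever enter $I$; hence the tree has depth $O(c/\log n)$. For correctness: at a leaf of $\Pi$ the rectangle $R$ is dense outside $I$, and the uniform-marginals (full-support) lemma for $\indexing_m$ says that for $(x,y)$ uniform in $R$ the string $\indexing_m^n(x,y)$ is $o(1)$-close in statistical distance to $\rho$ padded with independent uniform bits on $[n]\setminus I$. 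Hence $\Pi$'s output depends on the completion of $z$ beyond $\rho$ only up to $o(1)$ error, i.e. it agrees with a value determined by $\rho$ alone --- which is what the decision tree outputs. Transferring this through the distributional guarantee of the first step, the simulated tree errs on $z \sim \mu$ with probability at most $\mathrm{err}_{\mathcal D}(\Pi) + o(1) \le 1/3 + o(1)$, as needed.

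I expect the correctness step to be the main obstacle. In the deterministic Raz--McKenzie setting this step is trivial --- a dense rectangle contains an input realizing every completion, so the protocol's unambiguous answer must equal $f$ evaluated on any consistent completion of $\rho$ --- but a randomized protocol may err on a small fraction of inputs, and one must control quantitatively how error measured under the padded distribution $\mathcal D$ on $F$ transfers into error under $\mu$ for the simulated tree, ensuring that the density-restoring queries do not systematically steer toward the hard inputs and that the accumulated $o(1)$ slacks (from the statistical-distance estimates and from re-amplifying $1/3 + o(1)$ down to $1/3$ in the final decision tree) genuinely stay $o(1)$. Choosing the density thresholds and the gadget size $m = n^{256}$ so that all these error terms are polynomially small is the quantitatively delicate part of the proof.
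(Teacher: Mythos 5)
The paper does not prove this statement at all: Theorem~\ref{th:randomized-lifting} is imported as a black box from the cited work of G\"o\"os, Pitassi and Watson, so there is no internal proof to compare against. Your outline does reproduce the architecture of the cited proof --- Yao's minimax, lifting $\mu$ to a product distribution over gadget inputs, fixing the public coins, and a rectangle walk maintaining blockwise density with a density-restoring partition, finished by a uniform-marginals lemma for the index gadget. So the approach is the right one.

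As a proof, however, it has genuine gaps at exactly the two places where the real work lives, both of which you assert rather than establish. First, the depth bound: your claim that ``over the whole run at most $O(c/\log n)$ blocks ever enter $I$'' is not a worst-case guarantee. The density-restoring partition splits the current rectangle into many parts, and on low-probability parts a single restoration step can reveal many blocks at once; the actual argument bounds the number of revealed blocks only with high probability over the input distribution, truncates the simulation at depth $O(c/\log n)$, and charges the truncation probability to the error. Without the truncation step and the associated concentration/amortization argument (relating the measure of a part to $2^{-\Omega(\log m \cdot |I_i|)}$), the depth claim as written is false. Second, the correctness step: the statement that a blockwise-dense rectangle induces an $\indexing_m^n$-image that is $o(1)$-close to $\rho$ padded with uniform bits is the full-support/uniform-marginals lemma, which is a substantive second-moment (or hypergeometric) estimate and is precisely where the exponent $256$ in $m=n^{256}$ and the choice of density threshold are consumed; you correctly flag this as ``the main obstacle'' but supply neither the lemma nor the bookkeeping that transfers error under $\mathcal D$ to error under $\mu$ along the (input-dependent) path of the walk. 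Since both pillars are deferred, the proposal should be regarded as a correct plan for reproving the cited theorem, not a proof of it.
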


This is all we need to prove the separation.

\begin{theorem}
  \label{th:hitting-vs-treeresxor}
  For every $\epsilon>0$, there is a sequence of unsatisfiable hitting formulas $G_m$ containing $2^{\tilde O(m)}$ clauses of width $\tilde O(m)$ that requires {\tres}$(\oplus)$ proof size $2^{\tilde\Omega(m^{2-\epsilon})}$.
\end{theorem}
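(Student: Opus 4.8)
The proof follows exactly the same template as the proof of Theorem~\ref{th:tres-l-b}, only replacing $\oplus$-lifting by indexing-lifting and decision-tree size by randomized communication complexity. We start from the function $f$ of Theorem~\ref{th:AKK16}, which satisfies $\R(f)=\tilde\Omega(n^{2k+1})$ and $\ucert(f)=\tilde O(n^{k+1})$, with unambiguous DNFs $F_0,F_1$ of width $\tilde O(n^{k+1})$ and hence (by Lemma~\ref{lem:unambiguous-width2size}) of size $2^{\tilde O(n^{k+1})}$. We compose with the indexing gadget $\indexing_m$ on $m=n^{256}$ points, which has a decision tree of depth $O(\log n)$; by Lemma~\ref{lem:unambiguous-certificate-composition}, $F_0 \circ (\indexing_m)^N$ and $F_1 \circ (\indexing_m)^N$ (where $N$ is the number of inputs of $f$) are unambiguous DNFs for $f \circ (\indexing_m)^N$ and $\bar f \circ (\indexing_m)^N$ respectively, of width $\tilde O(n^{k+1})$ and size $2^{\tilde O(n^{k+1})}$.

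\textbf{Key steps.} First I form the CNF $G = \overline{F_0 \circ (\indexing_m)^N \lor F_1 \circ (\indexing_m)^N}$; since the two composed DNFs are the negations of each other and each is unambiguous, $G$ is an unsatisfiable hitting formula, and it is trivially refuted by itself in {\hit}; it has $2^{\tilde O(n^{k+1})}$ clauses of width $\tilde O(n^{k+1})$, giving the claimed bound with $m = n^{k+1}$. Second, suppose $G$ has a {\tresp} refutation of size $s$. By Theorem~\ref{th:res-parity-communication-sim} there is a randomized communication protocol of cost $O(\log s)$ for $\Search(G)$. Third, relabel the output of this protocol as in the proof of Theorem~\ref{th:tres-l-b}: a falsified clause coming from $F_i \circ (\indexing_m)^N$ is relabeled $i$, yielding a randomized communication protocol of cost $O(\log s)$ computing $f \circ (\indexing_m)^N$. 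Fourth, apply the randomized lifting theorem (Theorem~\ref{th:randomized-lifting}): since $\R(f) = \tilde\Omega(n^{2k+1})$, the function $f \circ (\indexing_m)^N$ requires randomized communication cost $\tilde\Omega(n^{2k+1})$ (note the $\log n$ gain in the lifting only helps). Hence $\log s = \tilde\Omega(n^{2k+1})$, i.e. $s = 2^{\tilde\Omega(n^{2k+1})}$. Setting $m = n^{k+1}$ and letting $k\to\infty$ gives $s = 2^{\tilde\Omega(m^{2-\epsilon})}$ for every $\epsilon > 0$.

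\textbf{Main obstacle.} There is no serious obstacle: all the heavy lifting (the query separation of \cite{AKK16}, the communication simulation of {\tresp} of \cite{IS20}, and the randomized query-to-communication lifting of \cite{GPW17}) is imported as a black box. The only points requiring care are bookkeeping ones: checking that the lifted DNF is still unambiguous (so that $G$ is genuinely a hitting formula, not merely unsatisfiable), that the gadget $\indexing_m$ has small enough decision-tree depth so Lemma~\ref{lem:unambiguous-certificate-composition} keeps the width polynomial, and that the parameters of Theorem~\ref{th:randomized-lifting} ($m = n^{256}$ versus the $N = \mathrm{poly}(n)$ inputs of $f$, so that the gadget size is polynomially related to the instance size) match up so the final exponent is as claimed. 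Since $N = O(n^{25^k})$ grows much faster than $m = n^{256}$ for large $k$, one must phrase the lifting for the appropriate polynomial gadget size; this is the one spot where the argument for {\tresp} is slightly more delicate than the $\oplus$-version, but it does not change the asymptotics.
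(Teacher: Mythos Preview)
Your proof is correct and follows the same approach as the paper. The one point you flag---that the gadget must be sized relative to the number $M=O(n^{25^k})$ of inputs of $f_k$ (so $\indexing_{M^{256}}$), not to $n$---is exactly how the paper handles it; with that choice the decision-tree depth of the gadget is $O(\log M)=O_k(\log n)$, so Lemma~\ref{lem:unambiguous-certificate-composition} still gives width $\tilde O(n^{k+1})$ and the rest of the argument goes through verbatim.
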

\begin{proof}
  We construct $F$ as follows: take $f_k\colon \{0,1\}^M \to \{0,1\}$ from Theorem~\ref{th:AKK16}, where $M=O(n^{25^k})$ and $m=\ucert(f_k)=\tilde O(n^{k+1})$, and compose it with $\indexing_{M^{256}}$. By Lemma~\ref{lem:unambiguous-certificate-composition}, $f_k \circ (\indexing_{M^{256}})^M$ and $\lnot (f_k \circ (\indexing_{M^{256}})^M)$ both have unambiguous DNF representations of width $\tilde{O}(n^{k+1})$. Then $F$ is the conjunction of the negated terms in these representations.

  On the one hand, by Lemma~\ref{lem:unambiguous-width2size} $F$ has size $2^{\tilde{O}(m)}$, and since it is already a hitting formula, it serves as a {\hit} proof of itself.

  On the other hand, let $s$ be the size of the smallest {\tresp} refutation of $F$. Then by Theorem~\ref{th:res-parity-communication-sim} there exists a randomized communication protocol solving the falsified clause search problem for $F$ of cost $O(\log s)$. Observe that such protocol can be easily converted into a protocol solving $f \circ (\indexing_{k^{256}})^k$ with the same cost and at most the same probability of error: if the protocol returns a clause corresponding to a term in the representation of $f \circ (\indexing_{k^{256}})^k$ answer 1, otherwise answer $0$. Then by Theorem~\ref{th:randomized-lifting} we have that $\log s = \tilde{\Omega}(n^{2k+1})$, i.e.
  \[s = 2^{\tilde{\Omega}(n^{2k+1})} = 2^{\tilde{\Omega}(m^{2-\epsilon})}. \qedhere\]
\end{proof}

\subsection{Relation to {\res} and {\ns}}
\label{sec:hitting-vs-res-ns}

As we discussed in Section~\ref{sec:hitting-xor-vs-resolution}, a corollary of Theorem~\ref{th:tres-qsim-hit}, which shows that {\tres} quasi-polynomially simulates {\hit}, is that if a proof system $\mathcal{P}$ is exponentially separated from {\tres} then $\mathcal{P}$ is also exponentially separated from {\hit}. Since this is the case with {\res} and {\ns}---which have short proofs of the ordering principle and the bijective pigeonhole principle \cite{BR96} respectively, while {\tres} requires exponentially long proofs of both---we conclude that {\res} and {\ns} are exponentially separated from {\hit}.

In this section we explore whether a simulation or separation in the other direction exists. We show that the formula that we used for the quasi-polynomial
separation of {\hit} from {\tres} has short {\res} refutations, and therefore cannot be used
for showing a separation from {\res}.
We also show that in a sense {\ns} simulates {\hit}.

\subsubsection{Dag-like query complexity of functions}\label{sec:dag-like-query} 


Our resolution upper bound is in fact an upper bound on a width, a measure that can be studied through its characterization as a two-player game~\cite{Pud00,AD08}.
Building on such game, Göös et al.~\cite{GGKS20} introduced the following generalization of the query complexity of a function $f\colon \{0,1\}^n \to M$. We view it as a game between two players, the \emph{querier} and the \emph{adversary}. The querier maintains a partial assignment $\rho\colon [n] \to \{0,1,*\}$. At each step the querier can either query a variable $i \in \rho^{-1}(*)$, in which case the adversary picks a value $\alpha \in \{0,1\}$ and assigns $\rho(i) := \alpha$, or forget a variable $i \in \rho^{-1}(\{0,1\})$, assigning $\rho(i) := *$. Note that the adversary may answer differently the next time a forgotten variable is queried. The game ends only if $\rho$ is a certificate for $f$, that is if there exists $m \in M$ such that for all $x \in \{0,1\}^n$ consistent with $\rho$, we have $f(x) = m$. The width of a particular game $\pi$ is $\dagw(\pi)$, the maximal size (number of assigned variables) of $\rho$ at any step of the game, and the leaf-width is $\leafw(\pi)$, the maximum size of $\rho$ at any terminal step of the game.
The dag-like query complexity of $f$ is the width of the game assuming optimal play where querier aims to minimize the width and adversary aims to maximize it. Denote the dag-like query complexity of $f$ by $\dagw(f)$.

This notion can be similarly defined for relations, and for an unsatisfiable CNF $\phi = \bigwedge_{i\in [m]} C_i$ the dag-like query complexity of the falsified clause search relation $\{(x, i) \mid C_i(x) = 0\}$ is exactly the resolution width of $\phi$~\cite{AD08}. If $\phi$ is a hitting formula, this relation is actually a function.

As a preliminary step towards our upper bound we compute the dag-like query complexity of the $\OR \circ \AND$ function, which is a key part of the construction of Theorem~\ref{th:AKK16}. We begin by proving that
dag-like query complexity is sub-multiplicative with respect to composition, and in fact we can be a bit more precise.

\begin{lemma}
\label{lem:composition-and-width}
  Let $f\colon \{0,1\}^n \to \{0,1\}$ and $g\colon \{0,1\}^m \to \{0,1\}$ be two Boolean functions. Let $\pi$ be the dag-like query complexity game for $g$.
  Then \[ \dagw(f \circ g^n) \le (\dagw(f)-1) \cdot \leafw(\pi) + \dagw(\pi) \leq \dagw(f)\cdot\dagw(g). \]
\end{lemma}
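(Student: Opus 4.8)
The plan is to simulate the dag-like game for $f$ ``on top of'' the game $\pi$ for $g$. Fix a querier strategy $\sigma_f$ for the $f$-game achieving width $\dagw(f)$ — so that against every adversary it keeps a partial assignment of size at most $\dagw(f)$ and halts at a certificate for $f$ — and have the querier for $f\circ g^n$ run $\sigma_f$ while associating each bit $i\in[n]$ that $\sigma_f$ currently remembers with the $i$-th block of variables $x^i_1,\dots,x^i_m$. Each time $\sigma_f$ queries a bit $i$, the simulating querier plays $\pi$ on block $i$ (all of whose variables are currently unassigned, since block $i$ is emptied exactly when $\sigma_f$ forgets bit $i$), letting the adversary of the $f\circ g^n$-game act as $\pi$'s adversary; since $\pi$ wins, this play ends with a block-$i$ assignment of size at most $\leafw(\pi)$ that certifies $g(\vec{x}^i)=\alpha$ for a determined $\alpha\in\{0,1\}$. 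The simulating querier reports $\alpha$ back to $\sigma_f$ as the answer to its query and retains that block-$i$ certificate; when $\sigma_f$ forgets a bit $i$, the querier forgets the (at most $\leafw(\pi)$) assigned variables in block $i$.

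First I would check that this is a legal game for $f\circ g^n$. When $\sigma_f$ halts at a certificate $\rho_f$ for $f$, every bit in the support of $\rho_f$ is backed by a block certificate forcing $g(\vec{x}^i)=\rho_f(i)$; hence for any completion $x$ of the current assignment, the tuple $(g(\vec{x}^1),\dots,g(\vec{x}^n))$ is consistent with $\rho_f$, so $(f\circ g^n)(x)$ equals the value certified by $\rho_f$, and the current assignment is indeed a certificate for $f\circ g^n$. (Re-querying a previously forgotten bit causes no problem: the corresponding block is empty again, $\pi$ is replayed from scratch, and the adversary is free to give a different answer, all consistent with the rules of both games.)

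Then I would bound the width. Memory is largest exactly while $\pi$ is being played on some fresh block $i_0$: at that moment $\sigma_f$ remembers at most $\dagw(f)-1$ bits other than $i_0$ (since immediately after querying $i_0$ its assignment has size at most $\dagw(f)$), each backed by at most $\leafw(\pi)$ assigned variables, while the ongoing play of $\pi$ on block $i_0$ uses at most $\dagw(\pi)$ variables; this totals at most $(\dagw(f)-1)\leafw(\pi)+\dagw(\pi)$. At moments with no ongoing $\pi$-play, $\sigma_f$ remembers at most $\dagw(f)$ bits, for a total of at most $\dagw(f)\leafw(\pi)\le(\dagw(f)-1)\leafw(\pi)+\dagw(\pi)$, using $\leafw(\pi)\le\dagw(\pi)$. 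This establishes the first inequality; the second is immediate from $\leafw(\pi)\le\dagw(\pi)=\dagw(g)$.

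The proof is essentially bookkeeping; the one point to handle with care — which I would flag as the heart of the argument rather than a genuine obstacle — is the off-by-one in the width count: a fresh block under exploration never has to share memory with a full complement of $\dagw(f)$ already-completed blocks, and this is precisely what produces the coefficient $\dagw(f)-1$ rather than $\dagw(f)$.
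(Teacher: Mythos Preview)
Your proof is correct and follows essentially the same approach as the paper: simulate the optimal querier strategy for $f$, and whenever it queries or forgets bit $i$, play $\pi$ on block $i$ or empty block $i$ respectively, maintaining the invariant that each remembered bit of $\sigma_f$ is backed by a terminal certificate of $\pi$ of size at most $\leafw(\pi)$. Your write-up is more explicit than the paper's about correctness at termination and about the off-by-one in the width count, but the argument is the same.
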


\begin{proof}
  We simulate the querier's strategy for $f$ in the following way: suppose at some point the querier for $f$ had an assignment $\rho\colon [n] \to \{0,1,*\}$. Then the querier for $f \circ g^n$ has assignments $\rho'_1, \dots, \rho'_n\colon [m] \to \{0,1, *\}$ such that whenever $\rho(i) = \alpha \neq *$, $\rho'_i$ corresponds to an $\alpha$-certificate of $g$. If the querier for $f$ forgets a variable $i$, we assign $\rho'_i := *^m$, if they query a variable $i$, we run the querying strategy for $g$ within $\rho'_i$ and eventually end up with a certificate for $g$ as an assignment $\rho'_i$, by definition $|(\rho'_i)^{-1}(\{0,1\})| \le \dagw_{\text{out}}$. The maximum number of assigned variables when simulating a query to variable $i$ is then $\dagw(\pi)$ corresponding to assignment $\rho'_i$, and $(\dagw(f)-1) \cdot \dagw_{\text{out}}(\pi)$ corresponding to the remaining assignments.
   \end{proof}

\begin{corollary}
\label{cor:composition-and-width}
  Let $f,g$ be as in Lemma~\ref{lem:composition-and-width}. Then
  \[ \dagw(f \circ g^n) \le (\dagw(f) - 1) \cdot \cert(g) + \min\{\cert_0(g)\cert_1(g), m\}.\]
\end{corollary}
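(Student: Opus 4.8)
### Proof proposal for Corollary~\ref{cor:composition-and-width}

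The plan is to instantiate Lemma~\ref{lem:composition-and-width} with a well-chosen dag-like query game $\pi$ for the inner function $g$. Recall that the lemma gives
\[
\dagw(f\circ g^n)\le(\dagw(f)-1)\cdot\leafw(\pi)+\dagw(\pi)
\]
for \emph{any} strategy $\pi$ of the querier in the dag-like game for $g$. So the entire task reduces to exhibiting a querier strategy $\pi$ for $g$ with $\leafw(\pi)\le\cert(g)$ and $\dagw(\pi)\le\min\{\cert_0(g)\cert_1(g),m\}$, and then reading off the stated bound.

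For the $\leafw(\pi)\le\cert(g)$ part I would have the querier follow the obvious ``find a smallest certificate'' strategy: query variables greedily so that the revealed partial assignment $\rho$ is eventually a certificate of $g$ of size at most $\cert(g)$ (this is possible because every input has a certificate of size at most $\cert(g)$, and the adversary's answers pin us down to one input class as we go; standard argument, a certificate tree of depth $\cert(g)$ suffices). The terminal assignments thus have size at most $\cert(g)$, giving $\leafw(\pi)\le\cert(g)$. For the $\dagw(\pi)\le m$ bound there is nothing to do — the querier never has more than $m$ variables assigned. The content is in the bound $\dagw(\pi)\le\cert_0(g)\cert_1(g)$: I would refine the strategy so that at every intermediate step the current assignment $\rho$ is contained in the union of a smallest $0$-certificate and a smallest $1$-certificate. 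Concretely, maintain a $1$-certificate $C_1$ (size $\le\cert_1(g)$) and a $0$-certificate $C_0$ (size $\le\cert_0(g)$) consistent with the answers so far; query only variables in $C_1\setminus\rho^{-1}(\{0,1\})$ until either $\rho$ is already a $1$-certificate (done) or the adversary's answers contradict $C_1$, at which point switch to querying $C_0$; when switching, forget the variables already queried that are not needed, so the live part of $\rho$ stays inside $C_0\cup C_1$. Actually, to match the product bound $\cert_0(g)\cdot\cert_1(g)$ rather than the sum, the right picture is a decision tree of depth $\cert_1(g)$ each of whose nodes carries a $0$-certificate to be checked of size $\cert_0(g)$ (or symmetrically): so one would argue that $g$ has a certificate structure—a ``$(\cert_1,\cert_0)$-certificate tree''—witnessing $\dagw(g)\le\cert_0(g)\cert_1(g)$, and use that game as $\pi$.

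The main obstacle, and the only place where real care is needed, is verifying that this refined querier strategy is legal in the dag-like model — in particular that the \emph{forgetting} steps are used correctly so that the intermediate width never exceeds $\cert_0(g)\cert_1(g)$ while the adversary, who may answer inconsistently across forget/re-query cycles, cannot force the width up. One must check that after any sequence of adversary answers the querier can still reach a terminal certificate without ever holding more than $\cert_0(g)\cert_1(g)$ assigned variables; the key invariant is that the set of currently-assigned variables is always a subset of (the union of a bounded family of) certificates whose total size is controlled by the product. Once that invariant is set up and maintained through each query/forget step, plugging $\leafw(\pi)\le\cert(g)$ and $\dagw(\pi)\le\min\{\cert_0(g)\cert_1(g),m\}$ into Lemma~\ref{lem:composition-and-width} yields exactly
\[
\dagw(f\circ g^n)\le(\dagw(f)-1)\cdot\cert(g)+\min\{\cert_0(g)\cert_1(g),m\},
\]
completing the proof.
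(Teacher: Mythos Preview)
Your overall plan---instantiate Lemma~\ref{lem:composition-and-width} with a querier strategy $\pi$ for $g$ satisfying $\leafw(\pi)\le\cert(g)$ and $\dagw(\pi)\le\min\{\cert_0(g)\cert_1(g),m\}$---is exactly what the paper does. You also eventually land on the right strategy (``a decision tree of depth $\cert_1(g)$ each of whose nodes carries a $0$-certificate to be checked of size $\cert_0(g)$''), which is the classical argument that $\D(g)\le\cert_0(g)\cert_1(g)$.

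However, the proposal is muddled in two places. First, the opening suggestion that ``a certificate tree of depth $\cert(g)$ suffices'' to find a certificate is wrong: that would give $\D(g)\le\cert(g)$, which fails in general. Second, and more importantly, the ``main obstacle'' you identify is not an obstacle at all. The strategy $\pi$ is simply an ordinary decision tree: repeatedly query all variables of some $0$-certificate consistent with the answers so far; if the answers match it, stop; otherwise the $1$-certificate complexity of the restricted function drops by one, so after at most $\cert_1(g)$ rounds (and at most $\cert_0(g)\cert_1(g)$ queries total, never repeating a variable so also at most $m$) you have determined $g$. \emph{No forgetting happens during the play}, so adversary inconsistency across forget/re-query cycles never arises. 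The only forgetting occurs once, at the very end: having determined $g$, the current assignment contains a certificate of size at most $\cert(g)$, and you forget everything outside it. This single terminal trim gives $\leafw(\pi)\le\cert(g)$, while the width during play is bounded by the decision-tree depth $\min\{\cert_0(g)\cert_1(g),m\}$.

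So the argument is considerably simpler than you feared: drop the intermediate forgetting and the invariant about unions of certificates, and the proof is immediate.
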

\begin{proof}
  Consider the standard strategy used to bound the query complexity of $g$ in terms of its certificate complexity, which can be described as follows. Let us pick an arbitrary $0$-certificate $\rho$ for $g$ and query all its variables. If the values match $\rho$, we return it, otherwise, we claim that the current partial assignment $\tau$ contains a variable from every $1$-certificate of $g$. Thus $g|_\tau$ has 1-certificate complexity at most $\cert_1(g) - 1$. Let us apply this procedure (query some $0$-certificate that agrees with the current partial assignment) until we find a $0$-certificate or the certificate complexity shrinks to zero. In the latter case, there exists a $1$-certificate which is a restriction of the current assignment.

  Using this strategy for $g$ as the game $\pi$ in Lemma~\ref{lem:composition-and-width} (and forgetting all the variables outside the found certificate just before the game ends) we get $\dagw(\pi) \le \cert_0(g) \cert_1(g)$ and $\leafw(\pi) \le \max\{\cert_0(g), \cert_1(g)\}$. Since we never have to query a variable twice we can assume $\dagw(\pi) \le m$.
\end{proof}

\begin{corollary}
  \label{cor:width-or}
    For $g$ as in Lemma~\ref{lem:composition-and-width} we have
    \begin{align*} 
    \dagw(\AND_n \circ g^n) \le (n-1)\cert_1(g) + \dagw(g), \\
    \dagw(\OR_n \circ g^n) \le (n-1)\cert_0(g) + \dagw(g).
      \end{align*}
\end{corollary}
\begin{proof}
    Consider a decision tree for $\AND_n$ which queries input bits one by one and returns $0$ as soon as it encounters a $0$-bit. Since in this tree all the assignments have at most a single $0$, the cumulative size of the certificates for $g$ that we store until the end of the game is at most $(n-1) \cert_1(g)$. This combined with the argument in Lemma~\ref{lem:composition-and-width} yields the upper bound. The upper bound for $\OR_n$ is analogous.
\end{proof}

\begin{corollary}
  \label{cor:width-or-and}
  $\dagw(\OR_n \circ \AND_n) \leq 2n-1$.
\end{corollary}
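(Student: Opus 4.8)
The goal is to bound $\dagw(\OR_n \circ \AND_n)$ using the machinery just developed. The plan is to apply Corollary~\ref{cor:width-or} with $g = \AND_n$, so I first need to pin down the relevant query-complexity parameters of $\AND_n$: its $0$-certificate complexity $\cert_0(\AND_n)$, and its dag-like query complexity $\dagw(\AND_n)$. A single $0$-bit certifies that $\AND_n = 0$, so $\cert_0(\AND_n) = 1$; the unique $1$-certificate fixes all $n$ bits, so $\cert_1(\AND_n) = n$. For the dag-like query complexity of $\AND_n$ itself, the querier can simply query the bits one at a time, stopping as soon as a $0$ is found (in which case a width-$1$ assignment is already a certificate) and otherwise ending with the all-ones assignment; this never holds more than $n$ assigned variables, so $\dagw(\AND_n) \le n$. (In fact $\dagw(\AND_n) \le n$ is all we need; no lower bound is required here.)

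Plugging these into the second line of Corollary~\ref{cor:width-or}, namely $\dagw(\OR_n \circ g^n) \le (n-1)\cert_0(g) + \dagw(g)$, with $g = \AND_n$ gives $\dagw(\OR_n \circ \AND_n) \le (n-1)\cdot 1 + n = 2n - 1$, which is exactly the claimed bound. So the proof is essentially a one-line substitution once the parameters of $\AND_n$ are identified.

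There is really no serious obstacle here: the corollary does all the work, and the only thing to check is that $\cert_0(\AND_n)=1$ and $\dagw(\AND_n)\le n$, both of which are immediate from the definitions. If I wanted to be careful I would double-check that the decision-tree-based strategy for $\AND_n$ used inside the proof of Corollary~\ref{cor:width-or} is the one where all intermediate partial assignments have at most a single $0$-entry — that is precisely the "query bits one by one, return $0$ on the first $0$-bit" strategy, which is why the cumulative stored-certificate size is $(n-1)\cert_1(g)$ on the $\AND$ side but here we are on the $\OR$ side so it is $(n-1)\cert_0(g) = (n-1)$. Everything lines up, and the final estimate $2n-1$ follows.

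\begin{proof}
Apply the second inequality of Corollary~\ref{cor:width-or} with $g = \AND_n$. A single $0$-coordinate is a certificate that $\AND_n$ outputs $0$, so $\cert_0(\AND_n) = 1$. Moreover $\AND_n$ has a decision tree of depth $n$ (query the coordinates one at a time, outputting $0$ as soon as a $0$ is seen), and a decision tree immediately yields a dag-like query strategy of the same width, so $\dagw(\AND_n) \le n$. Hence
\[
\dagw(\OR_n \circ \AND_n) \le (n-1)\cert_0(\AND_n) + \dagw(\AND_n) \le (n-1) + n = 2n-1. \qedhere
\]
\end{proof}
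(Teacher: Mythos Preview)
Your proof is correct and matches the paper's approach exactly: the paper treats this corollary as immediate from Corollary~\ref{cor:width-or}, and your computation of $\cert_0(\AND_n)=1$ and $\dagw(\AND_n)\le n$ followed by substitution into the $\OR_n$ bound is precisely what is intended.
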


We also need to introduce a notion of unambiguous dag-like query complexity, and its one-sided variants. We define $\uw(f)$ to be the dag-like query complexity when querier is limited to strategies where the certificates at terminal states are unambiguous. We define $\uw_0(f)$ (resp. $\uw_1(f)$) to be $\uw(f)$ when the adversary always answers consistently with a $0$-input (resp. a $1$-input).

\begin{lemma}
  \label{lem:uw-and}
  Let $f\colon \set{0,1}^n\to\set{0,1}$ be a Boolean function. Then
  \begin{align*}
  \uw_0(\AND_n \circ g^n) &\leq \uw_0(g) + (n-1)\uw_1(g), \\
  \uw_1(\AND_n \circ g^n) &\leq n \cdot \uw_1(g).
  \end{align*}
\end{lemma}

\begin{proof}
  We use the same composition strategy from Lemma~\ref{lem:composition-and-width}, taking the
  same strategy for $\AND_n$ as in Corollary~\ref{cor:width-or-and}, and an unambiguous game $\pi$ for $g$. At the time of evaluating $g_i$ we have that assignments $\rho'_1,\dotsc,\rho'_{i-1}$ are all unambiguous $1$-certificates, hence assigning up to $(i-1)\uw_1(g)$ many variables, and the bounds follow. The terminal certificates are unambiguous because they are the composition of unambigous certificates. 
  Suppose that it is not the case, so there is some input $x$ that is covered with two certificates $c_1, c_2$.
  Certificates $c_1, c_2$ are the compositions of some unambiguous $\AND_n$ certificates $a_1, a_2$, respectively, 
  with the unambiguous certificates for $g$.
  If $a_1 \neq a_2$ then the composed certificates cannot agree. If $a_1 = a_2$ then $c_1$ and $c_2$ differ in the part of certificate that correspond to some copy of $g$. But that is impossible as certificates for $g$ are unambiguous.
\end{proof}

\subsubsection{Upper bound in {\res}}

To construct a {\res} refutation we first reprove the upper bound part of Theorem~\ref{th:AKK16}---separating unambiguous certificate complexity from randomized query complexity---strengthening it to an upper bound for unambiguous dag-like query complexity in place of unambiguous certificate complexity. We need to make a few minor changes arising from the fact that $\dagw(\AND_n)=n$ while $\cert_0(\AND_n)=1$, but using the fact that $\dagw(\OR_n \circ \AND_n)=O(n)$ and not $\Theta(n^2)$ is enough for our purposes.

\begin{lemma}[\cite{AKK16}]
  \label{lem:unambiguous-cheatsheet}
  The following are unambiguous certificates for $\fcs$.
  \begin{itemize}
  \item $0$-certificates: $c$ unambiguous certificates that $f^c=\ell$ together with the contents of $Y_\ell$.
  \item $1$-certificates: the contents of $Y_\ell$ together with the positions described in $Y_\ell$.
  \end{itemize}
\end{lemma}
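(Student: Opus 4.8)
The plan is to verify directly that the two families of partial assignments described in the statement are indeed certificates for $\fcs$, and then to check that within each family any two distinct assignments conflict, and finally that every input of the appropriate type agrees with one of them. First recall the structure of $\fcs$: an input is $(x^1,\dots,x^c,Y_1,\dots,Y_{2^c})$, the "address" $\ell\in[2^c]$ is read off from the bits $\ell_i=f(x^i)$, and $\fcs=1$ exactly when the block $Y_\ell$ spells out valid certificates for the statements $f(x^i)=\ell_i$ for all $i\in[c]$. The key observation, which I would state up front, is that the address $\ell$ is itself determined by a small certificate: fixing, for each $i$, the bits of $x^i$ participating in some certificate for $f(x^i)=\ell_i$ pins down $\ell_1,\dots,\ell_c$ and hence $\ell$; this costs at most $c\cdot\cert(f)$ stars-complement entries, which is within the claimed bound $m = 10\cert(f)\log^2 N$ since $c = 10\log N$.

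For the $1$-certificates: given an input with $\fcs=1$, read $\ell$ off the (already certified) values, then fix all $m$ bits of $Y_\ell$ together with all bit-positions of the $x^i$'s that $Y_\ell$ points to as its certificate data. I would argue this is a certificate by noting that any input agreeing with it has the same $Y_\ell$, and since $Y_\ell$ literally contains certificates for $f(x^i)=\ell_i$ and the relevant $x^i$-bits are fixed, those statements hold, so the address is still $\ell$ and $\fcs$ still evaluates to $1$. For the $0$-certificates: given an input with $\fcs=0$, there is some $i$ for which the certificate data listed in $Y_\ell$ fails; pick $c$ certificates witnessing the actual values $f(x^i)=\ell_i'$ (so they determine the true address $\ell$, which may or may not equal $\ell'$), and append the contents of $Y_\ell$. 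Any input agreeing with this assignment has the same $x^i$-certificates, hence the same address $\ell$, and the same $Y_\ell$, which still fails to certify, so $\fcs=0$.

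For unambiguity I would argue as follows. Two distinct $1$-certificates: if they have different addresses $\ell\neq\ell'$, then because each certificate fixes enough $x^i$-bits to determine its address, and the address is a function of the $x$-part, the two cannot be jointly satisfiable; if they have the same address $\ell$ but differ, the difference is in the bits of $Y_\ell$ (or the pointed-to $x^i$-bits), which are fully fixed by both, so again they conflict. Two distinct $0$-certificates: similarly, they either fix incompatible $x$-data (different underlying $\ell$, or conflicting certificate choices for the same $f(x^i)$ value), or they agree on the $x$-part, hence on $\ell$, and then differ on $Y_\ell$. Finally, coverage: every $1$-input agrees with the $1$-certificate built from its own data, and every $0$-input with the corresponding $0$-certificate, using in the $0$-case that $\fcs$ also records enough structure that a canonical witnessing index can be chosen. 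The main obstacle I anticipate is bookkeeping the width bound carefully — ensuring that "the positions described in $Y_\ell$" and "$c$ certificates for $f^c=\ell$" together stay within $m$, which is where the slack factor $\log N$ between $c\cdot\cert(f)$ and $m$ is used — rather than any conceptual difficulty; the unambiguity arguments are essentially forced once one notices that the address is a small certified function of the $x$-coordinates.
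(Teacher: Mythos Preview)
The paper does not give its own proof of this lemma; it is stated with a citation to \cite{AKK16} and no argument. So there is nothing in the paper to compare against directly, and I evaluate your proposal on its own merits.

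Your argument is essentially correct. The verification that each described partial assignment is a certificate is right, and your unambiguity case analysis works: for $1$-certificates, two with different addresses $\ell\neq\ell'$ must conflict on the $x$-part (since for some $i$ one fixes a $0$-certificate and the other a $1$-certificate for $f$ on $x^i$, and those are always incompatible), while two with the same $\ell$ either differ on the fully fixed block $Y_\ell$ or are identical. For $0$-certificates, the crux is that the $c$ certificates for $f$ are drawn from a \emph{fixed unambiguous family} for $f$, so that two different choices for the same $i$ conflict by definition. You use this implicitly when you write ``conflicting certificate choices for the same $f(x^i)$ value'', but it would be cleaner to say explicitly at the outset that such a family has been fixed; otherwise the reader may wonder why two different $\ell_i$-certificates for $f$ must conflict.

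Two minor points. First, the width bookkeeping you flag at the end is not part of this lemma: the statement only asserts unambiguity, not size bounds, and the quantitative bounds are handled separately in Lemma~\ref{lem:width-cheatsheet}. Second, your remark that coverage in the $0$-case needs ``a canonical witnessing index'' is unnecessary and a little misleading: coverage only requires that each $0$-input agree with \emph{some} certificate in the family, and the one constructed from the input's own unambiguous $f$-certificates and its own $Y_\ell$ does the job without any further choice.
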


\begin{lemma}
  \label{lem:width-cheatsheet}
  Let $f\colon \set{0,1}^N\to\set{0,1}$ be a Boolean function. Then
  \begin{align}
  \dagw(\fcs) &= O(\dagw(f) \log^2 n), \label{eq:dagw-cs}\\
  \uw_0(\fcs) &= O(\uw(f) \log^2 n), \label{eq:uw0-cs}\\
  \uw_1(\fcs) &= O(\dagw(f) \log^2 n). \label{eq:uw1-cs}
  \end{align}

\end{lemma}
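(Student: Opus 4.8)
The plan is to run the dag-like query game for $\fcs$ so that the querier steers the partial assignment towards one of the explicit unambiguous certificates described in Lemma~\ref{lem:unambiguous-cheatsheet}, in the spirit of the composition strategy used in Lemma~\ref{lem:composition-and-width} and Lemma~\ref{lem:uw-and}. The querier proceeds in three phases. In \emph{phase A} it evaluates $f$ on each of the $c=10\log N$ blocks $x^1,\dots,x^c$ by running, on the block $x^i$, the dag-like query game for $f$ witnessing $\dagw(f)$ (for the bounds \eqref{eq:dagw-cs} and \eqref{eq:uw1-cs}) or the unambiguous game witnessing $\uw(f)$ (for the bound \eqref{eq:uw0-cs}); at the end of this subgame it holds a certificate $\rho_i$ for $f(x^i)=\ell_i$ of size at most $\dagw(f)$, respectively $\uw(f)$, and keeps it while moving to block $i+1$. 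After phase A the querier knows $\ell=(\ell_1,\dots,\ell_c)$. In \emph{phase B} it queries all $m=10\cert(f)\log^2 N$ bits of $Y_\ell$. In \emph{phase C} it queries, for each $i$, the $O(\cert(f))$ positions of $x^i$ referenced by the purported certificate recorded in $Y_\ell$, and then forgets any retained data that is not part of the target certificate.

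For the width accounting: while playing the phase-A subgame on block $i$ the querier holds $\rho_1,\dots,\rho_{i-1}$ together with the running assignment of the $f$-game on block $i$, so the phase-A peak width is at most $(c-1)\dagw(f)+\dagw(f)=c\cdot\dagw(f)$, respectively $c\cdot\uw(f)$. Phase B adds at most $m=O(\cert(f)\log^2 N)$, and phase C adds at most $O(c\cdot\cert(f))$. Since $c=O(\log N)$ and, because the terminal state of any dag-like query game is a certificate, $\cert(f)\le\dagw(f)\le\uw(f)$, the total width is $O(\dagw(f)\log^2 N)$ in the cases \eqref{eq:dagw-cs} and \eqref{eq:uw1-cs}, and $O(\uw(f)\log^2 N)$ in the case \eqref{eq:uw0-cs}, as claimed (identifying $\log N$ with $\log n$).

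It remains to check that the terminal partial assignment is a certificate for $\fcs$, and, for \eqref{eq:uw0-cs} and \eqref{eq:uw1-cs}, that across the adversary's choices these terminal assignments form an unambiguous family; this is where Lemma~\ref{lem:unambiguous-cheatsheet} is used. For \eqref{eq:dagw-cs} the querier keeps $\rho_1,\dots,\rho_c$, all of $Y_\ell$, and the referenced $x^i$-bits; this assignment forces every $f(x^i)=\ell_i$, hence pins $\ell$, fully specifies $Y_\ell$, and therefore forces the value of $\fcs$. For \eqref{eq:uw0-cs} the adversary plays a $0$-input, and the querier keeps exactly the data of a $0$-certificate of Lemma~\ref{lem:unambiguous-cheatsheet}: the $c$ \emph{unambiguous} certificates $\rho_i$ (which is why phase A uses the $\uw(f)$-game here) together with the contents of $Y_\ell$; unambiguity of the induced family of $\fcs$-certificates follows coordinatewise from unambiguity of the $\rho_i$'s plus the fact that distinct cheat-sheet contents conflict on the $Y$-bits. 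For \eqref{eq:uw1-cs} the adversary plays a $1$-input; here the querier forgets the $\rho_i$'s in phase C and keeps only the contents of $Y_\ell$ and the $x^i$-positions it references, which is precisely a $1$-certificate of Lemma~\ref{lem:unambiguous-cheatsheet}: those $x^i$-bits already force $f(x^i)=\ell_i$, hence relocate us to $Y_\ell$, whose specified contents force $\fcs=1$. The key point is that the $\rho_i$'s are needed only transiently, to locate $\ell$, so they need not be unambiguous, which is exactly why \eqref{eq:uw1-cs} is stated with $\dagw(f)$ rather than $\uw(f)$.

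The width bookkeeping is routine; the delicate part is matching the terminal state of the game to the certificate families of Lemma~\ref{lem:unambiguous-cheatsheet} — in particular tracking which data the querier must retain in order to force the value of $\fcs$ versus which it may use transiently and then forget. It is precisely this retain-versus-forget distinction that accounts for the gap between the $\uw(f)$ cost in \eqref{eq:uw0-cs} and the $\dagw(f)$ cost in \eqref{eq:uw1-cs}, so getting that distinction right (and confirming that forgotten data never needs to be unambiguous) is the main obstacle I expect.
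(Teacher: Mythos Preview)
Your proof is correct and follows essentially the same approach as the paper's: run the $\dagw(f)$ (respectively $\uw(f)$) game on each of the $c$ blocks to locate $\ell$, query all of $Y_\ell$, then query the positions referenced by $Y_\ell$, and appeal to Lemma~\ref{lem:unambiguous-cheatsheet} for unambiguity of the terminal certificates. Your write-up is in fact more explicit than the paper's on one point that matters: for \eqref{eq:uw1-cs} you spell out that the $\rho_i$'s must be forgotten at the end so that the terminal state is exactly the $1$-certificate of Lemma~\ref{lem:unambiguous-cheatsheet}; the paper's terse ``follow the same strategy as for \eqref{eq:dagw-cs}'' leaves this implicit, but without that forgetting step two runs could agree on $Y_\ell$ and the referenced positions while disagreeing only on the (non-unambiguous) $\rho_i$'s, spoiling unambiguity.
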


\begin{proof}
  To prove~\eqref{eq:dagw-cs} we use the following strategy.
  We query the $c$ copies of $f$ in parallel to obtain a pointer $\ell\in[2^c]$, using width $O(\dagw(f)\cdot c)$. We then query all of $Y_\ell$ and the positions described by $Y_\ell$, checking whether they are indeed a set of certificates for $f^c$, for a total width of $O(\dagw(f)\cdot c+m)$.
  Since we choose $c=10\log N$ and $m=c\cdot \cert(f)\cdot\log N=O(\dagw(f)\log^2N)$, the total width is $O(\dagw(f)\cdot c+m)=O(\dagw(f)\log^2 n)$.

  To get~\eqref{eq:uw1-cs} we follow the same strategy as for~\eqref{eq:dagw-cs}.
  To prove~\eqref{eq:uw0-cs} we use unambiguous strategies to query the $c$ copies of $f$, which increases the width to $O(\uw(f)\cdot c+m)$. By Lemma~\ref{lem:unambiguous-cheatsheet} the certificates we obtain in both cases are unambiguous.
\end{proof}

\begin{lemma}
  \label{lem:dagAKK16}
  Let $k$ be a constant, and let $f_k$ be the function defined in Theorem~\ref{th:AKK16}. Then $\uw(f_k)=\tilde O(n^{k+1})$.
\end{lemma}

\begin{proof}
  Let $g_1 = \OR_n \circ \AND_n$ and $g_k = g_1 \circ (g_{k-1})_\mathrm{CS}$, so that $f_k = \AND_n \circ (g_k)_\mathrm{CS}$.

  Let us first show by induction that $\dagw(g_k)=\tilde O(n^k)$. We already proved the base case $\dagw(g_1)=O(n)$ in Corollary~\ref{cor:width-or-and}.
  Assuming that $\dagw(g_{k})=\tilde O(n^{k})$, the dag-like query complexity of its cheatsheet version is $\dagw((g_{k})_{\mathrm{CS}})=\tilde O(n^{k})$ by Lemma~\ref{lem:width-cheatsheet}.
  Together with the fact that decision-DAG-width is sub-multiplicative with respect to composition proven in~Lemma~\ref{lem:composition-and-width}, this implies that $\dagw(g_{k+1})=\tilde O(n^{k+1})$.

  Now we show by induction that $\uw(f_k)=\tilde O(n^{k+1})$. We already proved the base case $\uw(f_0)=O(n)$ in Lemma~\ref{lem:uw-and}.

For the induction step we have
\begin{align*}
  \uw_0(f_{k+1}) &= \uw_0(\AND \circ (g_{k+1})_\mathrm{CS})
                   \leq \uw_0((g_{k+1})_\mathrm{CS}) + n\cdot \uw_1((g_{k+1})_\mathrm{CS}) \\
  &= \tilde O(\uw(g_{k+1}) + n\cdot \dagw(g_{k+1})) = \tilde O(n\cdot \uw(f_k)+n\cdot \dagw(g_{k+1})) = \tilde O(n^{k+2}+n^{k+2}), \\
  \uw_1(f_{k+1}) &= \uw_1(\AND \circ (g_{k+1})_\mathrm{CS})
                   \leq n\cdot\uw_1((g_{k+1})_\mathrm{CS}) = \tilde O(n^{k+2})
\end{align*}
concluding the proof.
\end{proof}

Finally, we can build the {\res} refutation.

\begin{theorem}\label{th:res-u-b}
The formula $G_m$ of Theorem~\ref{th:tres-l-b} has a {\res} refutation of size $2^{\tilde O(m)}$.
\end{theorem}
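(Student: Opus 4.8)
The plan is to combine the $\uw$ upper bound from Lemma~\ref{lem:dagAKK16} with the equivalence between resolution width and dag-like query complexity of the falsified clause search problem. Recall that $G_m$ is the CNF $F = \overline{F_0 \circ (\oplus_2)^m \lor F_1 \circ (\oplus_2)^m}$ where $F_0, F_1$ are the unambiguous DNFs of $f_k$ and $\overline{f_k}$ from Theorem~\ref{th:AKK16}, so $G_m$ is a hitting formula and its falsified clause search relation is a \emph{function} (each assignment falsifies a unique $\oplus$-clause, equivalently picks out the term of $F_0$ or $F_1$ it satisfies). This function is essentially $f_k \circ (\oplus_2)^m$ together with the identity of the satisfied term, so its dag-like query complexity is controlled by that of $f_k \circ (\oplus_2)^m$.

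First I would observe that $\oplus_2$ has $\uw_0 = \uw_1 = 2$ (the two-literal terms of each parity), and that $\uw$ composes sub-multiplicatively exactly as in Lemma~\ref{lem:composition-and-width}/Lemma~\ref{lem:uw-and}: a terminal unambiguous certificate of $f_k \circ (\oplus_2)^m$ is obtained by substituting unambiguous certificates for $\oplus_2$ into an unambiguous certificate for $f_k$, and unambiguity is preserved by the same argument as in the proof of Lemma~\ref{lem:uw-and} (if two composed certificates agreed, either the outer $f_k$-certificates agree and then some inner $\oplus_2$-certificates agree, or the outer ones disagree and the composed ones conflict). Hence $\uw(f_k \circ (\oplus_2)^m) \le 2\,\uw(f_k) = \tilde O(n^{k+1}) = \tilde O(m)$ after setting $m = n^{k+1}$. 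Next I would note that an $\uw$-strategy for the search function of $G_m$ of width $w$ translates into a resolution refutation of $G_m$ of width $O(w)$ and hence size $2^{O(w)}$: this is the width-vs-dag-query characterization cited from~\cite{AD08}, where the querier's game with unambiguous terminal certificates yields a resolution derivation whose clauses all have width bounded by the game width, and terminal nodes of the game are labeled by clauses of the hitting formula $G_m$ since the terminal certificate is exactly a term of $F_0$ or $F_1$. Therefore $G_m$ has a {\res} refutation of size $2^{\tilde O(m)}$.

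The main obstacle is making precise that the terminal certificates produced by the composed $\uw$-strategy actually coincide (up to constants in width) with \emph{clauses of $G_m$}, rather than arbitrary unambiguous certificates of the underlying function. The point is that the unambiguous DNF covering the $0$s (resp.\ $1$s) of $f_k \circ (\oplus_2)^m$ is, by construction, literally the negation of the clauses of $G_m$; and the composition strategy can be arranged so that when it halts, its current partial assignment is exactly one of these terms (first querying a $\uw$-certificate of the outer $f_k$ instance via Lemma~\ref{lem:dagAKK16}'s strategy, then expanding each relevant outer literal into the corresponding two-variable $\oplus_2$-certificate, and forgetting everything else). This requires re-examining the proofs of Lemmas~\ref{lem:uw-and} and~\ref{lem:dagAKK16} to confirm that the $\uw$-strategy witnessing $\uw(f_k) = \tilde O(n^{k+1})$ terminates with certificates drawn from the specific unambiguous DNFs used to define $G_m$ (the cheat-sheet certificates of Lemma~\ref{lem:unambiguous-cheatsheet} are of exactly this explicit form, so this should go through), after which the generic translation from bounded-width dag-like query games to resolution gives the claimed bound.
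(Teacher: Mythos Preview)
Your proposal is correct and follows essentially the same route as the paper: obtain $\uw(f_k)=\tilde O(n^{k+1})$ from Lemma~\ref{lem:dagAKK16}, compose with $\oplus_2$ via the sub-multiplicativity argument of Lemma~\ref{lem:composition-and-width}, check that the terminal certificates of this composed strategy are precisely the (negations of) clauses of $G_m$, and then invoke the \cite{AD08} equivalence between resolution width and dag-like query complexity of the search relation. One small slip: width $w$ over $N$ variables gives size $N^{O(w)}$, not $2^{O(w)}$; here this is harmless since $\log N=O(25^k\log n)=\tilde O(1)$ in $m$, so $N^{\tilde O(m)}=2^{\tilde O(m)}$, exactly as the paper computes.
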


\begin{proof}

Let $F_0$ and $F_1$ be the unambiguous DNFs that we obtain from the leaves of the strategy for $f_k$ of Lemma~\ref{lem:dagAKK16}. Since we used the same certificates as in Theorem~\ref{th:AKK16}, these are the same as in Theorem~\ref{th:tres-l-b}, and we only need to modify our strategy to output the certificates themselves rather than $0$ and $1$ in order to obtain a strategy for $S$, the falsified clause search problem of $\olnot{F_0 \lor F_1}$, of query complexity $\dagw(S)=\tilde O(n^{k+1})$. By sub-multiplicativity of query complexity we have that the dag-like query complexity of $S'$, the falsified clause search problem of $G_m$, is also $\dagw(S')=\tilde O(n^{k+1})=\tilde O(m)$. From the equivalence between dag-like query complexity and width we have that $G_m$ has a {\res} refutation of width $\tilde O(m)$, which implies a size upper bound of $\lvert\vars(G_m)\rvert^{\tilde O(m)}=2^{\tilde O(m)}$.
\end{proof}

\subsubsection{Upper bound in {\ns}}

Given a clause
$C=\bigvee_{i\in P}x_i \lor \bigvee_{i\in N}\olnot{x_i}$, let
$p_C = \prod_{i\in P} (1-x_i) \cdot \prod_{i\in N} x_i$ be the
polynomial whose roots are the satisfying assignments of $C$. Recall
that a {\ns} certificate that a set of polynomials $\set{p_i}$ has no common root
is a set of polynomials $\set{q_i}$ such that $\sum p_iq_i\equiv1$, 
and the degree of a certificate is $\max_i \deg(p_iq_i)$. A
{\ns} refutation of a CNF $F$ is a {\ns} 
certificate for $\set{p_C \mid C\in F} \union \set{x_i^2-x_i}$.

It turns out that {\ns} simulates {\hit} with respect to degree.
\begin{proposition} \label{prop:ns-degree-simulation}
  {\ns} degree is at most {\hit} width. 
\end{proposition}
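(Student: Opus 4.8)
The plan is to take a {\hit} refutation $H = H_1 \land \dotsb \land H_m$ of $F$ of width $w$ (so each clause $H_j$ has at most $w$ literals) and to write down an explicit {\ns} certificate of degree at most $w$ for the polynomial system $\set{p_C \mid C \in F} \union \set{x_i^2 - x_i}$. First I would recall the two defining features of a {\hit} refutation: (i) each $H_j$ is a weakening of some clause $C_j \in F$, i.e. $C_j \subseteq H_j$, which on the polynomial side means $p_{H_j}$ is a multiple of $p_{C_j}$ --- more precisely $p_{H_j} = p_{C_j} \cdot \prod_{i \in (P_j \setminus P_{C_j})}(1-x_i)\cdot \prod_{i\in(N_j\setminus N_{C_j})} x_i$, a product of extra linear factors; and (ii) the sets of falsifying assignments of the $H_j$ partition $\{0,1\}^n$, so that on the Boolean cube $\sum_j p_{H_j} \equiv 1$, because for each $0/1$ assignment exactly one $H_j$ is falsified and for that $j$ the product $p_{H_j}$ evaluates to $1$ while all other terms vanish.

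The key step is then to turn the identity ``$\sum_j p_{H_j} \equiv 1$ on the cube'' into a genuine polynomial identity modulo the ideal generated by $\{x_i^2 - x_i\}$. Since $\sum_j p_{H_j} - 1$ vanishes on all of $\{0,1\}^n$, it lies in the ideal $(x_1^2 - x_1, \dotsc, x_n^2 - x_n)$, so there are polynomials $r_i$ with $\sum_j p_{H_j} - 1 = \sum_i r_i (x_i^2 - x_i)$; moreover, by iteratively reducing each $p_{H_j}$ to its multilinear form one can take the $r_i$ to have degree at most $w - 1$ (each $p_{H_j}$ has degree $\le w$, and multilinearization of a degree-$\le w$ polynomial produces side coefficients of degree $\le w-1$ for the relations $x_i^2 - x_i$). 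Rearranging gives
\[
1 = \sum_j p_{H_j} - \sum_i r_i (x_i^2 - x_i) = \sum_j \Big(\underbrace{\textstyle\prod (\text{extra linear factors})}_{\text{degree} \le w - |C_j|}\Big)\, p_{C_j} \;+\; \sum_i (-r_i)(x_i^2 - x_i),
\]
which is exactly a {\ns} certificate: the coefficient of $p_{C_j}$ is the product of extra linear factors of degree $\le w - |C_j|$, so $\deg(q_{C_j} p_{C_j}) \le w$, and the coefficient of $x_i^2 - x_i$ is $-r_i$ of degree $\le w - 1$, so that term also has degree $\le w + 1$; a slightly more careful bookkeeping (the terms $r_i$ only need to cancel the degree-$\le w$ part of $\sum_j p_{H_j}$) keeps everything at degree $\le w$. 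I would collect terms so that for clauses $C \in F$ not used as some $C_j$ the coefficient is $0$.

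The main obstacle I anticipate is the degree accounting for the $x_i^2 - x_i$ terms: one has to be careful that multilinearizing $\sum_j p_{H_j}$ does not push the certificate degree above $w$. The clean way around this is to observe that each individual $p_{H_j}$ is already a product of distinct linear factors $\prod(1-x_i)\prod x_i$, so it is essentially multilinear to begin with, and the only non-multilinearity in the whole sum comes from repeated variables that never actually occur within a single $p_{H_j}$ --- hence the reduction to multilinear form acts term-by-term on monomials of degree $\le w$ and the side coefficients $r_i$ have degree $\le w-1$, giving $\deg\big(r_i(x_i^2-x_i)\big)\le w+1$; to land at exactly $w$ one uses that the degree-$(w+1)$ parts must cancel among themselves since $1$ has degree $0$. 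I would state the bound as ``{\ns} degree $\le$ {\hit} width,'' matching the proposition, and remark that this is the degree-versus-width analogue of the size simulations discussed earlier.
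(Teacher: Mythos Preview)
Your approach is the same as the paper's, but you are missing the one observation that makes the argument clean and are instead tangling yourself in unnecessary bookkeeping. Each $p_{H_j}$ is a product of linear factors in \emph{distinct} variables, hence is multilinear; a sum of multilinear polynomials is again multilinear. Therefore $\sum_j p_{H_j}$ is a multilinear polynomial that equals $1$ at every point of $\{0,1\}^n$, so it is \emph{identically} $1$ as a polynomial. No Boolean axioms are needed at all: the $r_i$'s are simply zero, and the certificate is just $\sum_{D\in F} p_D\cdot q_D = 1$ with $q_D = \sum_{H_j:\,C_j=D} p_{H_j\setminus D}$, which visibly has degree $\le w$.

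Your detour through the ideal $(x_i^2-x_i)$ is not only unnecessary, it also leaves a genuine hole in the degree accounting. You argue $\deg r_i \le w-1$, giving $\deg\bigl(r_i(x_i^2-x_i)\bigr)\le w+1$, and then say ``the degree-$(w+1)$ parts must cancel among themselves.'' But NS degree is $\max_i \deg(p_iq_i)$, the maximum over individual terms, not the degree of the sum; cancellation among the $r_i(x_i^2-x_i)$ does not lower the degree of the certificate. To close this you would need $\deg r_i \le w-2$ for each $i$ individually, which you have not argued --- and the right way to get it is precisely the multilinearity observation above, which gives $r_i=0$. (Your sentence about ``non-multilinearity\ldots from repeated variables that never actually occur within a single $p_{H_j}$'' is where you almost see this: there is no such non-multilinearity, since summing multilinear polynomials cannot create squares.)
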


\begin{proof}
  Let $F$ be a CNF and $H$ be a {\hit} refutation of $F$.
  Observe that the polynomial $\sum_{C\in F} p_C$ is identical to $1$
  because $H$ is a partition of the hypercube.
  Fix for each clause $C\in H$ a clause $C' \in F$ such that
  $C' \subseteq C$. Let
  $q_D = \sum_{\{C\in H \mid C'=D\}}p_{C\setminus C'}$.
  We claim that the set of polynomials $\{q_D \mid D \in F\}$ is a
  {\ns} certificate for $F$, and indeed we have
  $\sum_{D\in F}p_Dq_D=\sum_{C\in F}p_C=1$. Furthermore, the degree of
  $p_Dq_D$ is bounded by the degree of $p_C$, which equals the width
  of $H$.
\end{proof}

When measuring the size of a {\ns} refutation it is more appropriate to consider a definition that allows us to introduce dual variables $\bar{x}=1-x$ \cite{ABSRW02} resulting in a new system {\nsr} \cite{dRLNS21}, since otherwise a formula containing a wide clause with many positive literals would already require exponential size when translated to polynomials. We discuss this system in Sect.~\ref{sec:verification}. Moreover, we discuss \emph{succinct} {\nsr} proofs that contain only side polynomials for the input axioms and not for $x^2-x=0$ or $x+\bar{x}-1=0$.
In fact, Prop.~\ref{prop:ns-degree-simulation} already shows that succinct {\nsr} polynomially simulates {\hit} with respect to size.

\section{\hitodd}\label{sec:odd-hitting}
As mentioned in Sect.~\ref{sec:bonus}, {\hitodd} proofs can be verified
similarly to {\nsr} proofs over $\GF(2)$. The two proof systems are similar:
an {\nsr} proof is a Nullstellensatz proof from pseudomonomial equations $m_i=0$
that are translations of the input clauses $C_i$, the Boolean equations
$x_j^2-x_j=0$ for every variable $x_j$, and the axioms $\bar x_j + x_j - 1 = 0$
defining the dual variables. The side polynomials over $\GF(2)$
are just sums of some monomials $q_{ik}$, $r_{j\ell} $, and $s_{jt}$ such that
\[
\sum_i m_i\sum_k q_{ik} +
\sum_j (x_j^2-x_j)\sum_\ell r_{j\ell} +
\sum_j (\bar x_j + x_j - 1)\sum_t s_{jt} 
\equiv 1.
\]
On the other hand, an {\hitodd} proof also can be written using polynomials
over $\GF(2)$ as a sum of
pseudomonomial equations $m_i=0$ multiplied by sums of monomials
(every such product expresses a weakening of $C_i$)
\[\sum_i m_i\sum_k q_{ik} \equiv 1 \pmod {\langle x_j^2-x_j, \bar x_j + x_j - 1\rangle_j}.\]
The difference is that in {\hitodd} the equivalence is only modulo the ideal,
thus {\hitodd} gives \emph{succinct} {\nsr} proofs, as in Sect.~\ref{sec:bonus}.
In the opposite direction,
every {\nsr} proof after cutting the degrees and dropping
$r_{j\ell}$'s and $s_{jt}$'s provides a valid {\hitodd} proof.

Like {\ns} over $\GF(2)$, {\hitodd} can efficiently refute Tseitin formulas modulo 2 (see Def.~\ref{def:tseitin}),
which require exponential-size resolution proofs~\cite{Urq87}.
\begin{proposition}\label{prop:hitodd-tseitin}
For any constant-degree graph $G=(V,E)$ and 0/1-vector $c$,
{\hitodd} has a polynomial-size refutation of $T_{G,c}$.
\end{proposition}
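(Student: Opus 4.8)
The plan is to observe that the Tseitin formula $T_{G,c}$ is \emph{itself} an odd hitting formula, so that it serves as its own {\hitodd} refutation. Every clause $C$ of $T_{G,c}$ is trivially a weakening of a clause of $T_{G,c}$ (namely $C\subseteq C$), so by Definition~\ref{def:hitodd} it only remains to check that $T_{G,c}$ is an unsatisfiable odd hitting formula. Since $G$ has bounded degree $\Delta$, the CNF $T_{G,c}$ consists of $\sum_{v\in V}2^{\deg v-1}\le|V|\cdot 2^{\Delta-1}=O(|V|)$ clauses, so this refutation has polynomial (indeed linear) size, which is exactly what the proposition asserts.

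First I would recall why $T_{G,c}$ is unsatisfiable: an assignment satisfying all vertex constraints would yield $\bigoplus_{v\in V}c_v=\bigoplus_{v\in V}\bigoplus_{e\ni v}x_e=0$, because each edge occurs in exactly two vertex constraints, contradicting the hypothesis $\bigoplus_{v}c_v=1$. Hence every assignment is falsifying, and to verify the odd hitting property of Definition~\ref{def:hitoddformula} it suffices to show that each $x\in\{0,1\}^E$ falsifies an odd number of clauses.

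The core is a counting step. For a vertex $v$ of degree $d$, the constraint $\bigoplus_{e\ni v}x_e=c_v$ expands into $2^{d-1}$ clauses, one per assignment of the incident edge-variables that violates it; for a global assignment $x$, among these $2^{d-1}$ clauses it falsifies exactly the one whose falsifying pattern matches $x$ on the $d$ incident variables when $\bigoplus_{e\ni v}x_e\neq c_v$, and it falsifies none of them otherwise. Therefore the number of clauses of $T_{G,c}$ falsified by $x$ equals the number of violated vertex constraints, and working modulo $2$,
\[
\#\{v\in V:\ \bigoplus\nolimits_{e\ni v}x_e\neq c_v\}\ \equiv\ \bigoplus_{v\in V}\Bigl(\bigoplus\nolimits_{e\ni v}x_e\oplus c_v\Bigr)\ =\ \bigoplus_{v\in V}\bigoplus\nolimits_{e\ni v}x_e\ \oplus\ \bigoplus_{v\in V}c_v\ =\ 0\oplus 1\ =\ 1,
\]
where the first step uses that $(\bigoplus_{e\ni v}x_e)\oplus c_v$ is the indicator of ``$v$ violated'' and the third uses that every edge contributes its variable twice to the double XOR, together with $\bigoplus_{v}c_v=1$. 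Thus the count is always odd, $T_{G,c}$ is an odd hitting formula, and the statement follows.

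I do not expect a genuine obstacle: this is a direct verification, and the only points requiring care are the exact bookkeeping in Definition~\ref{def:hitoddformula} (one counts \emph{falsified}, not satisfied, clauses) and the remark that a single vertex constraint contributes at most one falsified clause to any given assignment. As a cross-check, the same refutation can be read off from the succinct {\nsr} presentation of {\hitodd} discussed at the beginning of this section: over $\GF(2)$, the sum of the pseudomonomial translations $m_C$ of all clauses of $T_{G,c}$ equals, modulo $\langle x_j^2-x_j,\ \bar x_j+x_j-1\rangle_j$, the linear form $\sum_{v\in V}\bigl(\sum_{e\ni v}x_e+c_v\bigr)\equiv 1$ --- that is, a succinct {\nsr} refutation all of whose side polynomials equal $1$.
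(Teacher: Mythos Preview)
Your proof is correct and follows essentially the same approach as the paper: you show that $T_{G,c}$ is itself an unsatisfiable odd hitting formula by observing that every assignment falsifies an odd number of vertex constraints and, for each violated constraint, exactly one of its $2^{\deg v-1}$ clauses. Your write-up is more detailed (and more careful in distinguishing violated from satisfied constraints), but the argument is the same.
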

\begin{proof}
Each truth assignment falsifies an odd number of vertex constraints.
For each constraint, it falsifies exactly one of the $2^{\deg v - 1}$ clauses.
Thus the total number of falsified clauses is odd,
and $T_{G,c}$ itself is an unsatisfiable odd-hitting formula.
\end{proof}

A separation between {\hitodd} and {\ns} without dual variables follows immediately from
the separation between {\nsr} and {\ns} of de~Rezende et al~\cite{dRLNS21}.

In the opposite direction, there are formulas
that require exponentially larger proofs in {\hitodd} than in {\res}.
Dmitry Sokolov [private communication] suggested that the well-known technique of xorification can
produce an exponential separation between the size of {\res} and {\nsr} proofs from the bounds of \cite{BOPCI00}:
\begin{theorem}[\cite{BOPCI00}]
  \label{th:BOPCI00}
  There exists a family of formulas that have {\res} proofs of constant width and require {\ns} degree $\Omega(n/\log n)$.
\end{theorem}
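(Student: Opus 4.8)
The plan is to use \emph{pebbling contradictions}. Fix a directed acyclic graph $G$ of in-degree at most $2$ with source set $\mathrm{src}(G)$ and a unique sink $t$, and introduce a variable $x_v$ for each vertex $v$. The pebbling contradiction $\mathrm{Peb}_G$ is the $3$-CNF consisting of $x_s$ for every source $s$, the clause $\olnot{x_u}\lor\olnot{x_w}\lor x_v$ for every non-source $v$ with predecessors $\{u,w\}$, and $\olnot{x_t}$; it has $O(|V(G)|)$ clauses. The constant-width {\res} refutation is immediate: processing $G$ in topological order, I would derive the unit clause $x_v$ for each $v$ --- an axiom at the sources, and for a non-source $v$ obtained by resolving $x_u$ against $\olnot{x_u}\lor\olnot{x_w}\lor x_v$ and then against $x_w$ --- and finish by resolving $x_t$ with $\olnot{x_t}$. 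Every clause that appears has width at most $3$, and the refutation has size $O(|V(G)|)$.

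For the {\ns} degree lower bound I would translate the axioms to the polynomials $1-x_s$, $x_ux_w(1-x_v)$, $x_t$, together with $x_v^2-x_v$, and invoke the standard linear-algebraic duality for {\ns}: no refutation of degree $\le d$ exists once we exhibit a linear functional $R$ on the multilinear polynomials of degree $\le d$ with $R(1)=1$ and $R(m\,g)=0$ for every axiom polynomial $g$ and every monomial $m$ with $\deg(m\,g)\le d$. I would build $R$ from the \emph{black-pebbling game} on $G$: read the support $S$ of a multilinear monomial $\prod_{v\in S}x_v$ as a configuration of pebbles, set $R=0$ whenever $t\in S$, and otherwise set $R=1$ exactly when, starting from $S$ and with a pebble budget of $d+O(1)$ (pebbles on sources being free), the pebbling rules --- a pebble may be placed on a non-source both of whose predecessors are pebbled --- cannot reach the sink. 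The axiom checks are then local: a source pebble is free, so $R(m(1-x_s))=0$; once $u,w$ are pebbled one move yields $v$ without changing whether $t$ is reachable within the budget, so $R(mx_ux_w)=R(mx_ux_wx_v)$ and hence $R(mx_ux_w(1-x_v))=0$; and $R(mx_t)=0$ by definition (this also correctly handles the sink's own edge axiom, since if the sink's predecessors lie in the support then $t$ is reachable in one move). The crucial point is the budget: with an unbounded budget free propagation reaches $t$ from the empty board, which is just unsatisfiability of $\mathrm{Peb}_G$, but if $G$ has black-pebbling price $p(G)$ then for $d=\Theta(p(G))$ no budgeted play reaches $t$ from $\emptyset$, so $R(1)=1$ and all the checks above go through for degree $\le d$. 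Hence $\mathrm{Peb}_G$ requires {\ns} degree $\Omega(p(G))$.

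It remains to instantiate $G$: I would take a bounded-in-degree DAG on $n$ vertices whose black-pebbling price is $\Omega(n/\log n)$ --- the Paul--Tarjan--Celoni construction, with the in-degree brought down to $2$ by subdivision in the style of Gilbert--Tarjan (this costs only constant factors in the pebbling price). The pebbling argument is field-independent, so the degree lower bound holds over every field, which is what the application to succinct {\nsr} needs. Combining the two parts yields a family of $3$-CNFs with constant-width {\res} refutations requiring {\ns} degree $\Omega(n/\log n)$.

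The main obstacle is the construction and verification of the functional $R$. The obvious ``propagate truth values'' functional is \emph{inconsistent}, because the propagation closure of the empty set is all of $V(G)$ and so contains $t$; truncating the closure with a pebble budget restores normalization, but one must then check that the truncated $R$ is simultaneously linear, normalized, and axiom-consistent up to degree $d$, and in particular that this works for arbitrary $\mathbb{F}$-linear combinations of monomials and not just single monomials. Making that interaction go through is exactly where the quantitative Paul--Tarjan--Celoni lower bound on pebbling price is used, and I expect it to be the part that needs real care; the resolution upper bound and the reduction to the pebbling game are routine.
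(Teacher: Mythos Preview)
The paper does not give its own proof of this statement; it is quoted as a black box from \cite{BOPCI00} and combined with Lemma~\ref{lem:xorification} to obtain Corollary~\ref{cor:pebbling-ns}. So there is nothing in the paper to compare your argument against.

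Your outline is the standard route and is essentially correct: pebbling contradictions are the right family, the width-$3$ {\res} refutation by topological unit propagation is exactly right, and building an {\ns} design from a pebble-budget game is the canonical way to get the degree lower bound over every field. Two technical points deserve to be made explicit before you would call this a proof. First, your edge-axiom check needs \emph{both} directions---from $S$ you add $v$ because $u,w\in S$, and from $S\cup\{v\}$ you must be able to \emph{remove} $v$ to fall back on the strategy from $S$---so you are really using black pebbling with removals; set the budget to $d{+}1$ so that both moves stay within budget when $|S|\le d$. Second, your ``sources free'' tweak changes the game, while the Paul--Tarjan--Celoni/Gilbert--Tarjan bound is stated for ordinary black pebbling; the cleanest fix is to add a single super-source feeding all original sources (in-degree stays at most $2$, pebbling price changes by at most $1$), after which ``sources free'' costs only one pebble and the $\Omega(n/\log n)$ bound transfers verbatim. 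With these two points in place, $R$ is well-defined on multilinear monomials, linear by extension, annihilates every axiom up to degree $d$, and has $R(1)=1$ whenever $d{+}1$ is below the pebbling price---giving {\ns} degree $\Omega(n/\log n)$ over any field, which is precisely what the downstream xorification argument needs.
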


We notice that this technique is still viable for succinct {\nsr} proofs, and hence {\hitodd}.
In the following lemma we apply xorification and the random restriction technique of Alekhnovich and Razborov (see~\cite{Ben02}) to get the separation.

\begin{lemma}\label{lem:xorification}
  Let $F$ be a CNF formula that requires degree $d$ to refute in {\ns} over a field $\mathbb{F}$. Then $F \circ (\oplus_2)^n$ requires size $2^{\Omega(d)}$ to refute in succinct {\nsr} over $\mathbb{F}$.
\end{lemma}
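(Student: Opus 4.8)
The plan is to argue by contradiction, turning a short succinct {\nsr} refutation of $F\circ(\oplus_2)^n$ into a low-degree {\ns} refutation of a relabelling of $F$ by a random ``de-xorification'' restriction, in the spirit of the Alekhnovich--Razborov random-restriction technique \cite{Ben02}. So suppose $\Pi$ is a succinct {\nsr} refutation of $G:=F\circ(\oplus_2)^n$ over $\mathbb{F}$ of size $S$; recall that this is an identity $\sum_i f_i g_i\equiv 1\pmod{\langle y^2-y,\ y+\bar y-1\rangle}$ in which the $f_i$ are the monomial translations of the clauses of $G$, and that $S$ bounds the number of monomials occurring when the left-hand side is expanded. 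Reducing modulo the ideal (which only decreases the number of monomials) we may assume that every such monomial is multilinear and never contains a variable together with its dual.

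Next I would define a random restriction $\rho$: independently for each coordinate $j$, choose $b_j\in\{1,2\}$ and $\varepsilon_j\in\{0,1\}$ uniformly, set $y_j^{3-b_j}:=\varepsilon_j$ (and its dual to $1-\varepsilon_j$), and keep $z_j:=y_j^{b_j}$ free. Since $y_j^1\oplus y_j^2$ becomes $z_j\oplus\varepsilon_j$, the formula $\restrict{G}{\rho}$ is exactly $F$ with each $x_j$ renamed to $z_j$ or to $\bar z_j$; this is an invertible affine change of variables, so $\restrict{G}{\rho}$ still requires {\ns} degree $d$. Applying $\rho$ to the identity is a ring homomorphism that sends the ideal generators of the fixed variables to $0$, so $\sum_i \restrict{f_i}{\rho}\restrict{g_i}{\rho}\equiv 1$ modulo the ideal on the $z$-variables; moreover every nonzero $\restrict{f_i}{\rho}$ is a monomial multiple of the clause polynomial of a clause of $\restrict{G}{\rho}$ (because each clause of $G$ is implied by a single clause of $F$ together with the XOR gadgets, so its restriction is implied by, hence contains, a clause of $\restrict{G}{\rho}$). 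Thus $\restrict{\Pi}{\rho}$ is again a succinct {\nsr} refutation of $\restrict{G}{\rho}$, and reducing it modulo the $z$-ideal and substituting $\bar z\mapsto 1-z$ turns it into an honest {\ns} refutation of $\restrict{G}{\rho}$ whose degree is the largest degree of a surviving monomial (the substitution does not raise the degree).

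It then remains to choose $\rho$ so that every monomial of degree $\ge d$ is killed. Fix a monomial $m$ of $\Pi$ with $\deg m=D$, and split the coordinates it uses into those where it uses only one of the two copies and those where it uses both. If it uses a single copy $y_j^a$, then with probability $1/2$ we have $a\ne b_j$, that copy is fixed to a uniform bit, and $m$ is zeroed with probability $1/2$; if it uses both copies, one of them is fixed for certain and $m$ is zeroed with probability $1/2$. A short computation then gives $\Pr_\rho[\restrict{m}{\rho}\neq 0]\le (3/4)^{D/2}$, and a union bound over the at most $S$ monomials shows that, provided $S<(4/3)^{d/2}$, with positive probability no monomial of degree $\ge d$ survives. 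Fixing such a $\rho$ produces an {\ns} refutation of a relabelling of $F$ of degree $<d$, contradicting the degree lower bound; hence $S\ge (4/3)^{d/2}=2^{\Omega(d)}$.

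I expect the main obstacle to be not the probabilistic estimate --- which is the standard restriction argument, here pleasantly requiring no restriction of $F$ itself and therefore no ``robustness'' hypothesis on $F$ --- but the bookkeeping of the middle step: one has to check carefully that restricting a succinct {\nsr} refutation of the xorified formula genuinely produces a succinct {\nsr} refutation of the de-xorified one (in particular that each surviving $\restrict{f_i}{\rho}$ is a weakening of an actual clause of $\restrict{G}{\rho}$, using the structure of the clauses of $F\circ(\oplus_2)^n$), and that converting a succinct {\nsr} refutation over $\mathbb{F}$ into an {\ns} refutation --- eliminating dual variables and making ideal membership explicit --- does not inflate the degree, so that the degree lower bound for $F$ can be invoked against $\restrict{G}{\rho}$.
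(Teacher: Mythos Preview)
Your approach is essentially identical to the paper's --- the same de-xorification random restriction and the same union bound over surviving monomials --- and the extra bookkeeping you flag (the relabelling of $F$, and the succinct-{\nsr}-to-{\ns} conversion via $\bar z\mapsto 1-z$, which preserves degree because you first made the monomials multilinear with no $z\bar z$ pairs) all checks out. The only differences are cosmetic: the paper observes that the restricted axioms of $F\circ(\oplus_2)^n$ are \emph{exactly} (a relabelling of) the clauses of $F$, so no ``monomial multiple'' is needed, and it sharpens your survival estimate to $(3/4)^d$ rather than $(3/4)^{d/2}$ by using $1/2\le(3/4)^2$ for blocks where the monomial touches both copies.
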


\begin{proof}
  Denote by $y_{i,0}$ and $y_{i,1}$ the variables appearing in $F \circ (\oplus_2)^n$ as $y_{i,0}\oplus y_{i,1}$ instead of the variable $y_i$ in $F$.
  Let $\sum f_ig_i \equiv 1$ be a succinct {\nsr} refutation of $F \circ (\oplus_2)^n$, and let $s$ be the number of monomials in the refutation, which we assume for the sake of contradiction to be less than $(4/3)^d$.
  Let $\mathcal{D}$ be the probability distribution over random restrictions where for every pair of variables $y_{i,0},y_{i,1}$ we sample $j\in\set{0,1}$ and $b\in\set{0,1}$ uniformly and we assign $y_{i,j}=b$ while leaving $y_{i,1-j}$ unassigned.
  Observe that for $\rho\sim\mathcal{D}$ we have $\set{\restrict{f_i}{\rho}\mid f_i\in F\circ(\oplus_2)^n}=F$ and that $\rho$ respects Boolean axioms, therefore $\sum \restrict{f_ig_i}{\rho} = \sum \restrict{f_i}{\rho}\cdot\restrict{g_i}{\rho}$ is a {\nsr} refutation of $F$.

  Since for any monomial $m$ we have
  \[
    \Pr_{\rho\sim\mathcal{D}}[\deg(\restrict{m}{\rho}) \geq d] \leq
    \Pr_{\rho\sim\mathcal{D}}[\deg(m) \geq d \text{ and } \restrict{m}{\rho} \neq 0] \leq (3/4)^d,
  \]
  the union bound gives that
  \[
    \Pr_{\rho\sim\mathcal{D}}[\deg(\restrict{f_ig_i}{\rho}) \geq d] \leq
    s \Pr_{\rho\sim\mathcal{D}}[\deg(\restrict{m}{\rho}) \geq d] \leq
    s \cdot(3/4)^d < 1.
  \]
  Therefore there exists a restriction such that $\deg(\restrict{f_ig_i}{\rho}) < d$, contradicting our hypothesis that $F$ requires {\ns} degree $d$.
\end{proof}

Combining xorification with a lower bound on the degree of pebbling formulas we obtain a separation between {\hitodd} and {\res}.

\begin{corollary}
  \label{cor:pebbling-ns}
  There exists a family of formulas that have {\res} proofs of polynomial size and require {\hitodd} proofs of size $2^{\Omega(n/\log n)}$.
\end{corollary}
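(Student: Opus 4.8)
The plan is to apply xorification to a family of formulas that are easy for {\res} but require large Nullstellensatz degree, and then to transfer the resulting succinct {\nsr} lower bound to {\hitodd} via the equivalence between the two systems discussed at the beginning of this section.

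Concretely, let $F_n$ be the family of pebbling formulas provided by Theorem~\ref{th:BOPCI00}: each $F_n$ has $n$ variables, admits a constant-width (hence polynomial-size) {\res} refutation, and requires {\ns} degree $d = \Omega(n/\log n)$ over $\GF(2)$. The separating family will be $F_n \circ (\oplus_2)^n$. For the {\res} upper bound, observe that since each clause of $F_n$ has constant width, it xorifies into a constant number of clauses of constant width, so $F_n \circ (\oplus_2)^n$ has $O(n)$ clauses of constant width; moreover the standard pebbling refutation carries over, since wherever the refutation of $F_n$ derives that a vertex $v$ is true, the refutation of $F_n \circ (\oplus_2)^n$ can derive the two clauses encoding $y_{v,0}\oplus y_{v,1}=1$, and each such propagation step touches only the $O(1)$ variables attached to $v$ and its in-neighbors and therefore costs $O(1)$ clauses. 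Hence $F_n \circ (\oplus_2)^n$ has a {\res} refutation of size $O(n)$.

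For the lower bound, Lemma~\ref{lem:xorification} with $\mathbb{F}=\GF(2)$ shows that any succinct {\nsr} refutation of $F_n \circ (\oplus_2)^n$ over $\GF(2)$ has at least $(4/3)^d = 2^{\Omega(n/\log n)}$ monomials. By the discussion at the start of Section~\ref{sec:odd-hitting}, every {\hitodd} refutation of size $S$ can be rewritten as a succinct {\nsr} refutation over $\GF(2)$ of size $\mathrm{poly}(S)$ — the side polynomials are precisely the sums of monomials witnessing the weakenings of the input clauses. Therefore a {\hitodd} refutation of $F_n \circ (\oplus_2)^n$ must have size $2^{\Omega(n/\log n)}$ as well (the number of variables of the xorified formula is $2n$, which only affects the hidden constant). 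Combining the two bounds yields the stated separation.

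The step I expect to be most delicate is the {\res} upper bound: one must make sure xorification does not destroy the short refutation. For pebbling formulas this goes through because of their rigid layered structure and constant width, but it is exactly the point where care is needed, since xorification can blow up resolution complexity for less structured formulas, and the argument would not survive if Theorem~\ref{th:BOPCI00} were invoked as a black box for an arbitrary easy-for-{\res} family.
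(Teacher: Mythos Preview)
Your proof is correct and follows the same route as the paper: xorify the pebbling formulas from Theorem~\ref{th:BOPCI00}, invoke Lemma~\ref{lem:xorification} for the succinct {\nsr} (hence {\hitodd}) lower bound over $\GF(2)$, and observe that the xorified formulas retain short {\res} refutations. The paper's proof is the one-liner ``By Theorem~\ref{th:BOPCI00} and Lemma~\ref{lem:xorification}'', so you have actually spelled out more than the paper does.

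One remark on your closing caveat: the {\res} upper bound is less delicate than you suggest. Theorem~\ref{th:BOPCI00} gives \emph{constant-width} {\res} refutations, and constant width is preserved under $\oplus_2$-xorification as a black box---each clause of width $w$ in the original refutation lifts to at most $2^w$ clauses of width $2w$, and a single resolution step on $x_j$ is simulated by $O(1)$ resolutions on $y_{j,0},y_{j,1}$. So you do not need to appeal to the layered pebbling structure; the argument would survive for any constant-width family, and your worry that ``it would not survive if Theorem~\ref{th:BOPCI00} were invoked as a black box'' is unfounded precisely because the theorem promises constant width rather than merely polynomial size.
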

\begin{proof} By Theorem~\ref{th:BOPCI00} and Lemma~\ref{lem:xorification}.\end{proof}

\section{\texorpdfstring{\hitp}{Hitting(xor)}}
\label{sec:hitp-lowerbound}

{\hitp} extends {\hit} to formulas that work with linear equations modulo two.
We know from Cor.~\ref{cor:tlresxor-vs-hitting} that Tseitin formulas
separate {\hitp} from {\hit} and {\res}.

In this section, we show that perfect matching formulas (that have polynomial-size {\cp} proofs) require exponential-size
{\hitp} refutations. In order to do this, we lift them using (binary) xorification
and then reduce the question to the known communication complexity lower bound 
for set disjointness. 

\subsection{Evidence against quasi-polynomial simulation by \texorpdfstring{{\tresp}}{tl-Res(+)}}
\label{sec:no-qp-simulation}
Theorem~\ref{th:tres-qsim-hit} suggests that there might be a quasi-polynomial simulation of {\hitp} by {\tresp}, which would imply that all exponential-size {\tresp} lower bounds imply exponential-size lower bounds for {\hitp}. Recall that the crux of the proof of Theorem~\ref{th:tres-qsim-hit} is that we can always find a literal that appears in an $\Omega(1/\log n)$ fraction of the clauses of a hitting formula, and this provides an efficient splitting of the problem. The following proposition shows that the analogue of this does not exist for {\hitp}, which is a piece of evidence that there is no similar simulation in the case of {\hitp}. Namely, we prove that, contrary to the case of {\hit}, there are formulas that cannot be split so efficiently.

We use the construction that appears in \cite[Section 5.1]{avsp} for a different purpose.
\begin{proposition}
    There exists an unsatisfiable hitting$(\oplus)$ formula over $2t-1$ variables consisting of $2^t$ $(t-1)$-dimensional affine subspaces such that for every set $S \subseteq [2t-1]$ there is at most one affine space in the formula contained in the codimension-1 affine subspace $\left\{x \in\{0,1\}^{2t-1} \mid \bigoplus_{i \in S} x_i = \alpha\right\}$ for $\alpha \in \{0,1\}$.
\end{proposition}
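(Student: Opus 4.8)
The plan is to translate the stated property into a condition about subspaces and then invoke the explicit construction of \cite[Section~5.1]{avsp}. Write $n=2t-1$, and for $S\subseteq[n]$ let $\ell_S\colon\{0,1\}^n\to\{0,1\}$ be the $\GF(2)$-linear functional $\ell_S(x)=\bigoplus_{i\in S}x_i$; as $S$ ranges over all subsets these are all the linear functionals, and $\ell_S$ is nonzero exactly when $S\neq\emptyset$ --- which one may assume, since $\{x:\ell_S(x)=\alpha\}$ is a codimension-$1$ affine subspace only then. (Recall also that $2^t$ affine $(t-1)$-flats partitioning $\{0,1\}^{2t-1}$ are exactly the falsifying-assignment sets of the $\oplus$-clauses of an automatically unsatisfiable hitting$(\oplus)$ formula.) Now a $(t-1)$-flat $A=v+U$ is contained in the hyperplane $\{x:\ell_S(x)=\alpha\}$ precisely when $\ell_S$ is constantly $\alpha$ on $A$, i.e.\ when $\ell_S$ vanishes on $U$ and $\ell_S(v)=\alpha$. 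Hence two distinct flats $A=v+U$ and $A'=v'+U'$ of the partition are both contained in one common hyperplane $\{x:\ell_S(x)=\alpha\}$ if and only if some nonzero linear functional vanishes on each of $U$, $U'$ and $v-v'$, i.e.\ if and only if $U+U'+\langle v-v'\rangle\neq\GF(2)^n$. So it is enough to build $2^t$ affine $(t-1)$-flats $A_1,\dots,A_{2^t}$ --- with linear parts $U_1,\dots,U_{2^t}$ and base points $v_1,\dots,v_{2^t}$ --- that partition $\{0,1\}^{2t-1}$ and satisfy $U_\alpha+U_\beta+\langle v_\alpha-v_\beta\rangle=\GF(2)^n$ for all $\alpha\neq\beta$.

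The key observation is that the last equality is automatic once the flats partition the cube and their linear parts pairwise intersect trivially. If $U_\alpha\cap U_\beta=\{0\}$ then $U_\alpha\oplus U_\beta$ has dimension $2t-2$, i.e.\ codimension $1$, so it remains only to check $v_\alpha-v_\beta\notin U_\alpha\oplus U_\beta$. Suppose not; then $v_\alpha$ and $v_\beta$ lie in the same coset $C$ of $U_\alpha\oplus U_\beta$, and $A_\alpha,A_\beta$ are both contained in $C$. But inside the affine space $C$, whose translation space is exactly $U_\alpha\oplus U_\beta$, the equation $v_\alpha+u=v_\beta+u'$ with $u\in U_\alpha$, $u'\in U_\beta$ has a solution, since $v_\alpha-v_\beta\in U_\alpha\oplus U_\beta$ decomposes uniquely along the two summands. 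So $A_\alpha\cap A_\beta\neq\emptyset$, contradicting disjointness. This reduces the proposition to producing the partition.

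What we need, then, is $2^t$ pairwise-disjoint affine $(t-1)$-flats in $\GF(2)^{2t-1}$ --- they then cover the cube by cardinality --- whose linear parts pairwise meet only in the origin, i.e.\ form a partial $(t-1)$-spread of $\GF(2)^{2t-1}$ of size $2^t$. This is exactly the configuration of \cite[Section~5.1]{avsp}, which we import; feeding it into the reduction above completes the proof. The linear-algebra reduction of the first two paragraphs is routine, and the real content is in the construction: the size $2^t$ is extremal for a partial $(t-1)$-spread in the odd-dimensional space $\GF(2)^{2t-1}$, so the delicate step --- which we do not reprove here --- is the existence of such an extremal spread whose members can additionally be translated to tile the cube.
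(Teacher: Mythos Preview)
Your proof is correct, and it takes a genuinely different route from the paper's. The paper reproduces the construction of \cite[Section~5.1]{avsp} explicitly (the affine Desarguesian spread $V_\alpha=\{(c,\alpha c):c_0=1\}$ over $\mathbb{F}_{2^t}$), verifies the partition property via the trivially intersecting subspaces $U_\alpha=\{(d,\alpha d)\}$, and then checks the hyperplane property by a direct field computation: any linear form on $\GF(2)^{2t}$ can be written as $(c,d)\mapsto(\gamma c+\delta d)_0$, and one argues that it vanishes on $V_\alpha$ only if $\gamma+\alpha\delta=0$, which pins down $\alpha$ uniquely (or rules out all $\alpha$ when $\delta=0$).

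Your argument instead isolates a clean linear-algebraic lemma: for \emph{any} affine partition of $\GF(2)^{2t-1}$ into $(t-1)$-flats whose translation spaces pairwise meet only in $0$, no hyperplane can contain two parts. This is more conceptual and more general --- it shows at once that any ``spread-type'' affine partition would witness the proposition, not only the Desarguesian one --- and it makes transparent why the construction works. The paper's approach, on the other hand, is fully self-contained and yields the slightly sharper dichotomy that a hyperplane coming from a form with $\delta=0$ contains \emph{no} $V_\alpha$, whereas your reduction only gives ``at most one''. One small point: you lean on the citation for the fact that the linear parts of the construction in \cite[Section~5.1]{avsp} intersect trivially; this is indeed true (and is exactly the $U_\alpha\cap U_\beta=\{0\}$ step in the paper's Claim), but it would be worth stating explicitly rather than folding into the import.
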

\begin{proof}
   Let $\mathbb{F}_{2^t}$ be the finite field of order $2^t$, which we can identify with polynomials of degree smaller than $t$ over $\mathbb{F}_2$. Given such a polynomial $b_0 + b_1 x + \dotsb + b_{t-1} x^{t-1}$, we can identify it with the string $b_0 b_1 \ldots b_{t-1}$.

For every $\alpha \in \mathbb{F}_{2^t}$, we define
\[
 V_\alpha := \{ (c, \alpha c) \mid c = c_0 + c_1 x + \dotsb + c_{t-1} x^{t-1} \in \mathbb{F}_{2^t}, c_0 = 1 \}.
\]
We identify $V_\alpha$ with a subset of $\{0,1\}^{2t-1}$ by converting both parts to strings, concatenating them, and removing the initial~$1$.

\begin{claim}
Sets $\{V_\alpha\}_{\alpha \in \mathbb{F}_{2^t}}$ form an unsatisfiable hitting$(\oplus)$ formula.
\end{claim} 
\begin{proof}
Define
\[
 U_\alpha := \{ (d, \alpha d) \mid d \in \mathbb{F}_{2^t} \}.
\]
It is easy to see that $U_\alpha$ is a vector subspace of $\mathbb{F}_2^{2t}$ as $\eta_1(d_1, \alpha d_1) + \eta_2(d_2, \alpha d_2) = ((\eta_1 d_1 + \eta_2 d_2), \alpha (\eta_1 d_1 + \eta_2 d_2)) \in U_\alpha$. If $(x,y) \in U_\alpha \cap U_\beta$ for $\alpha \neq \beta$ then $y = \alpha x = \beta x$, and so $x = y = 0$. This shows that $U_\alpha,U_\beta$ are trivially intersecting subspaces. Also, if $x \neq 0$ then $(x,y)$ is covered by $U_{y/x}$. Together with $U_\infty = \{ (0,d) \mid d \in \mathbb{F}_{2^t} \}$, we obtain the object called \emph{standard Desarguesian spread} of $\mathbb{F}_{2^t}^2 \setminus \{(0,0)\}$ (which we think of as a projective plane).

We obtain $V_\alpha$ by restricting $U_\alpha$ to those $d$ satisfying $d_0 = 1$. Thus $V_\alpha \cap V_\beta = \emptyset$ and the $V_\alpha$'s cover all points of the form $(x,y)$ where $x_0 = 1$. Moreover, $V_\alpha$ is an affine subspace since it is an intersection of $U_\alpha$ and an affine space $\{(c,d) \mid c_0 = 1\}$ projected on the last $2t-1$ coordinates. We conclude that the $V_\alpha$'s constitute a hitting$(\oplus)$ formula.
\end{proof}

Now we proceed to prove the uniqueness of an affine space $V_{\alpha}$ contained in an arbitrary codimension-1 affine subspace of $\{0,1\}^{2t-1}$.
We can think of the entire domain as $V = \{(c,d) : c,d \in \mathbb{F}_{2^t}, c_0 = 1\}$. The subspaces are $V_\alpha = \{(c,d) : d = \alpha c\}$. Any affine form on the encoding of length $2t-1$ (obtained by removing the initial $1$) corresponds to a linear form on the ``homogenized'' encoding of length $2t$. Moreover, we can express each such form as $(\gamma c + \delta d)_0$, where $\gamma,\delta \in \mathbb{F}_{2^t}$, and we take the coefficient of $x^0$ in the result.

A subspace $V_\alpha$ satisfies the constraint $(\gamma c + \delta d)_0 = 0$ if $((\gamma + \alpha \delta) c)_0 = 0$ for all $c \in \mathbb{F}_{2^t}$ such that $c_0 = 1$. Let $\eta = \gamma + \alpha \delta$. Choosing $c = 1$, we see that $\eta_0 = 0$. If $d \in \mathbb{F}_{2^t}$ satisfies $d_0 = 0$ then $(\eta d)_0 = (\eta (d + 1))_0 + \eta_0 = 0$. Therefore $(\eta c)_0 = 0$ for all $c \in \mathbb{F}_{2^t}$. In particular, if $\eta = 0$, since otherwise we obtain a contradiction by choosing $c = \eta^{-1}$.

There are now two cases. If $\delta = 0$ then $\eta = 0$ implies $\gamma = 0$. If $\delta \neq 0$ then $\eta = 0$ implies $\alpha = -\gamma/\delta$. We conclude that an affine equation that involves only the first half holds for no subspace (unless it is the trivial $0 = 0$), and any other affine equation holds for precisely one subspace.
\end{proof}

\subsection{Communication simulation of \texorpdfstring{{\hitp}}{Hitting(xor)}}
\label{sec:linear-hitting-communication}

Consider the following complexity measure on relations introduced by Jain and Klauck \cite[Definition 22]{JK10}. Let $S \subseteq \mathcal{X} \times \mathcal{Y} \times \mathcal{O}$, let $\recttwo{\mathcal{X}}{\mathcal{Y}}$ (or just $\rect$) be the set of subrectangles of $\mathcal{X} \times \mathcal{Y}$ (that is, $\{A\times B\ |\ A\subseteq \mathcal{X},\ B\subseteq \mathcal{Y}\}$), let $\supp(S) = \{(x,y) \in \mathcal{X} \times \mathcal{Y} \mid \exists o \in \mathcal{O}\colon (x,y,o) \in S\}$. Then $\varepsilon$-partition bound of $S$ is denoted by $\prt_\varepsilon(S)$ and defined by the value of the following linear program:
\begin{align*}
  & ~\sum_{o \in \mathcal{O}} \sum_{R \in \rect} w_{o, R} \to \min \\
   \forall (x,y) \in \supp(S), &~ \sum_{o: (x,y,o) \in S} \sum_{(x,y) \in R \in \rect} w_{o, R} \ge 1-\varepsilon \\
   \forall o, R\in\mathcal{O} \times \rect, &~ w_{o, R} \ge 0 \\
   \forall (x,y) \in \mathcal{X} \times \mathcal{Y}, &~ \sum_{o \in \mathcal{O}} \sum_{(x,y) \in R \in \rect} w_{o, R} = 1. \\
\end{align*}
That is, we put weights $w_{o,R}$ on every answer $o$ and rectangle $R$ so that for every question $(x,y)$ the total weight is 1 for all rectangles where $(x,y)$ participates and for all answers, and for every question that does have an answer, the weight $1-\varepsilon$ is concentrated on the correct answer(s).
\begin{remark}\label{rem:clause-search}
Note that if we aggregate the ``answers'' $o\in\mathcal{O}$ into larger containers
$I(o')\in\mathcal{O'}$ so that $\mathcal{O} = \mathop{{\mbox{\Large$\sqcup$}}}\limits_{o'\in\mathcal{O'}} I(o')$
and $(x,y,o')\in\mathcal{O'}\Leftrightarrow \exists o\in I(o')\ (x,y,o)\in\mathcal{O}$,
we can only decrease the value of $\prt_\varepsilon$.
\end{remark}

This measure lower bounds communication complexity in the following way:
\begin{theorem}[\cite{JK10}]
    $\varepsilon$-error randomized communication complexity of a relation $S$ is at least $\log \prt_\varepsilon (S)$.
\end{theorem}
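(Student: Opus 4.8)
The plan is to convert a randomized protocol directly into a feasible point of the linear program defining $\prt_\varepsilon(S)$. Suppose $\Pi$ is a public-coin $\varepsilon$-error protocol for $S$ of communication cost $c$; I will produce weights $w_{o,R}\ge 0$, feasible for the program, whose objective value is at most $2^c$. This yields $\prt_\varepsilon(S)\le 2^c$, i.e. $c\ge\log\prt_\varepsilon(S)$, and since $c$ is the cost of an arbitrary $\varepsilon$-error protocol this is exactly the claim. It is harmless to assume the public randomness has finite support (a standard approximation/truncation argument), so that all sums below are finite; alternatively one can run the same computation with integrals.

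I would use the textbook structure of protocols. A public-coin protocol is a distribution over deterministic protocols; write $p_r$ for the probability of the deterministic protocol $\Pi_r$. Since $\Pi$ has cost $c$, each $\Pi_r$ has a protocol tree of depth at most $c$, hence at most $2^c$ leaves; the leaves are combinatorial rectangles that partition $\mathcal{X}\times\mathcal{Y}$, and each leaf carries a fixed output $o\in\mathcal{O}$. For $(o,R)\in\mathcal{O}\times\rect$ set
\[
 w_{o,R}\;=\;\sum_{r\,:\;R\text{ is a leaf of }\Pi_r\text{ labelled }o} p_r\;\ge\;0 .
\]
Then I would check the three constraints. Normalization: for any fixed $(x,y)$ and any $r$, the pair $(x,y)$ lies in exactly one leaf of $\Pi_r$, so $\sum_{o}\sum_{(x,y)\in R} w_{o,R}=\sum_r p_r=1$. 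Correctness: for $(x,y)\in\supp(S)$, the quantity $\sum_{o:(x,y,o)\in S}\sum_{(x,y)\in R}w_{o,R}$ equals the probability that $\Pi$ outputs a correct answer on input $(x,y)$, which is at least $1-\varepsilon$. Objective: $\sum_{o,R}w_{o,R}=\sum_r p_r\cdot(\#\text{ leaves of }\Pi_r)\le\sum_r p_r\cdot 2^c=2^c$.

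This is essentially the entire argument: the statement is precisely the observation that the partition-bound linear program relaxes the description of a randomized protocol as a convex combination of labelled rectangle partitions, and the construction above makes that explicit. The only points needing even slight care are the reduction to finitely supported randomness and the bound $2^c$ on the number of leaves of a depth-at-most-$c$ protocol tree (which holds whether or not the tree is balanced). I do not expect any genuine obstacle here.
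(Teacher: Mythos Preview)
Your argument is correct and is the standard proof from the cited reference \cite{JK10}. Note that the paper itself does not prove this theorem at all: it is stated with a citation and used as a black box, so there is no ``paper's own proof'' to compare against. Your write-up fills in exactly the intended content of that citation.
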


The classical communication complexity lower bound for set disjointness works for this measure as well. The set disjointness relation $\disj_n\colon \{0,1\}^n \times \{0,1\}^n \to \{0,1\}$ is defined as $\disj_n(x,y) = \bigwedge_{i \in [n]} \lnot (x_i \land y_i)$. 

\begin{theorem}[\cite{Raz92, JK10}] \label{th:prt-disj-lb}
$\log \prt_\varepsilon(\disj_n) = \Omega(n)$.
\end{theorem}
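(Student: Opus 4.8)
The plan is to combine the linear program defining $\prt_\varepsilon$ with Razborov's classical corruption bound for set disjointness; equivalently one may invoke the Jain--Klauck observation \cite{JK10} that the partition bound dominates the corruption (rectangle) bound and then quote the $2^{\Omega(n)}$ corruption bound for $\disj_n$ from \cite{Raz92}. I would spell out the direct argument, proving $\prt_\varepsilon(\disj_n)\ge 2^{\Omega(n)}$ for a sufficiently small constant $\varepsilon>0$.

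First, recall the corruption bound as a black box: there is a probability distribution $\mu$ on $\{0,1\}^n\times\{0,1\}^n$, concentrated on pairs of $\Theta(n)$-size subsets that are either disjoint or intersect in exactly one element, together with constants $\kappa,\alpha,c>0$, such that $\mu(\disj_n^{-1}(0))\ge\kappa$ and every combinatorial rectangle $R$ satisfies
\[
\mu(R\cap\disj_n^{-1}(0))\ \le\ \alpha^{-1}\,\mu(R\cap\disj_n^{-1}(1))\ +\ 2^{-cn}.
\]
This is Razborov's rectangle corruption lemma \cite{Raz92}; I would take it as given rather than reprove it, since reproving it is the real combinatorial content and is well documented.

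Next, take any feasible solution $\{w_{o,R}\}$ of the LP defining $\prt_\varepsilon(\disj_n)$, write $V=\sum_{o,R}w_{o,R}$ for its value and $D_b=\disj_n^{-1}(b)$, and chase the weights through the rectangle cover. For $(x,y)\in D_0$ the only correct answer is $0$, so the LP's first constraint gives $\sum_{R\ni(x,y)}w_{0,R}\ge 1-\varepsilon$; for $(x,y)\in D_1$ the total-weight-$1$ constraint combined with the first constraint applied to the correct answer $1$ forces $\sum_{R\ni(x,y)}w_{0,R}\le\varepsilon$. Averaging the first inequality under $\mu$, swapping the order of summation, and applying the corruption bound gives
\[
(1-\varepsilon)\kappa\ \le\ \sum_R w_{0,R}\,\mu(R\cap D_0)\ \le\ \alpha^{-1}\sum_R w_{0,R}\,\mu(R\cap D_1)\ +\ 2^{-cn}\sum_R w_{0,R}.
\]
Averaging the $D_1$ inequality under $\mu$ bounds $\sum_R w_{0,R}\,\mu(R\cap D_1)\le\varepsilon$, and $\sum_R w_{0,R}\le V$, so $(1-\varepsilon)\kappa\le\alpha^{-1}\varepsilon+2^{-cn}V$. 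Fixing $\varepsilon>0$ small enough that $(1-\varepsilon)\kappa-\alpha^{-1}\varepsilon\ge\kappa/2$ yields $V\ge(\kappa/2)\,2^{cn}=2^{\Omega(n)}$, i.e.\ $\log\prt_\varepsilon(\disj_n)=\Omega(n)$ for this $\varepsilon$; larger constant error is handled by the usual amplification, and in any case only a small enough constant $\varepsilon$ is needed downstream.

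The main obstacle is entirely inside the black box: constructing the hard distribution $\mu$ and proving that corruption-free rectangles carry exponentially small $\mu$-mass. The LP-chasing step above is routine once that is in hand; the only subtlety there is to use \emph{both} the ``weight $\ge 1-\varepsilon$ on correct answers'' constraint and the ``weight $=1$ over all answers'' partition constraint --- the latter is exactly what upgrades a corruption-style estimate into a lower bound on the partition bound.
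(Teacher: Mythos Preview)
The paper does not give its own proof of this theorem; it is stated as a citation to \cite{Raz92,JK10} and used as a black box. Your proposal therefore goes beyond what the paper does by actually sketching the argument, and the sketch you give is essentially the standard one from \cite{JK10}: feasible solutions to the partition LP yield weighted rectangle covers whose value is lower-bounded by the corruption (smooth rectangle) bound, and Razborov's distribution \cite{Raz92} supplies the $2^{\Omega(n)}$ corruption bound for $\disj_n$. The LP-chasing step is correct as written.

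One small remark: the usual statement of Razborov's corruption lemma is in the direction ``large $1$-rectangles are corrupted by $0$-inputs'' (any rectangle carrying non-negligible $\mu$-mass on disjoint pairs must also carry a constant fraction of mass on intersecting pairs), i.e.\ $\mu(R\cap D_1)\le\alpha^{-1}\mu(R\cap D_0)+2^{-cn}$ and $\mu(D_1)\ge\kappa$, rather than with $D_0$ and $D_1$ swapped as you wrote. Your argument goes through verbatim after exchanging the roles of the labels $0$ and $1$ (work with $w_{1,R}$ instead of $w_{0,R}$), so this is purely cosmetic, but worth getting straight if you intend to cite the lemma in that explicit form.
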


We prove that {\hitp} size lower bounds can be deduced from lower bounds on $\prt_\varepsilon$.
Recall that {\hitp} can be considered not just a proof system for formulas in CNFs,
but as a proof system for sets of disjunctions of affine equations (sets of affine subspaces),
which we will still call \emph{clauses}.
We will use it in our proof (though the final bound will be for a formula in CNF).

\begin{lemma}
\label{lem:hitting-xor-reduction}
    Let $F = \bigwedge_{i \in [m]} C_i$ be a set of affine subspaces over $n$ variables having a {\hitp} refutation of size $s$. Let $X \sqcup Y = [n]$ be an arbitrary partition of variables of $F$ and let the clause-search relation associated with $F$ be \[S := \{(x,y,j) \in \{0,1\}^X \times \{0,1\}^Y \times [m] \mid C_j(x,y) = 0\}.\] Then for any constant $\varepsilon\in(0,1)$, $\prt_{\varepsilon} (S) = O(s^2)$ (the constant in $O$ may depend on $\varepsilon$).
\end{lemma}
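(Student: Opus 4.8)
Looking at this lemma, I need to show that a {\hitp} refutation of size $s$ gives a partition bound of $O(s^2)$ for the clause-search relation. Let me think about how the structure of a hitting$(\oplus)$ formula translates to a feasible solution of the partition bound LP.

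The plan is to build an explicit feasible solution of the linear program defining $\prt_\varepsilon(S)$ whose objective value is $O(s^2)$. First I would reduce to the case where the refuted formula is itself a hitting$(\oplus)$ formula. Let $H = \{D_1,\dots,D_M\}$ be the given {\hitp} refutation, with $M \le s$ clauses; since $H$ is hitting, the affine subspaces $T_\ell = \{z \mid D_\ell(z)=0\}$ partition $\{0,1\}^n$, and each $D_\ell$ has a strengthening $C'_\ell \in F$, so every $(x,y) \in T_\ell$ falsifies $C'_\ell$. Thus the answer relabelling $\ell \mapsto \mathrm{ind}(C'_\ell)$ turns the (single-valued) clause-search relation of $H$ into a sub-relation of $S$; by Remark~\ref{rem:clause-search} this aggregation of answers only decreases the partition bound, so it suffices to prove that $\prt_\varepsilon$ of the clause-search relation of $H$ is $O(M^2)$.

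Writing each $D_\ell$ as a disjunction of $r_\ell$ affine equations and splitting every equation's linear form into its part on the $X$-variables and its part on the $Y$-variables yields maps $\sigma^A_\ell\colon\{0,1\}^X\to\mathbb{F}_2^{r_\ell}$ and $\sigma^B_\ell\colon\{0,1\}^Y\to\mathbb{F}_2^{r_\ell}$ with $(x,y)\in T_\ell \iff \sigma^A_\ell(x)=\sigma^B_\ell(y)$; membership in $T_\ell$ is an equality test across the partition. Fix $k$ with $2^k \ge (1-\varepsilon)(M-1)/\varepsilon$, so $2^k = O(M/\varepsilon)$, and set $w = \bigl(1+(M-1)2^{-k}\bigr)^{-1} \ge 1-\varepsilon$. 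For every $\ell \in [M]$, every $\mathbb{F}_2$-linear map $h\colon\mathbb{F}_2^{r_\ell}\to\mathbb{F}_2^k$, and every $a\in\mathbb{F}_2^k$, put weight $w\cdot 2^{-r_\ell k}$ on the pair consisting of the answer $\ell$ and the rectangle $R_{\ell,h,a}=\{x : h(\sigma^A_\ell(x))=a\}\times\{y : h(\sigma^B_\ell(y))=a\}$ (accumulating weights when the same rectangle recurs).

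The verification rests on the disjointness of the $T_\ell$'s. Fix $(x,y)$ and let $\ell^*$ be the unique index with $(x,y)\in T_{\ell^*}$. From the $\ell^*$-family: for each $h$ there is exactly one $a$ (namely $h(\sigma^A_{\ell^*}(x))=h(\sigma^B_{\ell^*}(y))$) with $(x,y)\in R_{\ell^*,h,a}$, contributing total weight $w$ on the correct answer. From any other family $\ell$: the vector $\delta := \sigma^A_\ell(x)\oplus\sigma^B_\ell(y)$ is nonzero because $(x,y)\notin T_\ell$, and $(x,y)\in R_{\ell,h,a}$ forces $h(\delta)=0$; since $h\mapsto h(\delta)$ is uniformly distributed on $\mathbb{F}_2^k$ when $h$ ranges over all linear maps, exactly a $2^{-k}$ fraction of the $h$'s contribute, for total weight $w\cdot 2^{-k}$. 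Summing over all $M$ families, the total weight through $(x,y)$ is $w+(M-1)w2^{-k}=1$ — the partition constraint is met with equality, with no padding rectangles needed — while the correct answer carries weight $w\ge 1-\varepsilon$. The objective is $\sum_{\ell}2^{r_\ell k}\cdot 2^k\cdot w\cdot 2^{-r_\ell k}=M\cdot 2^k\cdot w=O(M^2/\varepsilon)=O(s^2)$.

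The crux is the observation that the hitting property forces the ``noise'' contributed by every \emph{wrong} index to be \emph{exactly} $2^{-k}$ at \emph{every} point, which is precisely what makes the weights sum to $1$ on the nose. A more naive construction — one that merely drives the spurious weight below $\varepsilon$ and then pads each input up to total weight $1$ — fails, since the padding rectangles would contribute total weight $\Theta(2^n)$; exploiting the exact uniformity of the collision probability is what avoids this.
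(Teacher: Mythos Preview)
Your proof is correct and follows essentially the same approach as the paper: both hash the $r_\ell$-bit equality test $\sigma^A_\ell(x)=\sigma^B_\ell(y)$ down to $k$ bits via a uniformly random $\mathbb{F}_2$-linear map (the paper's random matrix $V$), exploit that the collision probability is \emph{exactly} $2^{-k}$ on every wrong subspace, and normalize by $(1+(M-1)2^{-k})^{-1}$ to hit the partition constraint on the nose. Your reduction to the search problem of the hitting formula itself via answer-relabelling is a slight presentational difference (the paper carries the map $j\mapsto h(j)$ through the weights directly), but the substance is identical.
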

\begin{proof}
 
   Let $A_1, \dots, A_s$ be the affine subspaces forming the partition of $\{0,1\}^n$ corresponding to the {\hitp} refutation of $F$, i.e.\ for every $A_j$ there exists $C_{h(j)} \in F$ such that for every $x \in A_j$, $C_{h(j)}(x) = 0$. 

    Now let us define the values of $w_{i, R}$ for $i \in [m]$ and a rectangle $R \in \recttwo{\{0,1\}^X}{\{0,1\}^Y}$ corresponding to $\prt_\varepsilon(S)$. 
    
    For each $j \in [s]$ we define a part of the weights induced by $A_j$. Let $A_j = \{(x,y) \in \{0,1\}^X \times \{0,1\}^Y \mid M^X x + M^Y y = a\}$ where $a \in \{0,1\}^k$, $M^X \in \{0,1\}^{[k] \times X}$, $M^Y \in \{0,1\}^{[k] \times Y}$. Consider a uniformly distributed matrix $V \in \{0,1\}^{t \times k}$, then if $(x,y) \in A_j$,
       \[\Pr_V \left[ V M^X x + V M^Y y = V a \right] = 1,\]
    and if $(x,y) \not\in A_j$,
       \[\Pr_V \left[ V M^X x + V M^Y y = V a \right] = \frac{1}{2^t}.\]
    For matrix $V$ and $u,v \in \{0,1\}^t$, define the rectangle
$R^V_{u,v} := \{(x,y) \mid V M^X x = u, V M^Y y = v\}$. We can now define weights $w^{(j)}_{i, R}$ (recall that $h(j)$ is the index of the clause falsified in $A_j$):
       \[ w^{(j)}_{i, R} := 
       \begin{cases} 
        2^{-kt} & \text{if } i=h(j) \text{ and } R = R^V_{u,u + V a}  \text{ for some } u \in \{0,1\}^t\text{ and } V \in \{0,1\}^{k \times t}, \\ 
        0 & \text{otherwise}. 
       \end{cases} \]
(Note that the points of $R_{u,u+Va}$ are guaranteed to fulfil the condition $V M^X x + V M^Y y = V a$ above.)
    Then
      \[ \sum_{\substack{i \in [m]\\ R \ni (x,y)}} w^{(j)}_{i, R} = \sum_{R \ni (x,y)} w^{(j)}_{h(j), R} =  \E_V w^{(j)}_{\displaystyle h(j),R^V_{\scriptstyle VM^{\!X} x, VM^{\!Y} y}} = \begin{cases} 1 & \text{if } (x,y) \in A_j, \\ 2^{-t} & \text{otherwise}. \end{cases}\]
    
    Now fix $t := \lceil\log_2 (2s/\varepsilon) \rceil$, so $2^{-t} \le \varepsilon/2s$, and set
     $w_{i, R} := \frac{1}{1 + (s-1) 2^{-t}} \sum_{j \in [s]} w^{(j)}_{i, R}$. Then
     \[ \sum_{\substack{i \in [m]\\ R \ni (x,y)}} w_{i,R} = \frac{1}{1 +  (s-1) 2^{-t}} \left(\overbrace{\sum_{\substack{i \in [m]\\ R \ni (x,y)}} w^{(j(x,y))}_{i, R}}^{1} + \sum_{j \in [s] \setminus \{j(x,y)\}} \overbrace{\sum_{\substack{i \in [m]\\ R \ni (x,y)}}  w^{(j)}_{i,R}}^{2^{-t}} \right) = 1,\]
   where $j(x,y) \in [s]$ is such that $A_{j(x,y)} \ni (x,y)$. Let $i(x,y)$ be the index of some fixed clause falsified in $A_{j(x,y)}$. Let us check the second condition on the weights:
   \[ \sum_{\substack{(x, y, i) \in S\\ R \ni (x,y)}} w_{i,R} \ge \!\sum_{R \ni (x,y)} w_{i(x,y), R} \ge \frac{1}{1 +  (s-1) 2^{-t}}  \!\sum_{R \ni (x,y)} w^{(j(x,y))}_{i(x,y), R} = \frac{1}{1 +  (s-1) 2^{-t}} \ge \frac{1}{1+\varepsilon/2} \ge 1-\varepsilon.\]
   The objective function of the linear program is
   \[\prt_\varepsilon (S) = \sum_{\substack{i \in [m]\\ R \in \rect}} w_{i,R} \le \sum_{\substack{i \in [m],\ j \in [s]\\R \in \rect}} w^{(j)}_{i,R} \ \le\! \ s 2^t \le 4 s^2/\varepsilon.\]
   The second inequality holds as by definition of $w^{(j)}_{i,R}$, for every $j\in[s]$, it is non-negative in the $2^{t+kt}$ cases, and equals $2^{-kt}$ in each of them.
\end{proof}

\subsection{Lower bounds on \texorpdfstring{$\prt_\varepsilon$}{prt-eps}}\label{sec:linear-hitting-lb}

To get a lower bound on the size of {\hitp} refutations using Lemma~\ref{lem:hitting-xor-reduction}, we need to show a lower bound on $\prt_\varepsilon$ for clause-search relations. The idea is to utilize known reductions from set disjointness, $\disj_n$, that have been developed for randomized communication complexity.

The perfect matching principle $\PM_G$ is a formula in CNF stating that a given subset of $E(G)$ is a perfect matching. It has a variable $z_e$ for every edge $e\in E(G)$,
and for every $v\in V(G)$ it contains 
\begin{itemize}
\item the clauses $\bar{z}_f\lor\bar{z}_g$ for every pair of edges $f,g$ adjacent to $v$, and
\item the clause $\bigvee_{e\ni v}{z_e}$.
\end{itemize}

Define a set of clauses (with affine equations) 
$\PM^\oplus_G$ by replacing every variable $z_e$ in the formula $\PM_G$
by $x_e\oplus y_e$ (here $x_e$ and $y_e$ are variables of $\PM^\oplus_G$).

For a graph $G$, the relation 
 $\xorPM_G \subseteq \{0,1\}^{E(G)} \times \{0,1\}^{E(G)} \times V(G)$ 
for $\PM^\oplus_G$ is defined as
 $\xorPM_G := \{(E_1, E_2, v) \mid \text{degree of } v \text{ in } (V(G), E_1 \oplus E_2) \text{ is not } 1\}$. In other words, the $\xorPM_G$ problem asks to find a witness that the symmetric difference of the input sets is not a perfect matching. Strictly speaking, it does not ask for a specific clause; however, Remark~\ref{rem:clause-search} explains that aggregating clauses into the set of clauses related to a single vertex is enough.

In the following theorem, we construct a communication-complexity reduction of $\disj_n$ to $\xorPM_G$ for a specific graph $G$ by providing two functions $f_{\Alice}$, $f_{\Bob}$ for Alice and Bob and another function $g$ for recovering the result after $\xorPM_G$ finds a failing vertex in the graph $(V,f_{\Alice}(\ldots)\oplus f_{\Bob}(\ldots))$. We later use this reduction in order to prove bounds on $\prt$.

\begin{theorem}[{variation on \protect\cite[Theorem 16]{IR21}}]
\label{th:perfect-matching-reduction}
Let $G = K_{40n+1,40n+3}$. There exist a finite set $R$, functions $f_{\Alice}, f_{\Bob}\colon \{0,1\}^n \times R \to \{0,1\}^{E(G)}$ and a function $g\colon V(G) \times R \to \{0,1\}$ such that for every $a, b \in \{0,1\}^n$ and for every family of random variables $\{\mathbf{p}_{x,y}\}_{x,y \in \{0,1\}^{E(G)}}$ over $V(G)$,
\[ \Pr_{r \gets R} \left[g(\mathbf{p}_{f_{\Alice}(a,r), f_{\Bob}(b,r)},r) \neq \disj_n(a,b) \mid (f_{\Alice}(a, r), f_{\Bob}(b, r), \mathbf{p}_{f_{\Alice}(a,r), f_{\Bob}(b,r)}) \in \xorPM_G \right] \le \frac{1}{10},\]
where $r$ is uniformly distributed over $R$.
\end{theorem}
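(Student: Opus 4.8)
The plan is to re-package the communication reduction of~\cite{IR21} from $\disj_n$ to the perfect-matching search problem so that it certifies a lower bound not merely against randomized protocols but against the partition bound, where we have no control over which vertex of incorrect degree is returned and hence the decoder $g$ must behave correctly on \emph{every} valid answer simultaneously.

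First I would recall the shape of the \cite{IR21} reduction and state it in fingerprint language. It maps a pair $(a,b)$ to edge sets $E_1=f_\Alice(a,r)$, $E_2=f_\Bob(b,r)$ of a graph on $\Theta(n)$ vertices so that, for a typical value of the shared randomness $r$, $E_1\oplus E_2$ is a near-perfect matching of $K_{N,N+2}$ whose two uncovered vertices encode a ``fingerprint'' of the intersection set $I=\{i:a_i=b_i=1\}$: when $I=\emptyset$ this fingerprint is a fixed value (the uncovered vertices occupy a designated ``sink'' location), and when $I\neq\emptyset$ it is, with high probability over $r$, a different value (the uncovered vertices are relocated by an alternating-path gadget attached to the coordinates in $I$). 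The randomness $r$ over a finite set $R$ picks a random linear fingerprint together with a random embedding of the $n$ coordinate gadgets and of the sink into the vertex set of $G=K_{40n+1,40n+3}$; the enlarged graph supplies enough room for all $n$ coordinate gadgets, the sink, and ``filler'' edges covering the remaining vertices with degree $1$, and the factor $40$ (with the corresponding slack $1/10$ in the error) is what makes the fingerprint-collision estimates in the analysis go through. The decoder $g(v,r)$ uses $r$ to test whether $v$ sits in the sink location: if so it outputs $1$ (i.e.\ $\disj_n=1$), otherwise $0$. Since $\xorPM_G$ returns a vertex while a bad vertex corresponds to a whole group of clauses of $\PM^\oplus_G$, I would phrase the reduction directly in terms of $\xorPM_G$, which is legitimate by Remark~\ref{rem:clause-search}.

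The correctness split is as follows. If $\disj_n(a,b)=1$ then $I=\emptyset$, $E_1\oplus E_2$ is exactly the planted near-perfect matching whose only bad vertices are the sink vertices, and $g$ outputs $1$ regardless of $r$. If $\disj_n(a,b)=0$, I need that with probability at least $9/10$ over $r$ the only bad vertices of $E_1\oplus E_2$ lie outside the sink location, so that every valid answer decodes to $0$. This is a union bound combining (i) the deterministic fact that an active coordinate gadget clears the sink vertices, with (ii) the probabilistic fact that a random fingerprint and embedding almost never produces a collision (relocated uncovered vertices landing back in the sink) and almost never creates a parasitic bad vertex inside an inactive gadget or inside the filler. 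Conditioned on this high-probability event, whatever valid vertex is returned lies outside the sink and $g$ answers correctly; averaging over the internal randomness of the family $\{\mathbf{p}_{x,y}\}$ changes nothing.

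The main obstacle, and the point where the adaptation genuinely departs from~\cite{IR21}, is exactly this robustness step: because the partition bound forbids restricting the reduced instance to a ``nice'' promise family, the instance must be \emph{clean} --- all bad vertices consistent with the fingerprint --- with high probability over $r$ for \emph{every} input $(a,b)$, in particular for heavily non-disjoint inputs where a careless reduction would leave sink-type bad vertices that an adversary could return. Controlling the XOR-gadget (each player owns only half of the edges, so the local AND $a_i\wedge b_i$ cannot be read off locally and the information must be carried by the global matching structure) so that the fingerprint is reported faithfully by all defect vertices at once, while keeping the random embedding from spawning spurious defects, is the technical heart. Once this is in place, combining Lemma~\ref{lem:hitting-xor-reduction}, Theorem~\ref{th:perfect-matching-reduction} and the partition-bound lower bound $\log\prt_\varepsilon(\disj_n)=\Omega(n)$ of Theorem~\ref{th:prt-disj-lb} yields the exponential {\hitp} lower bound for $\PM^\oplus_{K_{n,n+2}}$.
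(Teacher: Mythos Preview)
Your high-level framing is right: the point is to make the reduction robust against an \emph{adversarial} choice of the returned bad vertex, rather than against a protocol that returns some specific witness. But the concrete mechanism you sketch is not the one in the paper, and as described it does not work.

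You propose that when $\disj_n(a,b)=0$ an ``alternating-path gadget'' attached to the active coordinates \emph{clears the sink}, so that with high probability over $r$ every bad vertex lies outside the sink location and $g$ decodes to $0$. In the actual construction the sink is \emph{never} cleared: the extra copy of $K_{1,3}$ planted at position $40n+1$ stays bad regardless of $(a,b)$. What an intersecting coordinate does is \emph{create additional} bad vertices that look exactly like the sink. Concretely, each coordinate is encoded by four gadget graphs $A(0),A(1),B(0),B(1)\subseteq K_{4,4}$ with $A(a)\oplus B(b)$ a perfect matching for $(a,b)\neq(1,1)$ and $A(1)\oplus B(1)=K_{1,3}\sqcup K_{3,1}$; moreover each coordinate is replicated $10$ times. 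So in the intersecting case $G_\oplus$ contains at least $11$ copies of $K_{1,3}$ on the left side: the sink and the $10$ replicas of the first intersecting coordinate.

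The correctness argument for $\disj_n(a,b)=0$ is then a \emph{symmetry/hiding} argument, not a union bound over collisions. The randomness is $R=S_{40n+1}\times S_{40n+3}\times\{0,1\}^{E(G)}$: two permutations of the sides and a random mask $H$. One exhibits explicit cycles $\sigma_1,\sigma_2$ (of length $11$) that permute the $10$ replica blocks and the sink block and leave $G_\oplus$ invariant. Because the mask $H$ makes the joint distribution of $(f_\Alice,f_\Bob)$ depend only on $G_\oplus$ and the permutations, replacing $(\pi_1,\pi_2)$ by $(\pi_1\sigma_1^{\mathbf t},\pi_2\sigma_2^{\mathbf t})$ for uniform $\mathbf t\in\{0,\dots,10\}$ does not change the distribution seen by the adversary, but it moves the ``true'' sink label $\pi_1(40n+1)$ uniformly over $11$ candidates. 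Hence the adversary hits it with probability at most $1/11<1/10$. The factor $40=4\cdot 10$ is there to host the $10$ replicas of each $K_{4,4}$ gadget, not to drive a fingerprint-collision estimate.

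In short: there is no fingerprint, no alternating path, and no clearing of the sink; the missing idea is ``replicate the intersecting gadget so the sink becomes one of many indistinguishable decoys, then use the random permutation plus a cyclic symmetry of $G_\oplus$ to argue the adversary cannot pick it out.''
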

\begin{figure}
\begin{center}
    \newcommand{\printlayout}{%
\node (L1) {};
\node[right of=L1] (L2) {};
\node[right of=L2] (L3) {};
\node[right of=L3] (L4) {};
\node[below of=L1] (R1) {};
\node[right of=R1] (R2) {};
\node[right of=R2] (R3) {};
\node[right of=R3] (R4) {};}
\begin{tabular}{rccc}
     & & $B(0)$ & $B(1)$ \\
     & - & \boxed{\scalebox{0.3}{\begin{tikzpicture}[node distance=20mm, auto, every node/.append style={circle, draw=none, fill=black}]
    \printlayout
    \draw (L1) -- (R3);
    \draw (L2) -- (R2);
    \draw (L3) -- (R1);
    \draw (L4) -- (R4);
    \end{tikzpicture}}} & 
    \boxed{\scalebox{0.3}{\begin{tikzpicture}[node distance=20mm, auto, every node/.append style={circle, draw=none, fill=black}]
    \printlayout
    \draw (L1) -- (R2);
    \draw (L1) -- (R3);
    \draw (L1) -- (R4);
    \draw (R1) -- (L2);
    \draw (R1) -- (L3);
    \draw (R1) -- (L4);    
    \end{tikzpicture}}}\\
    $A(0)$ & \boxed{\scalebox{0.3}{\begin{tikzpicture}[node distance=20mm, auto, every node/.append style={circle, draw=none, fill=black}]
    \printlayout
    \draw (L1) -- (R2);
    \draw (L1) -- (R3);
    \draw (L2) -- (R1);
    \draw (L2) -- (R2);
    \draw (L3) -- (R3);
    \draw (L3) -- (R1);
    \end{tikzpicture}}} & 
    \boxed{\scalebox{0.3}{\begin{tikzpicture}[node distance=20mm, auto, every node/.append style={circle, draw=none, fill=black}]
    \printlayout
    \draw (L1) -- (R2);
    \draw (L2) -- (R1);
    \draw (L3) -- (R3);
    \draw (L4) -- (R4);
    \end{tikzpicture}}}& 
    \boxed{\scalebox{0.3}{\begin{tikzpicture}[node distance=20mm, auto, every node/.append style={circle, draw=none, fill=black}]
    \printlayout
    \draw (L1) -- (R4);
    \draw (L2) -- (R2);
    \draw (L3) -- (R3);
    \draw (L4) -- (R1);
    \end{tikzpicture}}}\\
    $A(1)$ & \boxed{\scalebox{0.3}{\begin{tikzpicture}[node distance=20mm, auto, every node/.append style={circle, draw=none, fill=black}]
    \printlayout
    \end{tikzpicture}}} & \boxed{\scalebox{0.3}{\begin{tikzpicture}[node distance=20mm, auto, every node/.append style={circle, draw=none, fill=black}]
    \printlayout
    \draw (L1) -- (R3);
    \draw (L2) -- (R2);
    \draw (L3) -- (R1);
    \draw (L4) -- (R4);
    \end{tikzpicture}}} & 
    \boxed{\scalebox{0.3}{\begin{tikzpicture}[node distance=20mm, auto, every node/.append style={circle, draw=none, fill=black}]
    \printlayout
    \draw (L1) -- (R2);
    \draw (L1) -- (R3);
    \draw (L1) -- (R4);
    \draw (R1) -- (L2);
    \draw (R1) -- (L3);
    \draw (R1) -- (L4);    
    \end{tikzpicture}}} \\
\end{tabular}
    \caption{The graphs $A(0), A(1), B(0)$, and $B(1)$ and their pairwise symmetric differences. Only $A(1) \oplus B(1)$ is not a matching.}
    \label{fig:xortable}
\end{center}
\end{figure}
\begin{proof}
Itsykson and Riazanov~\cite{IR21} described four graphs $A(0),A(1),B(0),B(1)$, all of them subgraphs of $K_{4,4}$, with the following two properties (see Fig.~\ref{fig:xortable}): 
\begin{itemize}
\item $A(a) \oplus B(b)$ is a perfect matching consisting of four edges for all $(a,b) \neq (1,1)$.
\item $A(1) \oplus B(1)$ is the disjoint sum of $K_{1,3}$ and $K_{3,1}$, where in both cases the vertex numbered $1$ on one side is connected to the vertices numbered $2,3,4$ on the other side.
\end{itemize}

Using this gadget, we define $R,f_{\Alice},f_{\Bob},g$:
\begin{itemize}
\item $R = S_{40n+1} \times S_{40n+3} \times \{0,1\}^{E(G)}$, where $S_m$ is the symmetric group on $m$ elements. In what follows, we denote an element of $R$ by $(\pi_1,\pi_2,H)$.

\item $f_{\Alice}(a,\pi_1,\pi_2,H)$: start with a graph $G_A$ that is a disjoint union of $10$ copies of $A(a_1),\dotsc,A(a_n)$ (total of $10n$ subgraphs) and a copy of $K_{1,3}$ with its vertex on the left numbered $40n+1$. 
Compute the symmetric difference with $H$, then
permute the vertices on the left using $\pi_1$ and the vertices on the right using $\pi_2$.

\item $f_{\Bob}(b,\pi_1,\pi_2,H)$: start with a graph $G_B$ that is a disjoint union of $10$ copies of  $B(b_1),\dotsc,B(b_n)$ and a copy of the complement of $K_{1,3}$ (that is, the empty graph $1\times 3$).
As for $f_{\Alice}$, we compute the symmetric difference with $H$ and then apply $\pi_1,\pi_2$.

\item $g(v,\pi_1,\pi_2,H)$: return $1$ if $v$ is vertex $\pi_1(40n+1)$ on the left. 

\end{itemize}

If $\disj_n(a,b) = 1$ then the graph $f_{\Alice}(a,r) \oplus f_{\Bob}(b,r)$ consists of a matching of $40n$ edges together with a copy of $K_{1,3}$, after shuffling both sides according to $r$. The only vertex of degree other than~$1$ is $\pi_1(40n+1)$, and so if $\mathbf{p}$ returns a vertex in $\xorPM$, it must return $\pi_1(40n+1)$. Consequently, $g$ always returns $1$. In this case, the probability in the statement of the theorem is zero.

Now suppose that $\disj_n(a,b) = 0$, say $a_1 = b_1 = 1$. Let $G_{\oplus} = G_A \oplus G_B$, where $G_A,G_B$ are the graphs in the definitions of $f_{\Alice},f_{\Bob}$, and define 
the permutation that touches only vertices in the ten copies of $A(1)\oplus B(1)$ and the vertices of the final $K_{1,3}$.
\begin{align*}
\sigma_1 &= (1 \; 5 \; 9 \; \cdots \; 37 \; 40n+1), \\
\sigma_2 &= (2 \; 6 \; 10 \; \cdots \; 38 \; 40n+1) \, (3 \; 7 \; 11 \; \cdots \; 39 \; 40n+2) \,
 (4 \; 8 \; 12 \; \cdots \; 40 \; 40n+3).
\end{align*}
Thus $G_\oplus$ is invariant under permuting vertices on the left using $\sigma_1$ and those on the right using $\sigma_2$ (simultaneously).

For fixed $\pi_1,\pi_2$, the distribution of $(f_{\Alice}(a,r),f_{\Bob}(b,r))$ (whose randomness comes only from $H$) can be described as follows: sample a pair $(H_A,H_B)$ whose symmetric difference is $G_\oplus$, and then permute the result according to $\pi_1,\pi_2$, an operation we denote by superscripting the pair. Therefore
\begin{gather*}
\Pr_{r \gets R} \left[g(\mathbf{p}_{f_{\Alice}(a,r), f_{\Bob}(b,r)},r) \neq \disj_n(a,b) \mid (f_{\Alice}(a, r), f_{\Bob}(b, r), \mathbf{p}_{f_{\Alice}(a,r), f_{\Bob}(b,r)}) \in \xorPM_G \right] = \\
\Pr_{\substack{\pi_1,\pi_2 \\ H_A \oplus H_B = G_\oplus}}
\left[ \mathbf{p}_{H_A^{\pi_1,\pi_2},H_B^{\pi_1,\pi_2}} = \pi_1(40n+1) \mid (H_A^{\pi_1,\pi_2},H_B^{\pi_1,\pi_2},\mathbf{p}_{H_A^{\pi_1,\pi_2},H_B^{\pi_1,\pi_2}}) \in \xorPM_G \right].
\end{gather*}
Denote $S(\pi_1, \pi_2) := \bigl(H_A^{\pi_1,\pi_2},H_B^{\pi_1,\pi_2},\mathbf{p}_{H_A^{\pi_1,\pi_2},H_B^{\pi_1,\pi_2}}\bigr)$.
Let $\mathbf{t}$ be chosen uniformly at random from $\{0,\dotsc,10\}$. 
Since $(\pi_1,\pi_2)$ and $(\pi_1\sigma_1^{\mathbf{t}},\pi_2\sigma_2^{\mathbf{t}})$ have the same distribution, the probability above is equal to
\[
\begin{aligned}
 \Pr_{\substack{\pi_1,\pi_2 \\ H_A \oplus H_B = G_\oplus}}
&\left[ \mathbf{p}_{H_A^{\pi_1 \sigma_1^{\mathbf{t}},\pi_2 \sigma_2^{\mathbf{t}}},H_B^{\pi_1 \sigma_1^{\mathbf{t}},\pi_2 \sigma_2^{\mathbf{t}}}} = \pi_1 \bigl(\sigma_1^{\mathbf{t}}(40n+1))\ \middle|\ S(\pi_1 \sigma_1^{\mathbf{t}},\pi_2 \sigma_2^{\mathbf{t}}) \in \xorPM_G \right] = \\
 \Pr_{\substack{\pi_1,\pi_2 \\ H_A \oplus H_B = G_\oplus^{\sigma_1^{\mathbf{t}},\sigma_2^{\mathbf{t}}}}}
&\left[ \mathbf{p}_{H_A^{\pi_1 ,\pi_2 },H_B^{\pi_1,\pi_2}} = \pi_1(\sigma_1^{\mathbf{t}}(40n+1))\ \middle|\ S(\pi_1,\pi_2) \in \xorPM_G \right] = \\
\Pr_{\substack{\pi_1,\pi_2 \\ H_A \oplus H_B = G_\oplus}}
&\left[ \mathbf{p}_{H_A^{\pi_1 ,\pi_2 },H_B^{\pi_1,\pi_2}} = \pi_1(\sigma_1^{\mathbf{t}}(40n+1))\ \middle|\ S(\pi_1,\pi_2) \in \xorPM_G \right] \leq \frac{1}{11},
\end{aligned}
\]
since $\sigma_1^{\mathbf{t}}(40n+1)$ is uniformly distributed over $\{1, 5, 9, \dotsc, 37, 40n+1\}$.
\end{proof}

In the next theorem we use the reduction from Theorem~\ref{th:perfect-matching-reduction}
in order to transform $\prt$ bounds for $\xorPM_G$ into bounds for the disjointness.
\begin{theorem} \label{th:PM-to-disj}
For the graph $G$ from Theorem~\ref{th:perfect-matching-reduction} and small enough constant $\varepsilon$, \[\prt_{1/3}(\disj_n) \le \prt_\varepsilon(\xorPM_G). \]
\end{theorem}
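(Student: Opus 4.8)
The plan is to transport a feasible solution of the linear program defining $\prt_\varepsilon(\xorPM_G)$ through the reduction $(f_{\Alice},f_{\Bob},g)$ of Theorem~\ref{th:perfect-matching-reduction} to obtain a feasible solution of the program defining $\prt_{1/3}(\disj_n)$ of the same objective value. Two structural facts will keep the bookkeeping clean: $\disj_n$ is a total function, so $\supp(\disj_n)=\{0,1\}^n\times\{0,1\}^n$ and each $(a,b)$ has a unique correct answer $\disj_n(a,b)$; and $G=K_{40n+1,40n+3}$ has no perfect matching since its two sides have different sizes, so $\supp(\xorPM_G)$ is all of $\{0,1\}^{E(G)}\times\{0,1\}^{E(G)}$ and the coverage constraint of the $\xorPM_G$ program is available at \emph{every} input pair.

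First I would fix an arbitrary feasible solution $\{w_{o,R}\}$ of the $\prt_\varepsilon(\xorPM_G)$ program with objective value $W$ (where $o\in V(G)$, $R\in\rect$); it then suffices to show $\prt_{1/3}(\disj_n)\le W$ and take the infimum over $W$. Denote by $\mathcal{R}$ the finite randomness set of Theorem~\ref{th:perfect-matching-reduction}. For each $r\in\mathcal{R}$ the maps $a\mapsto f_{\Alice}(a,r)$ and $b\mapsto f_{\Bob}(b,r)$ are single-argument functions, hence a rectangle $A\times B$ over $\{0,1\}^{E(G)}$ pulls back to the rectangle $f_{\Alice}(\cdot,r)^{-1}(A)\times f_{\Bob}(\cdot,r)^{-1}(B)$ over $\{0,1\}^n$, and a vertex $v\in V(G)$ maps to the bit $g(v,r)$. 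I would then define
\[
 w'_{o',R'}=\E_{r}\sum_{\substack{(v,A\times B)\,:\,g(v,r)=o',\\ f_{\Alice}(\cdot,r)^{-1}(A)\times f_{\Bob}(\cdot,r)^{-1}(B)=R'}} w_{v,A\times B},
\]
where $\E_r$ is the uniform average over $\mathcal{R}$. Nonnegativity is immediate, and the objective is $\sum_{o',R'}w'_{o',R'}=\E_r\sum_{v,R}w_{v,R}=W$. For the equality constraint, fix $(a,b)$ and write $x=f_{\Alice}(a,r)$, $y=f_{\Bob}(b,r)$; since a pulled-back rectangle contains $(a,b)$ exactly when the original contains $(x,y)$, summing $w'_{o',R'}$ over all $o'$ and all $R'\ni(a,b)$ collapses, for each $r$, to $\sum_v\sum_{R\ni(x,y)}w_{v,R}=1$ by the equality constraint of the $\xorPM_G$ program at $(x,y)$; averaging over $r$ gives $1$.

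The coverage constraint is the step that uses Theorem~\ref{th:perfect-matching-reduction}, and the place to be careful. Fix $(a,b)$; for every $(x,y)$ put $q^{(x,y)}_v=\sum_{R\ni(x,y)}w_{v,R}$, which is a probability distribution on $V(G)$ by the equality constraint, and let $\mathbf{p}_{x,y}$ be drawn from it (independently for distinct $(x,y)$). Because $G$ has no perfect matching, the coverage constraint of the $\xorPM_G$ program holds at every $(x,y)$ and says $\sum_{v\,:\,\deg_{x\oplus y}(v)\neq 1}q^{(x,y)}_v\ge 1-\varepsilon$, i.e.\ $\Pr[(x,y,\mathbf{p}_{x,y})\notin\xorPM_G]\le\varepsilon$. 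A short computation shows, with $x=f_{\Alice}(a,r)$, $y=f_{\Bob}(b,r)$, that $\sum_{R'\ni(a,b)}w'_{\disj_n(a,b),R'}=\Pr_{r,\mathbf{p}}[g(\mathbf{p}_{x,y},r)=\disj_n(a,b)]$. Letting $\mathcal{E}$ be the event $(x,y,\mathbf{p}_{x,y})\in\xorPM_G$, Theorem~\ref{th:perfect-matching-reduction} gives $\Pr[g\neq\disj_n(a,b)\mid\mathcal{E}]\le\frac1{10}$, hence $\Pr[g\neq\disj_n(a,b)\text{ and }\mathcal{E}]\le\frac1{10}$, while the previous line gives $\Pr[\neg\mathcal{E}]\le\varepsilon$, so
\[
 \Pr_{r,\mathbf{p}}[g(\mathbf{p}_{x,y},r)=\disj_n(a,b)]\ \ge\ \Pr[\mathcal{E}]-\Pr[g\neq\disj_n(a,b)\text{ and }\mathcal{E}]\ \ge\ (1-\varepsilon)-\tfrac1{10}.
\]
Taking $\varepsilon$ a small enough constant (for instance $\varepsilon\le\tfrac15$) makes this at least $\tfrac7{10}>\tfrac23=1-\tfrac13$, so the coverage constraint of the $\prt_{1/3}(\disj_n)$ program is satisfied. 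Hence $\{w'_{o',R'}\}$ is feasible with objective $W$, giving $\prt_{1/3}(\disj_n)\le W$, and taking the infimum over $W$ yields $\prt_{1/3}(\disj_n)\le\prt_\varepsilon(\xorPM_G)$.

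The main obstacle is exactly this coverage step: Theorem~\ref{th:perfect-matching-reduction} only bounds the error of $g$ applied to $\mathbf{p}$ \emph{conditioned} on $\mathbf{p}$ landing on a genuine witness vertex (inside $\xorPM_G$), so one must separately ensure that $\mathbf{p}$ seldom lands outside $\xorPM_G$. What makes this essentially free is that ``$\mathbf{p}_{x,y}$ is a valid witness with probability at least $1-\varepsilon$'' is precisely the coverage guarantee already present in the $\prt_\varepsilon(\xorPM_G)$ solution — and it is available at \emph{every} pair $(x,y)$ only because $K_{40n+1,40n+3}$ has unequal sides and hence no perfect matching. Once this is observed, the rest is routine verification of the two linear-program constraints above.
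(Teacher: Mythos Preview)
Your proposal is correct and follows essentially the same route as the paper: pull back the partition-bound weights through the reduction $(f_{\Alice},f_{\Bob},g)$ averaged over $r$, verify the equality constraint pointwise, and for the coverage constraint combine the $\xorPM_G$ coverage guarantee (giving $\Pr[\mathcal E]\ge 1-\varepsilon$) with Theorem~\ref{th:perfect-matching-reduction} (giving $\Pr[g\neq\alpha\text{ and }\mathcal E]\le\tfrac1{10}$) to obtain $\ge 1-\varepsilon-\tfrac1{10}>\tfrac23$. You are in fact a bit more explicit than the paper in noting that the $\xorPM_G$ coverage constraint is available at \emph{every} $(x,y)$ precisely because $K_{40n+1,40n+3}$ has unequal sides and hence no perfect matching.
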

\begin{proof}
Let $w_{v,R}$ for $v \in V(G)$ and $R\in \rectone{\{0,1\}^{E(G)}}$ be optimal weights in the $\varepsilon$-partition bound for $\xorPM_G$. Let $f_{\Alice}, f_{\Bob}, g$ be the reduction functions from Theorem~\ref{th:perfect-matching-reduction}. Let $f_{\Alice}^r(x) := f_{\Alice}(x, r)$ and $f_{\Bob}^r(x) := f_{\Bob}(x,r)$. We claim that the following weights $w'_{\alpha, R}$ for $\alpha \in \{0,1\}$ and $R \in \rectone{\{0,1\}^n}$ yield the upper bound on $\prt_{1/3}(\disj_n)$:
\[ w'_{\alpha, X \times Y} := \E_r  \sum_{\substack{A, B\colon\\ \bigcup (f_{\Alice}^r)^{-1} (A) = X\\ \bigcup (f_{\Bob}^r)^{-1} (B) = Y}} \sum_{\substack{v\colon\\ g(v,r)=\alpha}} w_{v,A\times B}.\]

Let us verify the properties of $w'$. First, we check that the sum of weights of rectangles covering a point is exactly 1:
\[ \sum_{\substack{\alpha \in \{0,1\}\\ X \times Y \ni (x,y)}} w'_{\alpha, X \times Y} = \E_r\sum_{v \in V(G)} \sum_{\substack{A,B\colon \\A \ni f_{\Alice}^r(x)\\[+1pt]B\ni f_{\Bob}^r(y)}} w_{v,A \times B} = 1.\]
Now we check that the sum of weights covering a point $(x,y)$ and labeled with the correct answer $\alpha = \disj_n(x,y)$ is at least $2/3$:
\begin{multline*}
    \sum_{\substack{ X \times Y \ni (x, y)}} w'_{\alpha, X \times Y} = \E_s \sum_{\substack{A \ni f_{\Alice}^r(x)\\[+1pt]B\ni f_{\Bob}^r(y) \\ v\colon g(v, r) = \alpha}} w_{v, A \times B}   
    \ge \E_s \sum_{\substack{A \ni f_{\Alice}^r(x),\ B\ni f_{\Bob}^r(y) \\ v\colon g(v, r) = \alpha \\ (f_{\Alice}^r(x), f_{\Bob}^r(y), v) \in \xorPM_G}} w_{v, A \times B}   \\
    = \E_s \sum_{\substack{A \ni f_{\Alice}^r(x),\ B\ni f_{\Bob}^r(y) \\ (f_{\Alice}^r(x), f_{\Bob}^r(y), v) \in \xorPM_G}} w_{v, A \times B}   - \E_s \sum_{\substack{A \ni f_{\Alice}^r(x),\ B\ni f_{\Bob}^r(y) \\ v\colon g(v, r) \neq \alpha \\ (f_{\Alice}^r(x), f_{\Bob}^r(y), v) \in \xorPM_G}} w_{v, A \times B}   
\\\ge
1 - \varepsilon - \underbrace{\E_s \sum_{\substack{v\colon g(v, r) \neq \alpha \\ (f_{\Alice}^r(x), f_{\Bob}^r(y),v) \in \xorPM_G}} \sum_{\substack{A \ni f_{\Alice}^r(x)\\[+1pt]B\ni f_{\Bob}^r(y)}} w_{v, A \times B}}_{r}.
\end{multline*}
Now let us define a family of random variables $\mathbf{p}_{a,b}$ for $a,b\in \{0,1\}^{E(G)}$ over $V(G)$ such that $\Pr[\mathbf{p}_{a,b} = v] = \sum_{A \times B \ni (a,b)} w_{v,A\times B}$ and rewrite
\begin{multline*}    
\E_r \sum_{\substack{v\colon g(v, r) \neq \alpha \\ (f_{\Alice}^r(x), f_{\Bob}^r(y),v) \in \xorPM_G}} \Pr[\mathbf{p}_{f_{\Alice}^r(x), f_{\Bob}^r(y)}=v] \\=\Pr_r\left[g(\mathbf{p}_{f_{\Alice}^r(x), f_{\Bob}^r(y)}, r) \neq \alpha \land (x,y,\mathbf{p}_{f_{\Alice}^r(x), f_{\Bob}^r(y)}) \in \xorPM_G\right] 
    \le\frac{1}{10}.
\end{multline*}
The last inequality holds by Theorem~\ref{th:perfect-matching-reduction}, so for $\varepsilon < \frac{9}{10} - \frac{2}{3}$ the condition $\sum\limits_{\substack{ X \times Y \ni (x, y)}} w'_{\alpha, X \times Y} \ge \frac23 $ holds. Now let us compute the sum of all $w'$:
\[ \sum_{\alpha \in \{0,1\}; X, Y \subseteq \{0,1\}^{n}} w'_{\alpha, X \times Y} = \E_s  \sum_{X, Y \subseteq \{0,1\}^n; v \in V(G)} w_{v, f_{\Alice}(A,r) \times f_{\Bob}(B,r)}   \le \prt_\varepsilon (\xorPM_G). \qedhere\]

\end{proof}

\subsection{A lower bound on the size of \texorpdfstring{{\hitp}}{Hitting(xor)} refutations}
We can now derive the lower bound. 

\begin{theorem}\label{th:hitting-plus-lb-pm}
Any {\hitp} refutation of $\PM_G$ for the graph $G$ from Theorem~\ref{th:perfect-matching-reduction} contains $2^{\Omega(n)}$ many subspaces.
\end{theorem}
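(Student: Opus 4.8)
The plan is to chain together the three ingredients already developed: the communication simulation of {\hitp} by the partition bound (Lemma~\ref{lem:hitting-xor-reduction}), the communication reduction from set disjointness to $\xorPM_G$ (Theorems~\ref{th:perfect-matching-reduction} and~\ref{th:PM-to-disj}), and the $\prt_\varepsilon$ lower bound for $\disj_n$ (Theorem~\ref{th:prt-disj-lb}). The one preliminary step is to pass from $\PM_G$ to its xorified version $\PM^\oplus_G$, which is what makes a variable partition between the two players available.

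First I would argue that a {\hitp} refutation of $\PM_G$ of size $s$ yields a {\hitp} refutation of $\PM^\oplus_G$ of size $s$. Indeed, the map $\phi\colon\{0,1\}^{E(G)}\times\{0,1\}^{E(G)}\to\{0,1\}^{E(G)}$ with $\phi(x,y)_e = x_e\oplus y_e$ is a surjective linear map, so if $A_1,\dots,A_s$ is the affine-subspace partition of the $z$-space given by a refutation of $\PM_G$, then $\phi^{-1}(A_1),\dots,\phi^{-1}(A_s)$ is an affine-subspace partition of the $(x,y)$-space. Since the clauses of $\PM^\oplus_G$ are precisely the clauses of $\PM_G$ with each $z_e$ replaced by $x_e\oplus y_e$, and the falsifying set of such a substituted clause is exactly the $\phi$-preimage of the falsifying set of the original clause, each $\phi^{-1}(A_i)$ is still contained in the falsifying set of some clause of $\PM^\oplus_G$. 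Hence it suffices to lower bound the size of {\hitp} refutations of $\PM^\oplus_G$.

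Then, given a size-$s$ {\hitp} refutation of $\PM^\oplus_G$, I would give Alice the variables $\{x_e\}_{e\in E(G)}$ and Bob the variables $\{y_e\}_{e\in E(G)}$, fix the small constant $\varepsilon$ demanded by Theorem~\ref{th:PM-to-disj}, and invoke Lemma~\ref{lem:hitting-xor-reduction} to get $\prt_\varepsilon(S)=O(s^2)$ for the clause-search relation $S$ of $\PM^\oplus_G$ under this split. Aggregating clause indices into one class per vertex $v$ — all clauses attached to $v$ being simultaneously falsifiable exactly when $v$ has degree $\ne 1$ in the edge set $x\oplus y$, which is exactly the relation $\xorPM_G$ — and applying Remark~\ref{rem:clause-search}, this gives $\prt_\varepsilon(\xorPM_G)\le\prt_\varepsilon(S)=O(s^2)$. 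Theorem~\ref{th:PM-to-disj} then yields $\prt_{1/3}(\disj_n)\le\prt_\varepsilon(\xorPM_G)=O(s^2)$, and combining this with $\log\prt_{1/3}(\disj_n)=\Omega(n)$ from Theorem~\ref{th:prt-disj-lb} forces $s=2^{\Omega(n)}$.

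Since Lemma~\ref{lem:hitting-xor-reduction} together with Theorems~\ref{th:perfect-matching-reduction} and~\ref{th:PM-to-disj} carry essentially all the weight, there is no real obstacle left in this last step; the only things to check are (i) that an affine substitution turns a {\hitp} refutation of $\PM_G$ into one of $\PM^\oplus_G$ with no size blow-up, so the lower bound transfers down, and (ii) that a single constant $\varepsilon$ can be threaded through Lemma~\ref{lem:hitting-xor-reduction} and Theorem~\ref{th:PM-to-disj} — immediate, since the lemma is stated for every constant $\varepsilon\in(0,1)$.
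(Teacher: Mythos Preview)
Your proposal is correct and follows essentially the same route as the paper: substitute $z_e\mapsto x_e\oplus y_e$ to pass from $\PM_G$ to $\PM^\oplus_G$ without increasing the number of subspaces, apply Lemma~\ref{lem:hitting-xor-reduction} under the Alice/Bob split $\{x_e\}/\{y_e\}$, aggregate clauses per vertex via Remark~\ref{rem:clause-search} to land in $\xorPM_G$, and finish with Theorems~\ref{th:PM-to-disj} and~\ref{th:prt-disj-lb}. Your phrasing of the substitution via preimages under the surjective linear map $\phi$ is a clean way to see why the affine partition is preserved.
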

\begin{proof}
Let $\varepsilon$ be the constant from Theorem~\ref{th:PM-to-disj}. If there is a {\hitp}
refutation of $\PM_G(\bar{z})$ with $s$ subspaces, then we can convert it to 
a {\hitp} refutation of $\forxorPM_G(\bar{x},\bar{y})$ (viewed
as a set of subspaces) with the same number of subspaces
simply by replacing every variable $z_e$ with $x_e\oplus y_e$
in the hitting$(\oplus)$ formula.
Indeed, if a subspace $M\bar{z}=\bar{a}$ implies that 
a clause related to a vertex $v$ is false, 
then the subspace $M(\bar{x}+\bar{y})=\bar{a}$ 
implies that the degree of $v$ in $(V(G),E_1 \oplus E_2)$ is not~$1$
(that is, one of the clauses of $\forxorPM_G$ related to $v$ is false).
Lemma~\ref{lem:hitting-xor-reduction} (note also Remark~\ref{rem:clause-search})
then implies that 
$\prt_\varepsilon(\xorPM_G) = O(s^2)$, and so 
$\prt_{1/3}(\disj_n) = O(s^2)$ according to Theorem~\ref{th:PM-to-disj}. 
Theorem~\ref{th:prt-disj-lb} implies that $\log s = \Omega(n)$.
\end{proof}

\begin{remark}
Note that the $\PM_G$ formulas for $K_{i,j}$ for $i\neq j$
have polynomial-size {\cp} proofs: it can be easily derived from the 2-clauses
that the number of edges around a vertex is at most 1; then take the sum
of such inequalities around all vertices in the smaller part, and
take the sum of the other input inequalities in the larger part.
\end{remark}


\section*{Acknowledgements}
We thank Jan Johannsen, Ilario Bonacina, Oliver Kullmann, and Stefan Szeider for introducing us to the topic; Zachary Chase, Susanna de Rezende, Mika Göös, Amir Shpilka, and Dmitry Sokolov for helpful discussions. We also thank anonymous reviewers for their comments that helped us to improve the presentation.

This project has received funding from the European Union's Horizon 2020 research and innovation programme under grant agreement No~802020-ERC-HARMONIC.

\bibliographystyle{alphaurl}
\bibliography{hitting-proof}

\end{document}